\crefname{hypothesis}{Hypothesis}{Hypotheses}
\let\UP\undefinedcommand
\DeclareMathOperator{\UP}{UP}
\newcommand{\tisp}{\ensuremath{t\textsc{-Isp}}}
\newcommand{\tispac}{\ensuremath{t\textsc{-Isp-Ac}}}
\newcommand{\tispot}{\ensuremath{t\textsc{-Isp-Ot}}}
\newcommand{\tispacot}{\ensuremath{t\textsc{-Isp-Ac-Ot}}}
\newcommand{\pisp}{\ensuremath{2\textsc{-Isp}}}
\newcommand{\old}[1]{{}}
\begin{document}

\newcommand\relatedversion{}
\renewcommand\relatedversion{\thanks{This is the full version of a paper of the same title that is to appear at ALENEX 2026.}} 

\title{\Large Efficient Heuristics and Exact Methods for Pairwise Interaction Sampling\relatedversion}
  \author{Sándor P. Fekete\thanks{TU Braunschweig, Algorithms Group (\email{s.fekete@tu-bs.de}, \email{keldenich@ibr.cs.tu-bs.de}, \email{krupke@ibr.cs.tu-bs.de}, \email{perk@ibr.cs.tu-bs.de}).}
  \and Phillip Keldenich\footnotemark[2]
  \and Dominik Krupke\footnotemark[2]
  \and Michael Perk\footnotemark[2]}
\date{}

\maketitle







\newcommand{\symfeaturemodel}[0]{\ensuremath{\varphi}}
\newcommand{\symsimplified}[0]{\ensuremath{\psi}}

\newcommand{\symqueue}[0]{\ensuremath{\mathcal{Q}}}

\newcommand{\symworkingsample}[0]{\ensuremath{\mathcal{S}}}

\newcommand{\symsample}[0]{\ensuremath{\mathcal{S}}}

\newcommand{\symfeasibleinteractions}[0]{\ensuremath{\mathcal{I}}}

\newcommand{\syminteraction}[0]{\ensuremath{I}}

\newcommand{\symtrail}[0]{\ensuremath{T}}

\newcommand{\sympartialconfig}[0]{\ensuremath{Q}}

\newcommand{\symconfiguration}[0]{\ensuremath{C}}

\newcommand{\symtargetqueuesize}[0]{\ensuremath{k}}

\newcommand{\symfeatureset}[0]{\ensuremath{\mathcal{F}}}
\newcommand{\symconcretefeatureset}[0]{\ensuremath{\mathcal{C}}}

\newcommand{\symdestroyparameters}[0]{\ensuremath{P_d}}
\newcommand{\symdestruction}[0]{\ensuremath{\mathcal{R}}}

\newcommand{\symrepairparameters}[0]{\ensuremath{P_r}}

\newcommand{\symtargetrepairsize}[0]{\ensuremath{s}}

\newcommand{\symtargetsamplesize}[0]{\ensuremath{s}}

\newcommand{\symsatoracle}[0]{\ensuremath{\mathcal{A}}}

\newcommand{\sympispinstance}[0]{\ensuremath{\mathcal{B}}}

\newcommand{\symredextraconcrete}[0]{\ensuremath{\xi}}

\newcommand{\symredformula}[0]{\ensuremath{\vartheta}}

\newcommand{\symconccomponent}[0]{\ensuremath{K}}

\newcommand{\symclause}[0]{\ensuremath{\gamma}}

\newcommand{\symnumthreads}[0]{\ensuremath{\rho}}

\newcommand{\symindset}[0]{\ensuremath{D}}
\newcommand{\symindsetfamily}[0]{\ensuremath{\mathcal{D}}}

\newcommand{\symmaxexclusiveset}[0]{\ensuremath{\mathcal{U}}}

\newcommand{\symsat}[0]{\textsc{Sat}}

\newcommand{\symunsat}[0]{\textsc{Unsat}}

\begin{abstract} 
  We consider a class of optimization problems that are fundamental to testing in
modern configurable software systems, e.g., in automotive industries. In
\emph{pairwise interaction sampling}, we are given a (potentially very large)
configuration space, in which each dimension corresponds to a possible Boolean
feature of a software system; valid configurations are the satisfying
assignments of a given propositional formula $\symfeaturemodel{}$. The objective is to
find a minimum-sized family of configurations, such that each pair of features is
jointly tested at least once. Due to its relevance in Software Engineering,
this problem has been studied extensively for over 20 years.

In addition to (1) new theoretical insights (we prove \BH-hardness), we provide 
a broad spectrum of key contributions on the practical side that allow substantial
progress for the practical performance.  These include the following. (2) We devise and engineer an initial
solution algorithm that can find solutions and correctly identify the set of
valid interactions in reasonable time, even for very large instances with
hundreds of millions of valid interactions, which previous approaches could not solve in such time.
(3) We present an enhanced approach for computing lower bounds.
(4) We present an ex\-act algorithm to find optimal solutions based on an interaction
selection heuristic driving an incremental SAT solver
that works on small and medium-sized instances.
(5) For larger instances, we present a meta\-heuristic solver based
on large neighborhood search (LNS) that solves most of the
instances in a diverse benchmark library of instances 
to provable optimality within an hour on commodity hardware.
(6) Remarkably, we are able to solve the largest instances we found in 
published benchmark sets (with about \num{500000000} feasible interactions)
to provable optimality.
Previous approaches were not even able to compute feasible solutions.

\end{abstract}

\section{Introduction.}
Many modern software systems are configurable, sometimes with thousands of options that are often interdependent~\cite{HZS+:EMSE16,QPV+:SoSyM17,sundermann2023evaluating}; examples for complex, highly configurable systems include automobiles as well as operating system kernels.
Often, bugs only manifest if a certain combination of features is selected~\cite{BDC+:SETSS89,CKMR03,ABKS13,AMS+:TOSEM18}.
A popular solution for fault detection in such systems is product-based testing combined with sampling
methods to create a representative list of configurations to be tested individually~\cite{RAK+:VaMoS13,MMCA:IST14,VAT+:SPLC18};
a popular way of ensuring that a sample is at least somewhat representative 
is to require coverage of all interactions between at most $t$ features~\cite{OGB:SPLC19,FVDF:JSS21} for some constant $t$.

In 
this paper, a configurable software system is modeled as a propositional formula $\symfeaturemodel{}(x_1,\ldots,x_n)$ on a set $\symfeatureset{} = \{x_1,\ldots,x_n\}$ of Boolean variables, also called \emph{features}.
A \emph{configuration} is an assignment of truth values to the features $x_1, \ldots, x_n$; it is \emph{valid} if it satisfies $\symfeaturemodel{}$.
Each feature $x_i$ has two possible \emph{literals}:
$x_i$ (positive) and $\overline{x_i}$ (negative).
Thus, a configuration can also be represented as a set of feature literals.
In the following, we use the term \emph{configuration} to refer only to valid configurations.

We assume $\symfeaturemodel{}$ to be given in conjunctive normal form (CNF), i.e., as a conjunction of disjunctive clauses, each consisting of feature literals.
We consider the clauses $\symclause{}_j$ of $\symfeaturemodel{}$ to be sets of literals.
By $\symconcretefeatureset{} \subseteq \symfeatureset{}$, we denote a subset of features called \emph{concrete features}.
For a concrete feature $x$, the literals $x$ and $\overline{x}$ are \emph{concrete literals}.
A set $\syminteraction{}$ of $|\syminteraction{}| = t$ concrete literals is called an \emph{interaction} of size $t$ or \emph{$t$-wise interaction}.

An interaction is \emph{feasible} if there is a valid configuration $\symconfiguration{} \supseteq \syminteraction{}$; in that case, we say that $\symconfiguration{}$ covers $\syminteraction{}$.
The $t$-wise interaction sampling problem (\tisp) asks, for a given $\symfeaturemodel{}$ and $\symconcretefeatureset{}$, for a minimum cardinality set $\symsample{}$ of valid configurations that achieves what we call \emph{$t$-wise coverage}.
A set $\symsample{}$ of configurations, also called \emph{sample}, is said to have \emph{$t$-wise coverage} if every feasible interaction $\syminteraction{}$ is covered by a configuration $\symconfiguration{} \in \symsample{}$.
As an example, consider the model $\symfeaturemodel{}(x,y,z) = x \vee y$.
Except for $\{\overline{x}, \overline{y}\}$, all pairwise interactions are feasible.
A minimum-cardinality sample with pairwise coverage for this example is $\symsample{} = \{
  \{x, y, z\},
  \{x, \overline{y}, \overline{z}\},
  \{x, \overline{y}, z\},
  \{\overline{x}, y, \overline{z}\},
  \{\overline{x}, y, z\}
\}$.

Following relevant previous work, we mostly treat the case $t = 2$,
i.e., we are searching for a minimum-cardinality sample with \emph{pairwise coverage}.
Our implementation currently only handles the case $t = 2$, although the overall approach could be scaled to larger $t$ as well.
This would, however, require a number of changes to the implementation and some re-engineering to improve scalability.
In addition, the number of interactions tends to grow drastically with higher $t$,
so any approach that eventually relies on enumerating the set of feasible interactions will not scale to the largest instances for larger $t$.
In most applications that we are aware of, $t \in [1, 3]$; this is most likely due to that growth as well.

We make the following contributions.


\subsection*{Complexity.} We consider the complexity of the problem, showing that it can be solved in polynomial time 
  with logarithmically many queries to a \textsc{Sat} oracle, but not with constantly many such queries, unless the polynomial time hierarchy collapses.

\subsection*{Initial Heuristic.} We devise and engineer an initial solution algorithm that can find solutions 
                         and correctly identify the set of valid interactions in reasonable time, 
                         even for very large instances with hundreds of millions of valid interactions,
                         which previous approaches could not solve in such time.

\subsection*{Lower Bounds.} We present a cut, price \& round approach for computing lower bounds on the size of an optimal sample.
                    Lower bounds are crucial to any approach that hopes to identify provably minimal samples with pairwise coverage.
                    Aside from their immediate use to establish optimality, we also use lower bounds to break symmetries and to provide improved starting points for heuristics.

\subsection*{Exact Algorithm.} We present an exact algorithm to find optimal solutions based on an interaction selection heuristic driving an incremental SAT solver that works on small and medium-sized instances.

\subsection*{Metaheuristic Solver.} We combine our lower bounds, heuristic insights and several exact solution approaches 
                             and engineer a metaheuristic solver based on large neighborhood search (LNS).
                             This solver runs a portfolio of different approaches in parallel to search for better solutions and lower bounds,
                             solving \qty{85}{\percent} of instances in a diverse benchmark of instances to provable optimality within an hour,
                             significantly outperforming previous approaches.

\subsection*{Preprocessing Techniques.} We adapt several well-known SAT preprocessing techniques to our problem and evaluate their impact on the solution process.

\subsection*{Progress on Benchmarks.} Aside from improved bounds and optimal solutions for many publicly available benchmarks,
                               we solved the four largest publicly available instances 
                               of the PSPL Scalability Challenge~\cite{PTR+:SPLC19} to provable optimality.
                               Previous approaches could not even find feasible solutions for these instances.

\subsection*{Paper Structure.} The remainder of the paper is structured as follows.
\Cref{sec:related-work} discusses related work, 
\cref{sec:preliminaries} defines some more formal terms, and \cref{sec:complexity} presents our complexity results.
\Cref{sec:initial-heuristic,sec:lower-bounds,sec:sammy} describe the individual parts of our algorithm and implementation; see \cref{fig:solver_overview} for an outline.
In \cref{sec:experiments}, we present our experimental evaluation; \cref{sec:conclusion} concludes.
\begin{figure}
\centering
\includegraphics[width=.98\linewidth]{./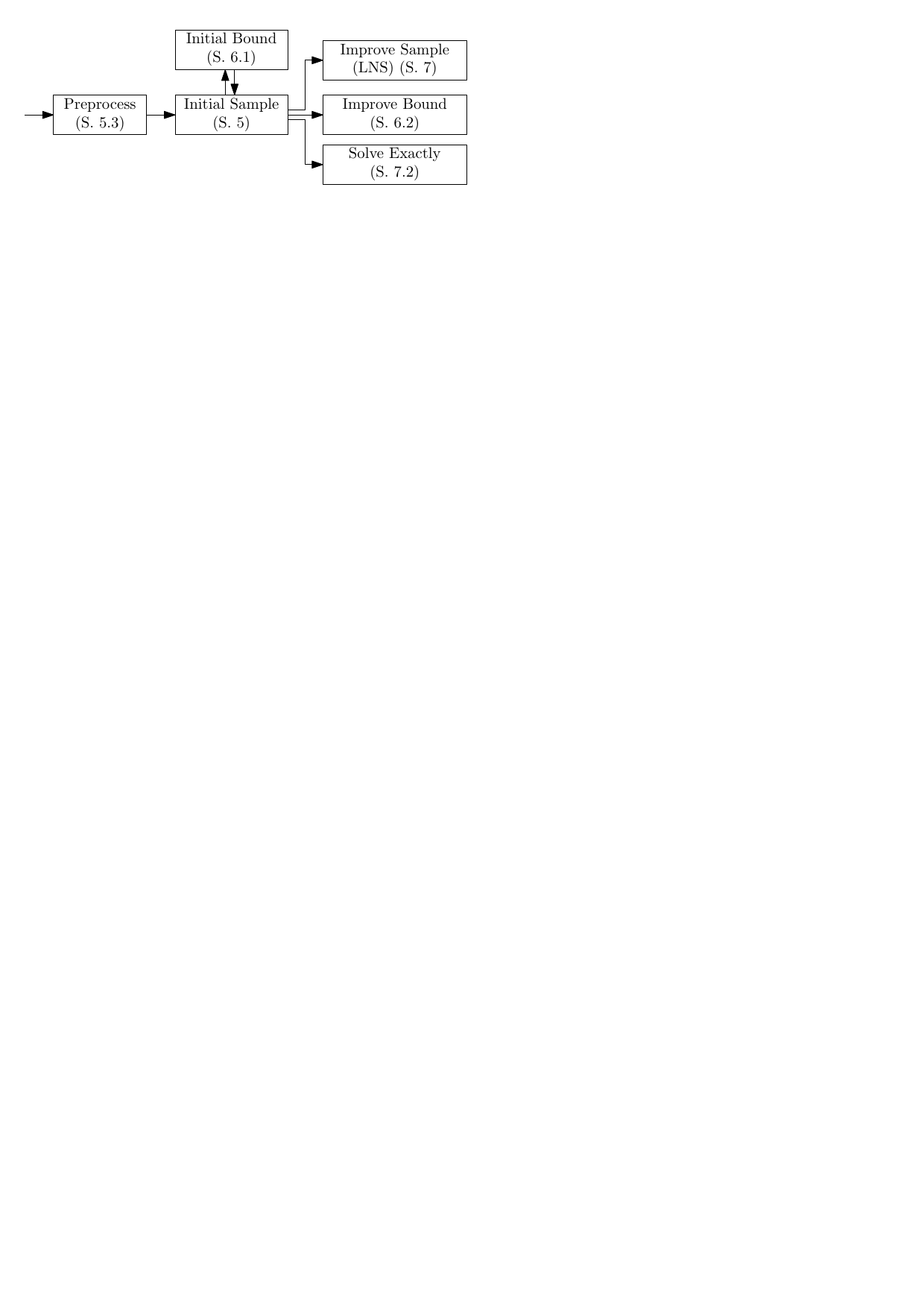}
\caption{An overview of the flow of our algorithm and the corresponding sections in the paper.
         The three rightmost boxes are executed in parallel,
         with multiple copies of the LNS using different strategies running simultaneously.}
\label{fig:solver_overview}
\end{figure}

\section{Related Work.}
\label{sec:related-work}
There are various problem variants, some of which can be transformed into one another.
The primary differences lie in the arities of the features (binary, $g$-ary, or mixed) and the types of constraints (unconstrained, forbidden pairs of feature values, or arbitrary propositional formulas).
Although we focus on binary features, our model's ability to specify concrete features and support arbitrary propositional formulas enables it to represent all of these variants.
Most of the literature presented below also supports $t$-wise coverage for $t \geq 2$.

For binary features and forbidden pairs, the configuration space reduces to a 2-SAT problem, which can be solved in linear time~\cite{aspvall1979linear}.
However, Maltais and Moura~\cite{maltais2010finding} demonstrated that minimizing the sample size remains not only \NP-hard but also hard to approximate; the optimal solution corresponds to a minimum edge clique cover.
As this problem is a special case of our formulation (restricted to concrete features and a 2-SAT formula) this hardness result extends to our variant.

There are multiple exact algorithms based on SAT solving.  
Nanba et al.~\cite{nanba2012using} employ a binary search strategy to find optimal solutions, successfully solving instances with up to 80 features.  
The CALOT algorithm~\cite{yamada2015optimization} uses incremental SAT solving to iteratively reduce the number of available configurations until infeasibility is detected, thereby certifying the previous solution as optimal.  
Ans{\'o}tegui et al.~\cite{ansotegui2022incomplete} delegate the minimization to a MaxSAT solver and are able to compute optimal solutions for several non-trivial instances.  
These approaches also incorporate simple lower-bounding techniques for symmetry breaking, which are effective primarily for high arity $g$.  
Yang et al.~\cite{yang2021method} propose a more advanced lower-bounding method based on decomposition along a prohibited edge.  
Hervieu~\cite{hervieu2016practical} employs constraint programming instead of a SAT-solver.
Recently, Krupke~et~al.~\cite{10.1145/3712193} introduced a strategy for computing provable lower bounds via a suitable dual problem.
They were the first to implement a large neighborhood search (LNS) and demonstrated its effectiveness in producing optimal or near-optimal solutions.  
While we use different methods to compute lower bounds and solve subproblems, their work provides the foundation for our LNS approach.

There are various heuristics in the literature.
ACTS~\cite{yu2013acts} provides several IPOG-based~\cite{LK+:ECBS07} heuristics that construct multiple configurations in parallel.
The variant of Duan et al.~\cite{duan2017optimizing} models the problem as a hypergraph coloring problem, where interactions are vertices and conflicts are encoded as hyperedges; they then use the degree of conflict to guide a greedy coloring strategy.
YASA~\cite{KTS+:VaMoS20}, which also builds on IPOG, makes extensive use of a \symsat{} solver to ensure that each partial configuration can be completed to a valid one.
It further applies local optimization steps at the end to reduce the sample size.
Other approaches, such as IncLing~\cite{AKT+:GPCE16}, generate one configuration at a time, prioritizing underrepresented feature interactions by leveraging feature frequency and polarity in the remaining uncovered pairs.
Yamada et al.~\cite{yamada2016greedy} also generate configurations individually, but integrate a CDCL SAT solver more deeply than other methods.
Their approach constructs configurations during the solver's decision-making process, with possible amendments if conflicts arise.
Ans{\'o}tegui et al.~\cite{ansotegui_et_al:LIPIcs.CP.2021.12} likewise build one configuration at a time using a MaxSAT solver, which can also modify recently added configurations within a dynamic window.
Kadioglu~\cite{kadioglu2017column} proposed a column generation heuristic, but evaluated it only on small, unconstrained instances.
CAmpactor~\cite{zhao2023campactor} iteratively removes one configuration and attempts to restore full coverage by iteratively selecting a missing interaction and modifying another configuration to include it.

\section{Preliminaries.}
\label{sec:preliminaries}
Let $\sympartialconfig{}$ be a \emph{partial configuration} of $\symfeaturemodel{}$, i.e., a set of literals such that $\{\ell, \overline{\ell}\} \nsubseteq \sympartialconfig{}$ for any $\ell$.
We say that a literal is \emph{true} in $\sympartialconfig{}$ iff $\ell \in \sympartialconfig{}$, \emph{false} in $\sympartialconfig{}$ iff $\overline{\ell} \in \sympartialconfig{}$ and \emph{open} in $\sympartialconfig{}$ if neither is the case.
Let $\symclause{}_j$ be a clause of $\symfeaturemodel{}$; we say that $\sympartialconfig{}$ satisfies $\symclause{}_j$ if $\sympartialconfig{} \cap \symclause{}_j \neq \emptyset$, i.e., one of the literals of $\symclause{}_j$ is true in $\sympartialconfig{}$, and that $\sympartialconfig{}$ violates $\symclause{}_j$ if $\overline{\symclause{}_j} := \{\overline{\ell} \mid \ell \in \symclause{}_j\} \subseteq \sympartialconfig{}$, i.e., if all the literals of $\symclause{}_j$ are false in $\sympartialconfig{}$.
We say that a clause $\symclause{}_j$ is \emph{unit} under $\sympartialconfig{}$ if all but one of its literals are false in $\sympartialconfig{}$ and the remaining literal $\ell$ is open in $\sympartialconfig{}$.
\emph{Unit Propagation} (UP), also called \emph{Boolean Constraint Propagation}, extends a partial configuration $\sympartialconfig{}$ by adding the open literals $\ell$ of unit clauses to $\sympartialconfig{}$ until either of the following happens:
there are no more unit clauses, or there is a violated clause.
In the latter case, we say that UP encountered a conflict.
In the following, by $\UP(\sympartialconfig{})$, we denote the partial assignment resulting from applying UP to $\sympartialconfig{}$; if UP encounters a conflict, we write $\UP(\sympartialconfig{}) = \bot$.

\section{Complexity.}
\label{sec:complexity}
Now we analyze the complexity of the \tisp{},
in particular of the decision version in which we are given a formula $\symfeaturemodel{}$ on Boolean features $\symfeatureset{}$, 
a set of concrete features $\symconcretefeatureset{} \subseteq \symfeatureset{}$ and a bound $\symtargetsamplesize{} \in \mathbb{Z}_{\geq 0}$ and 
have to decide whether there is a sample with $t$-wise coverage and at most $\symtargetsamplesize{}$ configurations.
The problem can be cast as a type of \textsc{Set Cover} problem that has both its universe and its sets hidden behind a \symsat{} problem.
It is therefore unsurprising that \pisp{} is \NP-hard.
Furthermore, for a given formula $\symfeaturemodel{}$ with at least two variables in CNF,
deciding whether $\symtargetsamplesize{} = 0$ configurations suffice to achieve pairwise coverage of all interactions corresponds to the \textsc{Unsat} problem.
It is therefore also clear that \pisp{} is \coNP-hard, which means that the problem cannot be in \NP{} unless $\NP = \coNP$.
Furthermore, there cannot be any polynomial-time algorithm that approximates the number of configurations, unless $\P = \NP$.

To establish membership of \tisp{} in a complexity class,
we consider classes higher up in the \emph{polynomial-time hierarchy} \PH.
We establish the following result.
\begin{restatable}{theorem}{oraclelogqueries}
    \label{thm:sat-oracle-log-queries}
    Given a polynomial-time \symsat{} oracle $\symsatoracle{}$,
    for any constant $t$, \tisp{} with $|\symconcretefeatureset{}|$ concrete features 
    can be solved in polynomial time using $\mathcal{O}(\log |\symconcretefeatureset{}|)$ queries to $\symsatoracle{}$.
\end{restatable}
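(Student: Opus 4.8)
The plan is to reduce the optimization to a decision oracle and then apply binary search, using the $\symsat{}$ oracle to simulate the decision queries. First I would observe that for any fixed constant $t$, the number of configurations in an optimal sample is polynomially bounded: a trivial sample covering all feasible interactions individually uses at most one configuration per feasible interaction, and the number of $t$-wise interactions is $\mathcal{O}(|\symconcretefeatureset{}|^t \cdot 2^t) = \mathcal{O}(|\symconcretefeatureset{}|^t)$, a polynomial in $|\symconcretefeatureset{}|$ for constant $t$. Hence the optimal value $s^{*}$ lies in an integer range of size polynomial in $|\symconcretefeatureset{}|$, so it can be pinned down by $\mathcal{O}(\log |\symconcretefeatureset{}|)$ comparisons, each of the form ``is $s^{*} \le \symtargetsamplesize{}$?''

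\textbf{Encoding a single decision query.} The heart of the argument is to show that one such comparison can be answered with a single $\symsat{}$ oracle call (or $\mathcal{O}(1)$ calls, which does not change the asymptotics). For a fixed bound $\symtargetsamplesize{}$, I would construct one large propositional formula $\symredformula{}_{\symtargetsamplesize{}}$ whose variables encode $\symtargetsamplesize{}$ candidate configurations $\symconfiguration{}_1,\ldots,\symconfiguration{}_{\symtargetsamplesize{}}$ (i.e.\ one copy of the feature variables $x_1,\ldots,x_n$ per configuration). The formula enforces (i) that each $\symconfiguration{}_i$ is a valid configuration, i.e.\ $\symfeaturemodel{}(\symconfiguration{}_i)$ holds, and (ii) that every \emph{feasible} interaction is covered by at least one $\symconfiguration{}_i$. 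Part (ii) is the delicate point, since feasibility of an interaction is itself a $\symsat{}$ question and the set of feasible interactions is hidden. I would handle it by quantifying over all $\mathcal{O}(|\symconcretefeatureset{}|^t)$ candidate interactions $\syminteraction{}$: for each such $\syminteraction{}$, we need the implication ``if $\syminteraction{}$ is feasible, then some $\symconfiguration{}_i$ covers it.'' Covering is a purely Boolean constraint ($\bigvee_i \bigwedge_{\ell \in \syminteraction{}} (\ell \in \symconfiguration{}_i)$), whereas feasibility is existential. To avoid nesting an existential quantifier inside the formula, I would exploit that an infeasible interaction imposes no constraint, so it suffices to add the coverage clause only for feasible interactions; the feasible ones can be identified up front with $\mathcal{O}(|\symconcretefeatureset{}|^t)$ separate $\symsat{}$ calls, each asking whether $\symfeaturemodel{} \wedge \syminteraction{}$ is satisfiable.

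\textbf{Combining the pieces.} The preprocessing of identifying feasible interactions uses a polynomial number of oracle calls, but this conflicts with the target of only $\mathcal{O}(\log|\symconcretefeatureset{}|)$ calls, so the main obstacle is to fold that work into the logarithmically many queries as well. The key trick will be that \emph{all} of these feasibility tests, and the entire family of decision formulas $\symredformula{}_{\symtargetsamplesize{}}$ across the $\mathcal{O}(\log|\symconcretefeatureset{}|)$ values of $\symtargetsamplesize{}$ probed by the binary search, can be bundled: for each binary-search step I would build a single formula whose satisfiability is equivalent to the answer of that step, so the feasibility information is re-derived inside each query rather than computed separately. Concretely, the coverage requirement for an interaction $\syminteraction{}$ is rewritten as the implication $\mathrm{feasible}(\syminteraction{}) \Rightarrow \mathrm{covered}(\syminteraction{})$, and I would encode $\mathrm{feasible}(\syminteraction{})$ using an auxiliary ``witness'' configuration $W_{\syminteraction{}}$ constrained to be a valid configuration containing $\syminteraction{}$ whenever one exists; standard tricks let us express ``$W_{\syminteraction{}}$ witnesses feasibility, and if it does then $\syminteraction{}$ is covered'' as a single satisfiable-or-not formula without quantifier alternation.

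\textbf{Wrap-up.} With each binary-search probe answered by one oracle call, the total query count is $\mathcal{O}(\log s^{*}) = \mathcal{O}(\log |\symconcretefeatureset{}|^t) = \mathcal{O}(t \log |\symconcretefeatureset{}|) = \mathcal{O}(\log|\symconcretefeatureset{}|)$ for constant $t$, and all formula constructions and bookkeeping between calls run in time polynomial in the instance size. The step I expect to be hardest is the witness-based encoding that simultaneously certifies coverage of all feasible interactions within a \emph{single} non-alternating formula, since a naive encoding would require a $\forall\exists$ structure (for all interactions there exists a covering configuration, and feasibility is itself existential) that does not collapse to one $\symsat{}$ query; making this collapse work cleanly, while keeping the formula polynomial in size, is where the real care is needed.
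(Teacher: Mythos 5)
There is a genuine gap, and it sits exactly where you yourself flag it: the ``witness-based encoding'' that is supposed to collapse the $\forall\exists$ structure into a single \symsat{} query does not work, and no standard trick fixes it. The statement you need per interaction $\syminteraction{}$ is ``if $\syminteraction{}$ is feasible, then some $\symconfiguration{}_i$ covers it.'' Feasibility sits in the \emph{antecedent} of this implication, i.e., in a negative position, so it is a universally quantified condition ($\forall W$: $W$ is not a valid configuration containing $\syminteraction{}$, or $\syminteraction{}$ is covered). A witness copy $W_\syminteraction{}$ cannot encode this: if you make ``$W_\syminteraction{}$ is valid and contains $\syminteraction{}$'' a hard constraint, the whole formula becomes unsatisfiable whenever any single $\syminteraction{}$ is infeasible; if you instead guard it with a selector or phrase it as ``($W_\syminteraction{}$ valid and $\syminteraction{} \subseteq W_\syminteraction{}$) $\Rightarrow$ $\syminteraction{}$ covered,'' a satisfying assignment can simply choose $W_\syminteraction{}$ to violate $\symfeaturemodel{}$, vacuously discharging the implication and never covering anything. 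Existential witnesses can only certify statements in positive positions; your query as designed is genuinely $\Sigma_2$-shaped, which is precisely why this problem is not obviously in \NP{} in the first place.

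The paper escapes this with an idea your proposal is missing: it spends the logarithmically many queries not on the sample size but on determining the exact \emph{number} $M'$ of feasible interactions. Each binary-search query ``are there at least $q$ feasible interactions?'' is purely existential (one fresh witness copy of $\symfeaturemodel{}$ per candidate interaction, a selector variable $x_\syminteraction{}$ that can be true only if its copy is satisfied, plus a cardinality constraint), so it is a legitimate single \symsat{} call. Once $M'$ is known, the final query asks for $\symtargetsamplesize{}$ configurations covering \emph{at least $M'$} interactions --- also purely existential. Since any covered interaction is feasible, at most $M'$ interactions can ever be covered, so ``at least $M'$ covered'' is equivalent to ``all feasible interactions covered.'' This counting argument is the whole theorem; without it (or an equivalent device), your binary search over $s^*$ has no valid oracle query to drive it. Note also that the adaptivity matters: the upper-bound fact ``there are at most $M'$ feasible interactions'' is co-\NP-flavored and is never encoded in any formula --- it is implied by the failure of the query at $M'+1$ --- which is why the two-phase structure cannot be folded into each probe the way you suggest.
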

The full proof is based on first identifying the number of feasible interactions
using binary search and then requesting coverage of at least that many interactions by at most $\symtargetsamplesize{}$ configurations;
see \cref{sec:proofs-complexity}.
This establishes membership of \tisp{} in $\P^{\NP} = \Delta_2^p$ for any constant $t$.
Because $|\symconcretefeatureset{}|$ is bounded by the input size,
it also establishes membership in $\P^{\NP[\log]}$, the class of all 
problems that a polynomial-time Turing machine can solve with $\mathcal{O}(\log N)$ oracle queries for some \NP-complete problem, where $N$ is the input size.
By a result of Hemachandra~\cite{DBLP:journals/jcss/Hemachandra89}, this implies that it is possible to solve the problem using polynomially many \emph{non-adaptive} oracle queries, i.e., oracle queries that do not depend on the outcome of previous queries.
Another consequence of this is that the problem is unlikely to be $\Delta_2^p$-hard.

However, we show that even \pisp{} is hard for the \emph{Boolean hierarchy} \BH, the smallest superclass of \NP{} that is closed under complement, union and intersection.
\BH{} is known to be equal to $\QH = \bigcup_{k \in \mathbb{N}} \P^{\NP[k]}$~\cite{DBLP:journals/siamcomp/Wagner90}, the class of problems solvable by a deterministic polynomial-time Turing machine with any constant number of queries to an oracle for an \NP-complete problem.
This makes it unlikely that we can reduce the $\mathcal{O}(\log |\symconcretefeatureset{}|)$ queries in \cref{thm:sat-oracle-log-queries} to $\mathcal{O}(1)$:
if we were able to solve \pisp{} with $\mathcal{O}(1)$ queries, \QH{} and \BH{} would collapse to some finite level, 
which would in turn cause the collapse of \PH{} to its third level by a result of Kadin~\cite{DBLP:journals/siamcomp/Kadin88,DBLP:journals/siamcomp/Kadin91}.

\begin{restatable}{theorem}{bhhardness}
  \label{thm:bh-hardness}
  \pisp{} is \BH-hard.
\end{restatable}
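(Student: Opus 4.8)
The plan is to invoke the equality $\BH = \QH$ recalled above and prove hardness for every level at once by means of a sufficient condition due to Wagner~\cite{DBLP:journals/siamcomp/Wagner90}: it suffices to exhibit, for each constant $m$, a polynomial-time many-one reduction that maps any sequence of CNF formulas $H_1,\dots,H_{2m}$ obeying the nesting promise ``$H_i \in \symsat \Rightarrow H_{i-1}\in\symsat$'' to a single \pisp{} instance $\sympispinstance = (\symfeaturemodel,\symconcretefeatureset,\symtargetsamplesize)$ that is a \textsc{yes}-instance if and only if the number $c = |\{i : H_i \in \symsat\}|$ of satisfiable formulas is odd. Under the nesting promise the satisfiable formulas form a prefix, so $c$ is just the largest satisfiable index; the whole difficulty is to make a single size threshold on the optimal sample detect the \emph{parity} of $c$, which is non-monotone in the $H_i$.

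First I would build two atomic gadgets on pairwise disjoint feature sets, reusing the two hardness levers already established. An \emph{S-gadget} for a formula $H$ contributes one unavoidable unit of coverage demand exactly when $H$ is satisfiable and none when it is unsatisfiable; this is the \coNP{} lever ``$\symtargetsamplesize = 0$ suffices iff $H\in\symunsat$'' turned into a local $0/1$ switch. A \emph{U-gadget} is its mirror image, contributing one unit exactly when $H$ is \emph{un}satisfiable. Concretely, each gadget owns a concrete literal pair whose joint feasibility is gated, through implication clauses, both on activation of the gadget and on the existence of a model of its private copy of $H$ on fresh, non-concrete variables; when the gadget is inactive its concrete features are pinned so that they create no demand. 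Assigning S-gadgets to the odd indices and U-gadgets to the even indices and writing the total demand as $T(c)$, a short count using the prefix structure gives $T(c)=m$ for one parity of $c$ and $T(c)=m+1$ for the other, so the excess of the optimum over a fixed base value $B$ separates the two parities; since \BH{} is closed under complementation, this is all the criterion requires.

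The mechanism that makes the individual contributions \emph{add} rather than merely take a maximum is a global one-hot selector: concrete features $g_1,\dots,g_{2m}$ together with at-most-one clauses $\overline{g_i}\vee\overline{g_j}$ and the at-least-one clause $g_1\vee\dots\vee g_{2m}$, so that every valid configuration activates exactly one gadget. For the upper bound I would exhibit an explicit sample of size $B + T(c)$: one dedicated configuration per demanded gadget, activating it and witnessing the required model, together with a constant number $B$ of ``base'' configurations covering all interactions among the selector features and among the pinned, inactive gadget features. The crucial point is that $B$ is $\mathcal{O}(1)$ and independent of the satisfiability pattern. For the matching lower bound I would produce, for each demanded gadget, a feasible witnessing interaction and argue that these witnessing interactions are pairwise incompatible, as each forces a distinct selector variable to be true; since pairwise-incompatible feasible interactions must be covered by distinct configurations, at least $T(c)$ configurations are forced.

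The main obstacle I anticipate lies precisely in the lower bound and in the exactness of the atomic gadgets. I must design the gated literal pairs so that each gadget contributes \emph{exactly} zero or one unit of irreducible demand, with no spurious extra feasible interactions inflating the count and no way for a base or neighboring configuration to cover a witnessing interaction for free; and I must prove that the only way to cover the family of witnessing interactions is to spend one configuration per demand, which is where the one-hot selector and the pairwise-incompatibility argument have to be made airtight. Keeping the base cost a fixed constant, uniform over all $2^{2m}$ satisfiability patterns, is the final delicate point that ties the construction to a single threshold and hence to parity.
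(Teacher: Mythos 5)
Your high-level framing --- Wagner's parity criterion for the levels of $\BH = \QH$, applied to formula sequences with the nesting promise --- is a legitimate alternative to the paper's direct reduction from the alternating-satisfiability problems $\text{ASU}_k$, and your prefix count $T(c) = m + (c \bmod 2)$ together with the complementation-closure remark is fine as far as it goes. The fatal problem is the U-gadget. In \pisp{}, coverage demand can only be created by \emph{feasible} interactions, and feasibility is an existential property of the instance you construct: an interaction is feasible iff some satisfying assignment of your formula contains it. If you could build, in polynomial time from $H$, a gadget whose concrete witnessing interaction is feasible exactly when $H$ is \emph{un}satisfiable, then a satisfying assignment of your formula containing that interaction would be a polynomial-size, polynomial-time-verifiable certificate that $H \in \symunsat{}$, giving $\coNP \subseteq \NP$. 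So the ``mirror image'' of the S-gadget --- feasibility gated by implication clauses on the existence of a model, but in the UNSAT direction --- cannot exist unless $\NP = \coNP$. With it collapses the entire unit-demand arithmetic and your lower-bound scheme: for a demanded U-gadget you cannot even exhibit the feasible witnessing interaction that your pairwise-incompatibility argument needs. You flagged this step as the ``main obstacle,'' but it is not a technicality to be tightened; as stated it is impossible.

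The paper's proof is shaped precisely by this obstruction, which is why its two gadget types are asymmetric. For formulas whose desired status is UNSAT it uses the monotone, feasibility-gated lever you describe (demand $\kappa^+ = 0$ when UNSAT versus ${|\symconccomponent{}| \choose 2}$ when SAT). For formulas whose desired status is SAT, where demand must be \emph{anti}-monotone in satisfiability, it does not touch feasibility at all: the true interactions on the gadget's concrete features $\symconccomponent{}$ are feasible in both cases, but clauses $x_{\symredformula{}} \vee \overline{c} \vee \overline{d} \vee \overline{e}$ make satisfiability of $\symredformula{}$ unlock \emph{compression} --- a covering array of size $(1+o(1))\log|\symconccomponent{}|$ suffices --- while unsatisfiability forces at most two true concrete features per configuration and hence ${|\symconccomponent{}| \choose 2}$ configurations. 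Note that this also invalidates your lower-bound mechanism in that case: whether the witnessing true pairs on $\symconccomponent{}$ are pairwise incompatible is itself contingent on the satisfiability of $\symredformula{}$, rather than being guaranteed by a one-hot selector. Because the resulting demands are $\log$-versus-quadratic gaps rather than $0/1$ switches, a single uniform threshold no longer reads off parity by simple addition; this is exactly why the paper scales the gadget sizes $|\symconccomponent{}_i|$ recursively against the accumulated constants $g_{k-1}$, $g_{k-1}^+$, $g_{k-1}^-$ and computes the base values by brute-force \textsc{Set Cover}. Repairing your plan means replacing the U-gadget with this compression gadget and redoing the threshold bookkeeping --- at which point you have essentially reconstructed the paper's proof.
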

The proof can be found in \cref{sec:proofs-complexity}.

\subsection{Avoiding Concrete Features.}
Our \BH-hardness reduction makes use of the fact that we are allowed to specify a set $\symconcretefeatureset{}$ of concrete features, 
restricting the features whose interactions need to be covered to that subset.
In this section, we show that several different variants of the problem, including one
where $\symconcretefeatureset{} = \symfeatureset{}$ is fixed, are equivalent under polynomial-time reductions.

\begin{definition}
  An interaction is a \emph{false interaction} if all its literals are negated variables, 
  a \emph{true interaction} if they are all non-negated, and a \emph{mixed interaction} otherwise.
\end{definition}

\begin{restatable}{theorem}{variantsreduction}
  \label{thm:variants-reduction}
  The following problems are polynomial-time equivalent for any $t \geq 2$:
  \begin{itemize}
    \item \tisp{},
    \item \tispac{}, which is \tisp{} with $\symconcretefeatureset{} = \symfeatureset{}$,
    \item \tispot{}, which is \tisp{} where only true interactions need coverage, and
    \item \tispacot{}, which is \tispac{} where only true interactions need coverage.
  \end{itemize}
\end{restatable}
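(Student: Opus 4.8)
The plan is to establish the four-way equivalence by a cycle of polynomial-time reductions. Since \tispac{} and \tispacot{} are syntactic restrictions of \tisp{} and \tispot{} respectively, one direction of each equivalence is essentially free; the real work lies in removing the concrete-feature restriction and in reducing general interaction coverage to true-only interaction coverage. I would aim to prove the chain \tisp{} $\to$ \tispot{} $\to$ \tispacot{}, together with \tispac{} $\to$ \tisp{} (trivial) and \tispacot{} $\to$ \tispot{} (trivial), and then argue the remaining containments follow by transitivity, so that all four problems collapse into one equivalence class.

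First I would handle the reduction that turns arbitrary interactions into true-only interactions. The idea is a standard polarity-doubling gadget: for each feature $x_i$, introduce a fresh feature $x_i'$ that is forced by added clauses to equal $\overline{x_i}$ (i.e.\ append the clauses encoding $x_i' \leftrightarrow \overline{x_i}$ to $\symfeaturemodel{}$). Then a negated literal $\overline{x_i}$ in the original instance corresponds to the positive literal $x_i'$ in the new instance, so every mixed or false interaction over the original features becomes a true interaction over the enlarged feature set $\{x_1,\dots,x_n,x_1',\dots,x_n'\}$. One must check that covering exactly the feasible true interactions of the transformed instance is equivalent to covering all feasible interactions of the original, and that configuration counts are preserved, since the extra features are fully determined by the originals and add no freedom.

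Next I would remove the concrete-feature restriction, i.e.\ reduce \tisp{} (arbitrary $\symconcretefeatureset{}$) to a variant with $\symconcretefeatureset{} = \symfeatureset{}$. The difficulty is that features outside $\symconcretefeatureset{}$ must not contribute interactions that need coverage, yet in the all-concrete variant every feature is concrete. I would neutralize each non-concrete feature by attaching a block of auxiliary features whose only role is to make the non-concrete feature's interactions trivially coverable, or alternatively by forcing each non-concrete feature to a fixed value via a unit clause so that all its interactions are either infeasible or coverable by a single configuration and thus impose no additional constraint. The delicate point here is ensuring that the new, forced interactions do not inflate the optimal sample size: one must verify that any optimal sample for the original instance extends to a sample of the same cardinality for the transformed instance, and vice versa, so that the optimum is preserved exactly rather than merely approximately.

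The main obstacle I expect is precisely this \emph{exact} preservation of the optimum across reductions. Each gadget introduces new features and hence new feasible interactions, and the reduction is only valid if covering those new interactions comes ``for free'' relative to the original optimum — otherwise the sample-size bound $\symtargetsamplesize{}$ does not transfer cleanly. I would therefore be careful to design each gadget so that (i) the set of newly forced features is deterministically fixed given the original features, adding no branching and no independent coverage obligations, and () the projection that drops the auxiliary features maps samples of the transformed instance back to samples of the original of no greater size. Verifying these two properties for each step is the crux; once they hold, the bound $\symtargetsamplesize{}$ passes through each reduction unchanged, the reductions compose in polynomial time, and transitivity closes the cycle to yield the claimed four-way equivalence.
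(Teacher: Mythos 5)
Your plan has a genuine structural gap: the set of reductions you propose does not close into a cycle, so ``the remaining containments follow by transitivity'' is false. You establish \tispac{} $\leq_m^P$ \tisp{} $\leq_m^P$ \tispot{} $\leq_m^P$ \tispacot{} and the trivial \tispacot{} $\leq_m^P$ \tispot{}. From these one gets \tispot{} $\equiv_m^P$ \tispacot{}, and that \tispac{} and \tisp{} reduce to both --- but there is no directed path from \tispot{} or \tispacot{} back to \tisp{}, and no reduction at all \emph{into} \tispac{}. Hence you never show that \tisp{} (or \tispac{}) is as hard as the true-only variants, and the four-way equivalence does not follow. The missing arrow is precisely the direction your proposal never constructs: reducing true-only coverage back to all-interactions coverage, i.e.\ \tispot{} $\leq_m^P$ \tisp{} or (as the paper needs it, with only concrete features added) \tispacot{} $\leq_m^P$ \tispac{}. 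This is the hardest gadget in the paper's proof: one adds $t+1$ ``cheat'' features and, for each of the $\sum_{i=0}^{t-1}\binom{|\symconcretefeatureset{}|}{i}$ assignments setting at most $t-1$ concrete features to true, corresponding ``assignment'' features, so that every false and mixed interaction (and every interaction touching a new feature) is absorbed by a polynomial number of \emph{forced} cheating configurations; the budget transfers as $\symtargetsamplesize{} \mapsto (t+1)\cdot\sum_{i=0}^{t-1}\binom{|\symconcretefeatureset{}|}{i} + \symtargetsamplesize{}$, and a counting argument shows any solution of the transformed instance must contain exactly these cheating configurations, none of which covers a true interaction on $\symconcretefeatureset{}$. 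Without such a construction the equivalence cannot be completed; with it, the paper closes the cycle \tisp{} $\leq_m^P$ \tispot{} $\leq_m^P$ \tispacot{} $\leq_m^P$ \tispac{} $\leq_m^P$ \tisp{}. (Your first reduction, polarity doubling, does match the paper's proof of \tisp{} $\leq_m^P$ \tispot{}, with the caveat that the doubled features must be declared concrete.)

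A second, independent flaw sits inside your \tispot{} $\leq_m^P$ \tispacot{} step: the fallback idea of forcing each non-concrete feature to a fixed value by a unit clause is unsound. Non-concrete features must in general take different values in different configurations to realize different feasible concrete interactions; fixing them alters the set of satisfying assignments, hence the set of feasible interactions and the optimal sample size, in either direction, so the bound $\symtargetsamplesize{}$ does not transfer. The paper instead adds one auxiliary \emph{concrete} feature $x_c^j$ per $(t-1)$-subset $O_j \subset \symconcretefeatureset{}$, relaxes every original clause by $\bigvee_j x_c^j$, and forces (when $x_c^j$ is true) all non-concrete features and $O_j$ to true and everything else to false; the resulting $\binom{|\symconcretefeatureset{}|}{t-1}$ forced configurations cover every true interaction involving a non-concrete or auxiliary feature, and the budget becomes $\binom{|\symconcretefeatureset{}|}{t-1} + \symtargetsamplesize{}$. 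Your vaguer per-feature ``auxiliary block'' idea is in this spirit, but it would need to be developed to this level of precision --- including the argument that the forced configurations appear in any solution and cover no true interaction on $\symconcretefeatureset{}$ --- before the bound can be said to pass through unchanged.
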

The proof can be found in \cref{sec:proofs-complexity}.

\subsection{Hardness for Larger $t$.}
Finally, we also show that our hardness result extends from $t = 2$ to arbitrary constant $t \geq 2$;
again, due to space constraints, for the full proof, we refer to \cref{sec:proofs-complexity}.
\begin{restatable}{theorem}{hardnesslargert}
  For any $2 \leq t' \leq t$, we have $t'\textsc{-Isp-Ot} \leq_m^P \tispot$.
\end{restatable}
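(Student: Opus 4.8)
The plan is to give a strength-raising reduction from the source problem $\ensuremath{t'\textsc{-Isp-Ot}}$ to \tispot{} by padding every interaction with $t-t'$ fresh features. Concretely, starting from a source instance with formula \symfeaturemodel{} and concrete set \symconcretefeatureset{}, I would introduce $t-t'$ new dummy features $d_1,\dots,d_{t-t'}$, force each of them true by adding the unit clauses $(d_i)$, and declare them concrete. Every valid configuration then sets all dummies to true, so each feasible true $t'$-interaction $\syminteraction{}\subseteq\symconcretefeatureset{}$ corresponds to the feasible true $t$-interaction $\syminteraction{}\cup\{d_1,\dots,d_{t-t'}\}$, and a configuration covers the latter exactly when it covers the former. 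This already yields the inequality $\mathrm{OPT}_{t'}\le\mathrm{OPT}_t$ and handles the ``principal'' interactions cleanly; the feasibility equivalence is immediate from the downward-closure of feasible interactions used throughout \cref{sec:complexity}.

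The first step is to pin down precisely which true $t$-interactions the padded instance creates. A feasible true $t$-interaction over $\symconcretefeatureset{}\cup\{d_1,\dots,d_{t-t'}\}$ consists of $a\ge t'$ original concrete literals together with $t-a$ dummies, and it is feasible iff the $a$ originals are jointly satisfiable in \symfeaturemodel{}. The principal interactions ($a=t'$, all dummies present) are in bijection with the source interactions, but the construction also forces coverage of interactions with $a>t'$, i.e.\ of larger co-satisfiable sets of original features taken \emph{atomically}.

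This is exactly where the main obstacle lies. Requiring a single configuration to jointly contain more than $t'$ original features is strictly stronger than requiring coverage of all their $t'$-subsets, so naive padding is not optimum-preserving: one can build a source instance whose $t'$-subsets are cheaply coverable by a few configurations that never realize a large common clique, while the padded instance is forced to spend an extra configuration per maximal clique, so that $\mathrm{OPT}_t>\mathrm{OPT}_{t'}$. Hence I cannot simply declare the originals concrete. The tension is structural: multi-coverage (one configuration covering many interactions) requires configurations that set many original features true, but any such configuration immediately makes the unwanted higher-arity true interactions feasible.

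To resolve this, the plan is to suppress the higher-arity interactions without destroying multi-coverage by decoupling \emph{which} interaction is being certified from \emph{how many} features are set true. Here I would use the freedom, guaranteed by \cref{thm:variants-reduction}, to work in whichever variant is most convenient, and augment the padding with a gadget that forces every feasible true $t$-interaction over the concrete features to contain all $t-t'$ dummies --- so that it encodes a unique source $t'$-interaction --- while still permitting a single configuration to certify many interactions at once. Making this gadget simultaneously polynomial-size and optimum-preserving is the crux of the argument; once the feasible true $t$-interactions are in exact correspondence with the source's feasible true $t'$-interactions and the covering incidences agree, both directions $\mathrm{OPT}_t\le\mathrm{OPT}_{t'}$ and $\mathrm{OPT}_{t'}\le\mathrm{OPT}_t$ follow by translating samples \symsample{} back and forth.
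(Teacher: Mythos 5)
You correctly identify the fatal obstacle of naive padding---once the dummies are forced true, any configuration that sets more than $t'$ original concrete features true creates feasible true $t$-interactions with more than $t'$ original literals, whose atomic coverage is strictly stronger than coverage of their $t'$-subsets---but the repair you propose is not constructed, and in fact cannot exist, for exactly the downward-closure reason you yourself invoke. You ask for a gadget under which (i) every feasible true $t$-interaction over the concrete features contains all $t-t'$ dummies, yet (ii) one configuration can still certify many source interactions at once. These demands are contradictory: to cover two distinct principal interactions $Y\cup\Gamma_1$ and $Y\cup\Gamma_2$ with $|\Gamma_1|=|\Gamma_2|=t'$, a configuration must make all dummies and at least $t'+1$ original concrete features true; that very configuration then witnesses feasibility of a true $t$-interaction consisting of $t'+1$ originals and only $t-t'-1$ dummies, violating (i). So under any gadget satisfying (i), each configuration covers at most one principal interaction, and the target optimum degenerates to the number of feasible source interactions rather than to $\mathrm{OPT}_{t'}$. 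Appealing to \cref{thm:variants-reduction} to change variants does not escape this, since the same closure argument applies in every variant. Your proposal therefore stops exactly where the real work begins, and the route it sketches for that work is a dead end.

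The paper's proof (\cref{sec:proofs-complexity}) resolves the tension in the opposite direction: instead of suppressing the unwanted higher-arity interactions, it makes \emph{all} of them feasible and coverable by a constant number of forced extra configurations, and pays for those in the budget. Writing $\Delta_t = t-t'$, it pads with concrete features $y_1,\ldots,y_{\Delta_t}$ that are \emph{not} forced true; it further adds a non-concrete selector $x_c$ and mutually exclusive concrete selectors $a_1,\ldots,a_{\Delta_t}$, relaxes every clause $\symclause{}$ of \symfeaturemodel{} to $\symclause{}\vee x_c$, and forces each $a_i$ to have a unique satisfying assignment in which all original features are true, $y_i$ is false, and all other $y_j$ are true. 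An interaction containing all of $Y=\{y_1,\ldots,y_{\Delta_t}\}$ forces every $a_i$ false, hence $x_c$ false, hence a genuine satisfying assignment of \symfeaturemodel{}, so the principal interactions behave exactly as in the source; every other true $t$-interaction (one containing some $a_i$, or missing some $y_j$) is covered by the $\Delta_t$ ``cheat'' configurations, which any feasible sample must contain anyway because each $a_i$ appears in feasible interactions and has a unique satisfying assignment. Raising the bound to $\symtargetsamplesize{}+\Delta_t$ then makes the reduction optimum-preserving in both directions.
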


\section{Initial Heuristic.}
\label{sec:initial-heuristic}

We introduce a new initial solution heuristic that combines ideas from YASA~\cite{KTS+:VaMoS20} and the core-based approach of Yamada et al.~\cite{yamada2016greedy}.
YASA is an IPOG-based method that uses an incremental \symsat{} solver as a black box to maintain and incrementally construct multiple configurations simultaneously.
The intermediate partial configurations YASA maintains are always \emph{feasible}, i.e., can always be extended to a valid configuration.
The approach of Yamada et al.~\cite{yamada2016greedy} constructs one test at a time directly on the trail of an incremental CDCL solver using UP and clause learning.
While YASA offers flexibility by constructing multiple configurations simultaneously, it relies on frequent full \symsat{} calls to ensure feasibility.
In contrast, Yamada et al.\ exploit low-level solver access for efficiency, avoiding expensive satisfiability checks but limiting coverage flexibility.
Our heuristic combines both strengths: we maintain multiple partial configurations like YASA, but guide their construction using UP and clause learning in the style of Yamada et al., 
resulting in a scalable algorithm that produces high-quality test suites.

\subsection{Overview.}
At the core of our approach is a custom incremental \symsat{} solver designed to handle multiple partial configurations simultaneously.
Each partial configuration is represented as an independent \emph{trail} (a stack of literals).
Unlike standard \symsat{} solvers, which operate on a single trail, our solver maintains many, enabling clause sharing to reduce redundancy and memory overhead.
For a given trail~$\symtrail{}$, the solver provides two key operations:

\begin{description}
    \item[$\symtrail{}.\texttt{push\_and\_propagate}(\syminteraction{})$] Adds literals of interaction $\syminteraction{}$ (if not already present) and propagates them. 
      Returns \textbf{true} if successful, and \textbf{false} if a conflict occurs, using conflict resolution and backjumping to revert (at least) the push.
    \item[$\symtrail{}.\texttt{complete}()$] Attempts to extend $\symtrail{}$ to a full configuration of the feature model $\symfeaturemodel{}$.
      Returns \textbf{true} if no prior assignments needed to change to resolve conflicts, \textbf{false} otherwise.
\end{description}

Both operations automatically perform conflict resolution when necessary, which may involve learning clauses and non-local updates to the trail, including the revision of earlier assignments.
The returned trail is always conflict-free.
Note that operations similar to these could also be performed by using existing incremental SAT solvers with a sufficiently rich interface, such as CaDiCaL~\cite{BiereFallerFazekasFleuryFroleyks-CAV24}, by tracking a set of assumptions for each partial configuration.
\texttt{push\_and\_propagate} can almost be simulated by \texttt{assume} and \texttt{propagate} in CaDiCaL (except for the conflict resolution),
and \texttt{complete} is similar to \texttt{solve} with assumptions, but does not produce a complete solution that potentially ignores some assumptions in case the assumptions are unsatisfiable.
Furthermore, on one hand, very frequently switching between completely different sets of assumptions would likely be too inefficient.
On the other hand, using one instance of such a SAT solver per partial configuration wastes memory and hampers sharing of learnt clauses.

Our heuristic iteratively constructs a set~$\symworkingsample{}$ of trails, i.e., partial configurations, that together aim to cover the set~$\symfeasibleinteractions{}$ of interactions that have not yet been proved infeasible.
In each iteration, it selects up to~$\symtargetqueuesize{}$ uncovered interactions and attempts to assign each to an existing trail in~$\symworkingsample{}$ using \cref{alg:extend-configurations}.

\begin{algorithm}
\caption{\texttt{ExtendConfs}$(\symworkingsample{}, \syminteraction{}) \rightarrow$ bool}\label{alg:extend-configurations}
\begin{algorithmic}[1]
\FOR{$\symtrail{} \in \text{sorted}_\syminteraction{}(\symworkingsample{})$}
    \IF{$\symtrail{}.\texttt{push\_and\_propagate}(\syminteraction{})$}
        \RETURN \textbf{true}
    \ENDIF
\ENDFOR
\RETURN \textbf{false}
\end{algorithmic}
\end{algorithm}

If no trail in~$\symworkingsample{}$ can accommodate~$\syminteraction{}$, a new trail is created and initialized with~$\syminteraction{}$.
If this also fails, the interaction is marked as infeasible and excluded from further consideration.
Once no uncovered interactions remain, the algorithm attempts to \emph{complete} all trails into full configurations using \cref{alg:complete-configurations}.

\begin{algorithm}
\caption{\texttt{Complete}$(\symworkingsample{}) \rightarrow$ bool}\label{alg:complete-configurations}
\begin{algorithmic}[1]
\FOR{$\symtrail{} \in \symworkingsample{}$}
    \IF{ $\neg \symtrail{}.\texttt{complete}()$}
        \RETURN \textbf{false}
    \ENDIF
\ENDFOR
\RETURN \textbf{true}
\end{algorithmic}
\end{algorithm}

If all trails complete successfully,~$\symworkingsample{}$ forms a valid solution. 
Otherwise, the algorithm continues iterating, revisiting interactions uncovered by the completion operation.
The complete high-level process is summarized in \cref{alg:initial-solution-heuristic}.

\begin{algorithm}
\caption{Initial solution heuristic (simplified)}
\label{alg:initial-solution-heuristic}
\begin{algorithmic}[1]
        \renewcommand{\COMMENT}[1]{\hfill\textit{// #1}}
\STATE $\symworkingsample{} \gets \emptyset$, $\symfeasibleinteractions{} \gets$ all interactions
\WHILE{true}
    \STATE $\symqueue{} \gets$ up to $\symtargetqueuesize{}$ uncovered interactions from $\symfeasibleinteractions{}$
    \IF{$\symqueue{} = \emptyset \wedge \texttt{Complete}(\symworkingsample{})$}
        \RETURN $\symworkingsample{}$
    \ENDIF
    \FOR{$\syminteraction{} \in sorted_\symworkingsample{}(\symqueue)$}
        \IF{not \texttt{ExtendConfs}$(\symworkingsample{}, \syminteraction{})$}
            \STATE Create empty trail $\symtrail{}$
            \IF{not $\symtrail{}.\texttt{push\_and\_propagate}(\syminteraction{})$}
                \STATE $\symfeasibleinteractions{}\gets \symfeasibleinteractions{}\setminus\{\syminteraction{}\}$ \COMMENT{Mark $\syminteraction{}$ as infeasible}
            \ELSE
                \STATE $\symworkingsample{} \gets \symworkingsample{} \cup \{\symtrail{}\}$
            \ENDIF
        \ENDIF
    \ENDFOR
\ENDWHILE
\end{algorithmic}
\end{algorithm}

\subsection{Implementation Details.}
Our implementation includes many low-level optimizations and other details for high performance; see \cref{sec:details-heuristic} and the accompanying source code for details.
The set \( \symworkingsample{} \) in \cref{alg:extend-configurations} is sorted to favor configurations with high overlap with \( \syminteraction{} \).
The set \( \symqueue{} \) in \cref{alg:initial-solution-heuristic} is implemented as priority queue sorted by compatibility with \( \symworkingsample{} \),
prioritizing interactions with few available candidate configurations; when $\symworkingsample{}$ changes, \( \symqueue{} \) is updated.
The target size $\symtargetqueuesize{}$ increases exponentially up to some bound.
\( \symqueue{} \) is populated with a random sample of uncovered interactions to leverage implicit coverage;
maintaining priorities over the full interaction set would require too much memory for large instances.
\( \symqueue{} \) thus serves as a dynamically maintained shortlist of promising candidates,
which is essential to avoid tracking coverage of each interaction and enumerating uncovered interactions too often,
both of which are infeasible for huge instances.

As the \emph{initial phase} of our algorithm, we apply the above heuristic multiple times, interleaved with the lower bound heuristic, to potentially yield improved solutions.
In the first application, we initialize $\symworkingsample{}$ using a greedy configuration that favors negated features, 
and initialize \( \symqueue{} \) with heuristically selected positive literal pairs.
In subsequent applications, we start with an empty $\symworkingsample{}$ and initialize \( \symqueue{} \) based on our lower bound heuristic; see~\cref{sec:initial-lower-bounds}.
Feasible interactions and learned clauses are cached to accelerate subsequent applications.

\subsection{Preprocessing.}
\label{sec:preprocessing}
Preprocessing \symsat{} formulas has become essential, as the raw CNF representations automatically generated in many real-world applications are often far from optimal and can typically be significantly reduced.
In \symsat{} preprocessing, a given formula $\symfeaturemodel{}$ is transformed into a new formula $\symfeaturemodel{}'$ that is equisatisfiable but usually more compact and easier to solve.
In this section, we briefly discuss how we preprocess a formula \symfeaturemodel{} such that we can map samples of the resulting $\symfeaturemodel{}'$ back to \symfeaturemodel{}, preserving their size and $t$-wise coverage.
Our implementation only preprocesses once, before the first feasible sample is computed, and all algorithms are then applied to the preprocessed model.

Various preprocessing techniques are used in practice in \symsat{} preprocessing;
however, not all are suitable for our problem, as some techniques preserve only equisatisfiability.
Consider, for instance, \emph{pure literal elimination}, which eliminates a variable that only occurs positively (or only negatively) by setting it to the corresponding value, also eliminating all clauses it appears in.
This is a safe operation on non-concrete features, but clearly must not be performed on concrete features.
Thus, our problem requires preprocessing that preserves logical equivalence with respect to the set of feasible concrete interactions.
Incremental \symsat{} solvers such as CaDiCaL~\cite{BiereFallerFazekasFleuryFroleyks-CAV24} support \emph{freezing} variables to protect them from logical changes during preprocessing, making their use safe in our setting if all concrete features are frozen.
We could thus use their preprocessing pipeline for our problem.
However, that also prohibits some preprocessing operations on concrete features that would be safe.
For instance, interactions involving equivalent features are themselves equivalent.
Consequently, we implemented a custom preprocessing pipeline, primarily following established techniques~\cite{biere2021preprocessing} but adapting them to our specific requirements.

We employ \emph{failed and equivalent literal detection}.
This can eliminate many implied interactions from $\symfeasibleinteractions{}$, as the interactions of equivalent literals are themselves equivalent.
This substitution is safe in our setting as long as at least $t$ concrete features are preserved; otherwise, $\symfeasibleinteractions{}$ would become empty.
We also employ \emph{bounded variable elimination} (BVE).
For \symsat{} solvers, BVE~\cite{een2005effective} is among the most effective preprocessing techniques.
However, it eliminates variables and only preserves satisfiability; in our case, it can only be safely applied to non-concrete features, whose interactions need not be covered.
Finally, we also employ \emph{clause vivification}~\cite{LI2020103197} and removal of subsumed clauses.
These standard techniques preserve logical equivalence and are thus safe to use in our context.

The implication graph, used for detecting equivalent literals, can also be exploited to identify implied interaction coverage, thereby further reducing $\symfeasibleinteractions{}$.
For example, from the implications $\ell_1 \rightarrow \ell_2$ and $\ell_2 \rightarrow \ell_3$, we can infer that covering $(\ell_1, \ell_2)$ implies coverage of both $(\ell_2, \ell_3)$ and $(\ell_1, \ell_3)$.
Thus, it suffices to retain only $(\ell_1, \ell_2)$ in $\symfeasibleinteractions{}$.
More generally, if $\UP(\{\ell_1,\ell_2\})$ contains some other interaction $\syminteraction{}$, $\syminteraction{}$ can also be removed from $\symfeasibleinteractions{}$.
We refer to the removal of such interactions as \emph{universe reduction}.
Although it does not alter the formula, it can significantly decrease the number of interactions that must be explicitly considered.

\section{Lower Bounds.}
\label{sec:lower-bounds}
After identifying the set of feasible interactions $\symfeasibleinteractions{}$ in the first application of our initial heuristic,
we introduce binary clauses as needed to ensure that UP detects all infeasible interactions, 
i.e., that if $\{\ell_1, \ell_2\}$ is an infeasible interaction, $\overline{\ell_2} \in \UP(\ell_1)$ and $\overline{\ell_1} \in \UP(\ell_2)$.
We then compute initial lower bounds on the number of required configurations as follows.
We consider feasible interactions to be vertices of a graph $G$, in which two interactions $\syminteraction{}, \syminteraction{}'$ are connected by an edge if there is no valid configuration $\symconfiguration{} \supseteq \syminteraction{} \cup \syminteraction{}'$.
Note that any clique $\symmaxexclusiveset{}$ of $G$ induces a lower bound, 
i.e., $|\symmaxexclusiveset{}| \leq |\symsample{}|$ for any sample $\symsample{}$ with pairwise coverage,
because each configuration $\symconfiguration{} \in \symsample{}$ can cover at most one interaction $\syminteraction{} \in \symmaxexclusiveset{}$.

Because $G$ is often very large and difficult to compute, we rely on the subgraph $G_2$ of $G$ which contains an edge iff $\UP(\syminteraction{} \cup \syminteraction{}') = \bot$.
By construction, most relevant graph operations on $G_2$ can be done using UP.
In particular, we never have to compute the edge set of this graph explicitly.
Because $G_2$ is a subgraph of $G$, its cliques also induce lower bounds.

\subsection{Initial Lower Bounds.}
\label{sec:initial-lower-bounds}
After the first application of the solution heuristic outlined in \cref{sec:initial-heuristic}, and after each application thereafter, we compute a clique on $G_2$ as follows.
During the initial heuristic, we track which interactions are the first to be inserted into each partial configuration in $\symworkingsample{}$;
such interactions are \emph{spawners}.
To compute an initial clique, we use a naive clique algorithm to find a maximal clique on $G_2[P]$,
where $P$ is either the set of all spawners up to this point or just the spawners during the last application of our initial heuristic.
In each step, a vertex is selected uniformly at random among all remaining candidates and added to the clique.
We repeat the process several times for each $P$;
the best clique found is kept, potentially updating the lower bound, and used as initial $\symqueue{}$ in the next application of the initial heuristic.

\subsection{Cut \& Price Bounds.}
After the initial phase, we apply a linear programming-based algorithm, which combines cut \& price and rounding techniques, to find cliques on $G_2 = (\symfeasibleinteractions{}, E_2)$ as lower bounds.
Here, we first give a high-level description of our approach before describing its components in more detail.

We use an 
integer programming (IP) formulation of the clique problem on $G_2 = (\symfeasibleinteractions{}, E_2)$,
with a variable $x_\syminteraction{} \in \{0,1\}$ for each interaction $\syminteraction{} \in \symfeasibleinteractions{}$ and a constraint for each independent set $\symindset{}$ of $G_2$ enforcing that at most one $\syminteraction{} \in \symindset{}$ is selected.
As working with the full set of variables would be practically infeasible due to the size of $\symfeasibleinteractions{}$, 
we work with a dynamic subset $\symfeasibleinteractions{}' \subseteq \symfeasibleinteractions{}$ of interactions.
Similarly, the set of constraints we use is induced by a dynamic subset $\symindsetfamily{}'$ of the independent sets $\symindsetfamily{}$ of $G_2$.

We then repeatedly solve the linear relaxation of our IP, using a greedy rounding scheme to obtain new, potentially better cliques from the relaxed solution.
We then strengthen or add constraints to cut off the current relaxed solution, or use pricing to introduce new interactions to $\symfeasibleinteractions{}'$, before solving the next relaxation.

\subsubsection{Greedy Rounding.}
After finding the optimal solution $x^*$ of the current relaxation, we round as follows.
Starting with an empty clique $\symmaxexclusiveset{}$, we iterate through all $\syminteraction{} \in \symfeasibleinteractions{}'$ with $x_\syminteraction{}^* > 0$ in order of non-increasing value $x_\syminteraction{}^*$ in the relaxation; 
we add $\syminteraction{}$ to $\symmaxexclusiveset{}$ if $\syminteraction{}$ is adjacent to all previously added $J \in \symmaxexclusiveset{}$.
Finally, we make the resulting clique maximal by adding interactions adjacent to all $J \in \symmaxexclusiveset{}$ from $\symfeasibleinteractions{}'$ chosen uniformly at random.
If that results in a better clique $\symmaxexclusiveset{}$, we record that clique.

\subsubsection{Constraints.}
Note that each (partial) configuration $\sympartialconfig{}$ with $\UP(\sympartialconfig{}) \neq \bot$ induces an independent set $\symindset{}(\sympartialconfig{}) = \{\syminteraction{} \in \symfeasibleinteractions{} \mid \syminteraction{} \subseteq \sympartialconfig{}\}$ of $G_2$.
We can thus turn (partial or complete) configurations 
into constraints and use trails to represent constraints internally.
We initialize $\symindsetfamily{}'$ with the best initial sample $\symsample{}$, ensuring that our LP relaxation solution has value at most $|\symsample{}|$.

\subsubsection{Variables and Pricing.}
We initialize $\symfeasibleinteractions{}'$ to contain the best clique on $G_2$ found in the initial phase,
and either all spawners or the spawners from the best run of the initial heuristic, depending on the sizes of those sets.
We mainly use the linear relaxation, where we have $x_\syminteraction{} \geq 0$ instead of $x_\syminteraction{} \in \{0,1\}$.
This relaxation has the following dual.
\begin{align*}
  \min{} \sum\limits_{\symindset{} \in \symindsetfamily{}} z_\symindset{}&\text{ s.t.}\\
    \forall \syminteraction{} \in \symfeasibleinteractions{}: \sum\limits_{\symindset{} \in \symindsetfamily{}, \syminteraction{} \in \symindset{}} z_\symindset{} &\geq 1,\\
    z_\symindset{} &\geq 0.
\end{align*}
Let $o^*$ be the objective value of the current relaxation.
For any given subset $\symfeasibleinteractions{}'$, $\lfloor o^* \rfloor$ is an upper bound on the size of any clique of $G_2[\symfeasibleinteractions{}']$.
We compute this bound, including some buffer for numerical errors before rounding down;
we detect that the currently found clique $\symmaxexclusiveset{}$ is optimal for $\symfeasibleinteractions{}'$ if $|\symmaxexclusiveset{}| = \lfloor o^* \rfloor$.

In that case, we continue by \emph{pricing}, i.e., searching for new interactions to add to $\symfeasibleinteractions{}'$ that have the potential to improve $x^*$, given the current $\symindsetfamily{}'$.
Before we describe the pricing in detail, we make the following crucial remark.
We internally represent the independent set $\symindset{}$ underlying some constraint using a trail, which represents a set of non-conflicting literals $\sympartialconfig{}$ closed under UP.
In our internal representation, we therefore already include some interactions $\syminteraction{} \notin \symfeasibleinteractions{}'$ in our constraints:
conceptually, $\syminteraction{}$ is included in $\symindset{}$ iff $\syminteraction{} \subseteq \sympartialconfig{}$, even if $\syminteraction{} \notin \symfeasibleinteractions{}'$.

Together with the primal solution $x^*$, we also obtain a solution $z^*$ of the dual; this solution assigns a weight $z_\symindset{} \geq 0$ to each independent set $\symindset{} \in \symindsetfamily{}'$.
This solution has $\sum_{\symindset{} \in \symindsetfamily{}'}z_\symindset{} = o^*$ due to strong duality and is feasible for the current dual, which contains a constraint for each $\syminteraction{} \in \symfeasibleinteractions{}'$.
Consider the dual problem that we obtain by extending the subset $\symfeasibleinteractions{}'$ to the full set of feasible interactions $\symfeasibleinteractions{}$.
If $z^*$ is feasible for this extended dual problem, i.e., if there is no $\syminteraction{} \in \symfeasibleinteractions{} \setminus \symfeasibleinteractions{}'$ with $\sum_{\symindset{} \ni \syminteraction{}} z_\symindset{} < 1$,
then, by strong duality, $\lfloor o^* \rfloor$ is an upper bound on the clique size of the entire graph $G_2$.

To find interactions that may potentially lead to better cliques, it thus suffices to find interactions $\syminteraction{}$ with $\sum_{\symindset{} \ni \syminteraction{}} z_\symindset{} < 1$.
Instead of iterating through the full set of interactions, we begin by pricing on smaller sets of interactions.
We begin by trying $P_a$, the set of all \emph{spawners} encountered during our initial heuristic.
If that yields no violated dual constraint, we next consider the set of all interactions that were taken from the priority queue $\symqueue{}$ and explicitly pushed to one of the trails during the last iteration of our initial heuristic.
If that still finds no violated dual constraint, we instead price a random sample of $\symfeasibleinteractions{}$.
Only if all previous steps fail, we fully enumerate and price $\symfeasibleinteractions{}$.
In any case, we apply a limit on the number of interactions we introduce at once.

\subsubsection{Cutting Planes.}
If we have not established optimality on $\symfeasibleinteractions{}'$ 
after attempting to obtain a better clique by rounding, 
we need a way to make progress 
beyond the current relaxed solution $x^*$.
A primary way 
is to \emph{tighten} the relaxation by adding new constraints violated by $x^*$ 
or by strengthening existing ones.

To this end, we first scan $x^*$ for \emph{violated non-edges}, i.e., pairs of interactions $\syminteraction{}, \syminteraction{}'$ with $\UP(\syminteraction{} \cup \syminteraction{}') \neq \bot$ and $x_\syminteraction{}^* + x_{\syminteraction{}'}^* > 1$.
During each round of tightening, we generate a list of all violated non-edges.
We forbid them by strengthening constraints or adding new ones.
To strengthen an existing constraint, we scan through incomplete configurations in our constraints and check,
for each configuration in which none of the involved literals are false, whether we can push both interactions to the corresponding trail;
otherwise, we have to create new trails and corresponding constraints.
We always prefer strengthening existing constraints over introducing new constraints.
We take care to treat each violated non-edge at most once;
furthermore, to reduce the number of constraints generated, 
we generate at most one new constraint for each interaction $\syminteraction{}$ involved in violated non-edges in the current round of tightening.

If the relaxed solution $x^*$ has no violated non-edges, we have several options to continue.
Firstly, we can run pricing; while this does not tighten the relaxation, it may still take us away from the current $x^*$ to one that has violated non-edges.
It can also help avoiding focusing too much on some subset $\symfeasibleinteractions{}'$, which may not contain the best clique after all;
we thus run pricing after every \num{40} iterations that did not encounter violated non-edges.

If we do not opt for pricing, we first use the following greedy strengthening approach.
For each current constraint $\symindset{}$, we iterate through all interactions $\syminteraction{} \in \symfeasibleinteractions{}'$ with non-zero $x_\syminteraction{}^*$ in order of non-increasing $x_\syminteraction{}^*$, 
summing up the weights $x_\syminteraction{}^*$ of all $\syminteraction{}$ that do not contradict $\symindset{}$ without actually changing $\symindset{}$.
If the resulting value indicates that $\symindset{}$ could become a violated constraint, we attempt to actually expand $\symindset{}$.
This is done by again iterating over $\syminteraction{}$ in the same order, this time greedily pushing each interaction $\syminteraction{}$ into $\symindset{}$ unless that leads to a conflict;
we do not undo these changes, even if we do not end up with a violated constraint.
If at least one violated constraint resulted from the strengthening, we continue by solving the new relaxation.

Otherwise, we attempt to generate a new violated constraint using a similar greedy approach.
Again considering the list of interactions $\syminteraction{}$ with non-zero $x_\syminteraction{}^*$ in non-increasing order, from each index in that list we start to construct a potential new independent set as follows.
Beginning with the starting index, we push interactions to an initially empty propagator, unless that causes a conflict; on reaching the end of the list, we resume at the start.
We record all independent sets that would result in violated constraints, together with their total right-hand side value;
we add up to \num{10} most strongly violated constraints to $\symindsetfamily{}'$.
If we generated at least one violated constraint, we continue by solving the new relaxation.

As a final heuristic attempt at finding violated constraints, we re-run the initial heuristic, starting with the best clique found so far.
We then check to see if any of the resulting configurations can be used as violated constraint.
If this fails as well, we resort to pricing, including a full pricing pass through $\symfeasibleinteractions{}$ if necessary; 
if pricing also does not yield new interactions, our algorithm gets stuck on the current relaxation $x^*$, and we abort the search.

\section{Main Algorithm.}
\label{sec:sammy}
Our algorithm, Sammy, follows up on the initial heuristic with a parallel Large Neighborhood Search (LNS) heuristic that builds on the SampLNS algorithm~\cite{10.1145/3712193}.
As outlined in \cref{alg:sammy}, we begin by preprocessing the input formula~$\symfeaturemodel{}$ using the techniques described in \cref{sec:preprocessing}, yielding a simplified formula~$\symfeaturemodel{}'$.
We then compute an initial heuristic solution~$\symsample{}$ and the corresponding set of covered interactions~$\symfeasibleinteractions{}$, as described in \cref{sec:initial-heuristic}.

\begin{algorithm}
  \caption{Sammy$(\symfeaturemodel{}, \symnumthreads{})$}
\label{alg:sammy}
\begin{algorithmic}[1]
    \renewcommand{\COMMENT}[1]{\hfill\textit{// #1}}
    \STATE $\symfeaturemodel{}' \gets \texttt{Preprocess}(\symfeaturemodel{})$ \COMMENT{\cref{sec:preprocessing}}
    \STATE $\symsample{}, \symfeasibleinteractions{} \gets$ initial sample and interactions (\cref{sec:initial-heuristic})
    \STATE $\symmaxexclusiveset{}, \symsample{} \gets$ repeated initial LB/sample (\cref{sec:initial-lower-bounds})
    \STATE $\symdestroyparameters{} \gets$ initial destroy parameters
    \STATE \textit{// Shared: } $\symsample{}, \symdestroyparameters{}, \symmaxexclusiveset{}, \symfeaturemodel{}'$
    \STATE \textbf{spawn} lower-bound worker updating $\symmaxexclusiveset{}$ (\cref{sec:lower-bounds})
    \IF{$|\symfeasibleinteractions{}|\leq \num{100000}$}
    \STATE \textbf{spawn} full solution worker (\cref{sec:full-prob-solver})
    \ENDIF
    \WHILE{$|\symmaxexclusiveset{}| < |\symsample{}|$}
        \STATE Initialize channel \texttt{result}
        \FOR{$i = 1$ to $\symnumthreads{}$}
            \STATE \textit{// Speculative Parallelism}
            \STATE \textbf{spawn} \texttt{DestroyAndRepair}$(\symsample{}, \symfeaturemodel{}', \symdestroyparameters{}, \texttt{result})$
        \ENDFOR
        \STATE Wait for first $\widehat{\symsample{}}$ from \texttt{result}
        \STATE Signal all threads from line~12 to terminate
        \STATE $\symsample{} \gets \widehat{\symsample{}}$
    \ENDWHILE
    \STATE Postprocess $\symsample{}$ to map back to original features
    \RETURN $\symsample{}, |\symmaxexclusiveset{}|$ \COMMENT{Solution and lower bound}
\end{algorithmic}
\end{algorithm}

\begin{algorithm}
\caption{\texttt{DestroyAndRepair}$(\symsample{}, \symfeaturemodel{}', \symdestroyparameters{}, \texttt{result})$}
\label{alg:destroy-and-repair}
\begin{algorithmic}[1]
    \renewcommand{\COMMENT}[1]{\hfill\textit{// #1}}
    \STATE \textit{// Replace a $\symsample{}_d \subseteq \symsample{}$ with $\symsample{}_r $ such that $|\symsample{}_r | < |\symsample{}_d |$}
    \WHILE{not terminated}
        \STATE $\symsample{}' \gets \texttt{Destroy}(\symsample{}, \symdestroyparameters{})$ \COMMENT{$\symsample{}' \subsetneq \symsample{}$}
        \STATE $\symsample{}_d \gets \symsample{} \setminus \symsample{}'$ \COMMENT{Destroyed configurations}
        \STATE $\symfeasibleinteractions{}' \gets \symfeasibleinteractions{} \setminus \symfeasibleinteractions{}(\symsample{}')$ \COMMENT{Missing interactions}
        \STATE $\symmaxexclusiveset{}' \gets \texttt{MutExclSet}(\symfeasibleinteractions{}', \symsample{} \setminus \symsample{}', \symfeaturemodel{}')$ \COMMENT{\cref{sec:lower-bounds}}
        \IF{$|\symmaxexclusiveset{}'| = |\symsample{}'|$}
            \STATE \textbf{continue} \COMMENT{Already optimal}
        \ENDIF
        \STATE Select random repair parameters $\symrepairparameters{}$
        \STATE $\symsample{}_r \gets \texttt{Repair}(\symfeaturemodel{}', \symfeasibleinteractions{}', \symmaxexclusiveset{}', |\symsample{}_d|-1, \symrepairparameters{})$
        \STATE Update $\symdestroyparameters{}$ based on performance
        \IF{$|\symsample{}_r | < |\symsample{}_d |$}
            \STATE \texttt{result} $\gets \symsample{}_r \cup \symsample{}'$
            \STATE \textbf{break}
        \ENDIF
    \ENDWHILE
\end{algorithmic}
\end{algorithm}

We spawn two background threads: one continuously maximizes and updates the best mutually exclusive set $\symmaxexclusiveset{}$ (see \cref{sec:lower-bounds});
the other attempts to compute a complete solution without using LNS, which typically succeeds only on small instances, and, thus, is only used if $|\symfeasibleinteractions{}|\leq \num{100000}$.
The main loop runs while the current lower bound~$|\symmaxexclusiveset{}|$ is smaller than the size of the sample~$|\symsample{}|$.

In each iteration, we initialize a result channel and spawn $\symnumthreads{}$ threads, each executing the \texttt{DestroyAndRepair} procedure (\cref{alg:destroy-and-repair}).
Each thread attempts to remove a subset of configurations from~$\symsample{}$ via the \texttt{Destroy} function.
The size of the destroyed set~$\symsample{}_d $ is governed by the destroy parameters~$\symdestroyparameters{}$, which are updated based on the performance of the repair step, similar to SampLNS~\cite{10.1145/3712193};
for more details, see \cref{sec:a-lns}.
The interactions that are no longer covered are identified as~$\symfeasibleinteractions{}'$, and a new exclusive interaction set~$\symmaxexclusiveset{}'$ is computed for~$\symfeasibleinteractions{}'$, providing a lower bound and symmetry breaker for the subproblem.
This follows the same methodology described in \cref{sec:lower-bounds}, but restricted to the reduced interaction set~$\symfeasibleinteractions{}'$.
If~$|\symmaxexclusiveset{}'| = |\symsample{}'|$, then the subproblem is already optimally solved and the thread skips the repair.

Otherwise, the thread randomly selects repair parameters~$\symrepairparameters{}$ and invokes the \texttt{Repair} procedure, trying to find a new sample~$\symsample{}_r $ that covers all interactions in~$\symfeasibleinteractions{}'$ such that $|\symsample{}_r | \leq |\symsample{}_d |-1$.
These randomized parameters may trigger highly diverse repair strategies, as detailed in \cref{sec:repair-strategies}.
If the resulting set~$\symsample{}_r $ is smaller than~$\symsample{}_d $, we communicate the improved sample~$\widehat{\symsample{}} = \symsample{}_r \cup \symsample{}'$.

Upon receiving the first improved sample~$\widehat{\symsample{}}$, all remaining threads are interrupted and $\symsample{}$ is updated.
This process repeats until~$|\symsample{}| = |\symmaxexclusiveset{}|$, in which case the sample is provably optimal.
Alternatively, the algorithm may terminate due to a time limit or because the full solution worker finds an optimal sample (not shown in the pseudocode for clarity).
Finally, the sample~$\symsample{}$ is postprocessed to map it back to the original feature space, and the sample and lower bound are returned.

\subsection{Repair Strategies.}
\label{sec:repair-strategies}
\label{sec:exact-algorithm}
Here, we discuss our different repair strategies.
In any case, we obtain as input a formula $\symfeaturemodel{}'$, a set $\symfeasibleinteractions{}'$ of valid interactions, 
a mutually exclusive set $\symmaxexclusiveset{}' \subseteq \symfeasibleinteractions{}'$ and a sample $\symsample{}_d $ covering $\symfeasibleinteractions{}$',
and our goal is to find such a sample of size at most $\symtargetrepairsize := |\symsample{}_d | - 1$, or to prove that no such sample exists.

\subsubsection{Core Model.}
At the core of the algorithm is a \symsat{} model of \pisp.
Similar to existing approaches like~\cite{10.1145/3712193}, which used a CP-SAT model,
we model the existence of a sample covering $\symfeasibleinteractions{}'$ with at most $\symtargetrepairsize{}$ configurations as a \symsat{} formula.
For each $1 \leq i \leq \symtargetrepairsize{}$, we have a variable $x_j^i$ for each feature $x_j \in \symfeatureset{}$;
each clause $\gamma$ in $\symfeaturemodel{}$ results in $\symtargetrepairsize{}$ clauses $\gamma^i$ in our model by replacing all $x_j$ in $\gamma$ by $x_j^i$.
Essentially, our \symsat{} formula contains $\symtargetrepairsize{}$ independent copies of $\symfeaturemodel{}$,
allowing us to encode $\symtargetrepairsize{}$ independent configurations.

Furthermore, for each $1 \leq i \leq \symtargetrepairsize{}$ and each interaction $\syminteraction{} \in \symfeasibleinteractions{}'$,
we have a variable $y_\syminteraction{}^i$ indicating whether $\syminteraction{}$ is covered by the $i$th copy of $\symfeaturemodel{}$.
For $\syminteraction{} = \{\ell_1, \ell_2\}$, we add the clauses $y_\syminteraction{}^i \vee \overline{\ell_1^i} \vee \overline{\ell_2^i}$, $\overline{y_\syminteraction{}^i} \vee \ell_1^i$ and $\overline{y_\syminteraction{}^i} \vee \ell_2^i$,
where $\ell_h^i$ is the literal obtained by replacing any $x_j$ by $x_j^i$ in the literal $\ell_h$.
We enforce that each $\syminteraction{} \in \symfeasibleinteractions{}'$ is covered by introducing the clause $\bigvee_{i=1}^\symtargetrepairsize{} y_\syminteraction{}^i$.

Finally, for each literal $\ell \in \bigcup_{\syminteraction{} \in \symfeasibleinteractions{}'} \syminteraction{}$, 
we know that at least one configuration must set $\ell$ to true;
we introduce clauses $\bigvee_{i=1}^\symtargetrepairsize{} \ell^i$ ensuring that each literal occurring in $\symfeasibleinteractions{}'$ is true in at least one configuration.

We use the mutually exclusive set $\symmaxexclusiveset{}' = \{U_1, \ldots, U_m\}$ to break symmetries.
Because each $U_j$ must be in a different configuration, we can fix $U_i$ to be covered by the $i$th copy of $\symfeaturemodel{}$ for all $1 \leq i \leq m$.
To do this, we compute $\UP(U_j)$, removing variables and satisfied clauses and shortening remaining clauses as appropriate.
If all but one literal $\ell \in \syminteraction{}$ are fixed to true in the $i$th copy, we also replace $y_\syminteraction{}^i$ by $\ell$.

\subsubsection{Incrementality.}
The size of our \symsat{} formula is typically dominated by the large number of interactions $|\symfeasibleinteractions{}'|$.
Because many interactions are implicitly covered as a byproduct of covering other interactions,
we consider the following four strategies for handling $\symfeasibleinteractions{}'$.
In addition to starting with the full set $\symfeasibleinteractions{}'$ and solving the formula once \emph{non-incrementally},
we have three incremental strategies called \emph{simple incremental}, \emph{greedy incremental} and \emph{alternating LB-UB}.
Each incremental strategy maintains a growing subset $\symfeasibleinteractions{}'' \subseteq \symfeasibleinteractions{}'$ that is part of its \symsat{} formulation;
the difference in the strategies lies in how that set is grown and when and how the \symsat{} solver is called.
Note that adding interactions to $\symfeasibleinteractions{}''$ can be implemented by adding variables and clauses to the \symsat{} formulation,
making use of modern incremental \symsat{} solvers to reduce the runtime of repeated \symsat{} calls.

\paragraph{Simple Incremental Strategy.}
Starting with $\symfeasibleinteractions{}'' = \symmaxexclusiveset{}'$, in each iteration, we first solve the \symsat{} model with the current $\symfeasibleinteractions{}''$.
If all interactions from $\symfeasibleinteractions{}'$ are covered or the formula is \symunsat{}, we are done; otherwise, we compute the uncovered interactions $\mathcal{J}$.
If $\mathcal{J}$ is large compared to $\symfeasibleinteractions{}''$, we add a random subset of $\mathcal{J}$ to $\symfeasibleinteractions{}''$; otherwise, we add all of $\mathcal{J}$.
At least \qty{2.5}{\percent} of $\symfeasibleinteractions{}'$ is added in each iteration; random covered interactions are used if needed.
If $|\symfeasibleinteractions{}''|$ grows to beyond $0.33|\symfeasibleinteractions{}'|$, we instead extend $\symfeasibleinteractions{}''$ to $\symfeasibleinteractions{}'$.

\paragraph{Greedy Incremental Strategy.}
We can, of course, reuse our initial heuristic (\cref{alg:initial-solution-heuristic}) as a repair strategy.
However, a variant in which the partial configurations only grow can also be used to select $\symfeasibleinteractions{}'' \subseteq \symfeasibleinteractions{}'$ by collecting all explicitly covered interactions, 
i.e., those taken from $\symqueue{}$, in $\symfeasibleinteractions{}''$.
If the heuristic succeeds in covering all interactions using at most~$\symtargetrepairsize{}$ configurations, we can return the solution directly without a \symsat{} call.
Otherwise, as soon as \cref{alg:initial-solution-heuristic} would create the $(\symtargetrepairsize{}{+}1)$-st trail, we solve the \symsat{} formula to verify whether $\symfeasibleinteractions{}''$ truly cannot be covered using only~$\symtargetrepairsize{}$ configurations,
or whether the existing trails could be modified accordingly.

If the formula is unsatisfiable, we obtain a proof that~$\symfeasibleinteractions{}'$ cannot be covered with~$\symtargetrepairsize{}$ configurations and return.
If a satisfying assignment is found, we reassign the interactions in~$\symfeasibleinteractions{}''$ to trails accordingly and continue with our heuristic, 
resetting literals not implied by $\symfeasibleinteractions{}''$ in the process.
If the returned assignment covers all of~$\symfeasibleinteractions{}'$, we can immediately return it as a valid solution.

To improve convergence, before the \symsat{} call, we also add all uncovered interactions that fit into at most a single trail to~$\symfeasibleinteractions{}''$.
If this still does not increase the size of~$\symfeasibleinteractions{}''$ sufficiently compared to the previous \symsat{} call, we include additional interactions, prioritizing those with fewer candidate configurations.
As the incremental approach does not allow removing interactions from~$\symfeasibleinteractions{}''$, we treat all these added interactions as explicitly covered afterwards.

Since each interaction in~$\symmaxexclusiveset{}'$ requires its own trail, we directly initialize the trails with these interactions.
Moreover, the interactions in~$\symfeasibleinteractions{}''$, especially those that triggered the creation of new trails, 
also serve as candidates for constructing larger mutually exclusive sets $\symmaxexclusiveset{}'$, a property exploited in the next strategy.

\paragraph{Alternating LB-UB Strategy.}
This strategy behaves like \emph{greedy incremental}, but attempts to improve the lower bound $|\symmaxexclusiveset{}'|$ before each of our \symsat{} calls to improve the symmetry breaking,
by running an incremental variant of our cut, price \& round scheme to find such sets for a few iterations, focusing on the interactions in $\symfeasibleinteractions{}''$.
If this succeeds, we may have to recreate our \symsat{} model instead of simply adding to it in order to utilize the larger symmetry breaker. 
If necessary, we can also include interactions from $\symfeasibleinteractions{}' \setminus \symfeasibleinteractions{}''$ in this search, extending $\symfeasibleinteractions{}''$  
if we find a larger mutually exclusive set that includes some interaction not yet in $\symfeasibleinteractions{}''$.

\subsection{Full Problem Solver.}
\label{sec:full-prob-solver}
We also apply a variant of our alternating LB-UB strategy, which supports an incrementally growing number of configurations
instead of a fixed upper bound $\symtargetrepairsize{}$, to the entire problem instead of a repair subproblem, by setting $\symfeasibleinteractions{}' = \symfeasibleinteractions{}$.
That yields an algorithm that can find exact solutions and lower bounds that are not necessarily based on mutually exclusive sets,
thus allowing us to find some optimal solutions even if the bound induced by mutually exclusive sets does not match the minimum sample size.

\subsection{Comparison to SampLNS.}
Aside from preprocessing and a new initial heuristic, Sammy differs from SampLNS by using a different approach to find lower bounds.
Although both algorithms mostly rely on mutually exclusive sets as certificates for lower bounds, SampLNS only considers the feasibility of interactions for this purpose.
In other words, SampLNS considers two interactions $\{\ell_1, \ell_2\}, \{\iota_1, \iota_2\}$ to be mutually exclusive if one of $\{\ell_i, \iota_j\}, i, j \in \{1,2\}$ is an infeasible interaction.
In contrast to this, Sammy considers two interactions mutually exclusive if $\UP(\{\ell_1,\ell_2,\iota_1,\iota_2\}) = \bot$.
Because we learn binary clauses for infeasible interactions after the set of feasible interactions is determined,
any pair of interactions that is considered mutually exclusive by SampLNS is also mutually exclusive for Sammy, while the converse does not hold.
Furthermore, unlike SampLNS, which uses an LNS strategy to find better mutually exclusive sets, we employ the cut, price \& round-based approach outlined in \cref{sec:lower-bounds} for that purpose.

Another important difference is the way in which the repair subproblems are handled.
SampLNS considers one repair subproblem at a time using a CP-SAT formulation, which is solved using a solver that itself uses a parallel portfolio of algorithms.
In contrast to that, each LNS worker thread in Sammy creates individual repair subproblems, solving them using single-threaded SAT solvers
using different strategies for both the repair and destroy operations.

\section{Experiments.}
\label{sec:experiments}
In this section, we empirically analyze our algorithm and compare it to the state of the art.
In particular, we design experiments to answer the following research questions.
\begin{description}
  \item[RQ1] What is the impact of simplification on instance size and performance?
  \item[RQ2] How does our initial heuristic compare to the state of the art with regards to scalability and quality?
  \item[RQ3] How much does our LNS approach improve on scalability, solution quality, and bounds compared to SampLNS?
  \item[RQ4] Are there real-world instances with a gap between the maximum mutually exclusive set and the minimum sample size?
\end{description}

We implemented our algorithm; an open-source implementation called \emph{Sammy} is publicly available\footnote{\url{https://doi.org/10.5281/zenodo.17123426}}.
To address RQ1--RQ4, we run Sammy and existing implementations on a diverse set of benchmark instances with a wide range of sizes.
We use the instance set encompassing 47 small to medium-sized instances also used as benchmark in~\cite{10.1145/3712193}, 
but extend it by also adding 8 large real-world instances.
The resulting \emph{benchmark set} of 55 instances contains instances from various domains,
including automotive software, finance, e-commerce, system programs, communication and gaming.
For some further experiments on a superset of \num{1148} real-world instances, see \cref{sec:extra-experments-and-table}.

On each instance from the benchmark set, we run our algorithm, as well as the following algorithms, five times each:
YASA~\cite{KTS+:VaMoS20}, IncLing~\cite{AKT+:GPCE16}, Chvátal~\cite{C:MOR79,JHF:MODELS11} and ICPL~\cite{MHF:SPLC12},
all of which are well-known heuristics for $t$-wise interaction sampling that have been previously evaluated and have publicly available, free implementations.
Furthermore, we also compare our approach to SampLNS with initial solutions from YASA as presented in~\cite{10.1145/3712193}.
Each of the algorithm is given a time limit of \qty{1}{h} and a memory limit of \qty{93}{GiB};
we relax the memory limit for RQ1 on the largest instances, which may require more memory without simplification.


Experiments were run on a machine with an AMD Ryzen 9 7900 CPU and \qty{96}{GiB} of DDR5 RAM with Ubuntu 24.04.2.
Code was written in \CC 17 and com\-piled with clang 18.1.3.
As LP/IP solver, we use Gurobi 12.0.2; as \symsat{} solvers, we use kissat 4.0.3-rc1,
CaDiCaL 2.1.3, cryptominisat 5.11.11 and Lingeling 1.0.0.

\subsection{Impact of Simplification.}
To determine the impact of simplification and answer RQ1, we ran our algorithm with and without simplification.
\Cref{fig:result-simplification-features} shows a summary of the impact of simplification on the benchmark set for several size measures of our instances.
We see a significant reduction of instance complexity across most size metrics;
in particular, the number of interactions that we have to explicitly cover is significantly reduced by both simplification and universe reduction,
in many cases to below \qty{25}{\percent} of the original number; this leads to a significant reduction in memory requirements for the large instances.
As even the time for finding the first solution is typically reduced by simplification (accounting for simplification time),
and only increases mildly in rare cases, for the remaining discussion, we always applied simplification.
\begin{figure}
  \centering
  \includegraphics{./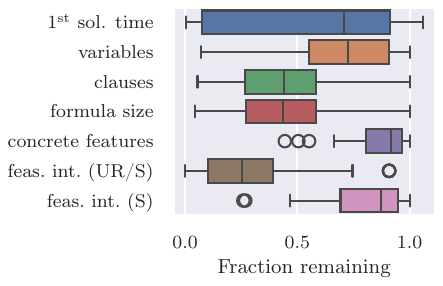}
  \caption{The fraction of each parameter remaining after simplification and universe reduction on the 55 instances from the benchmark set.
           The \emph{time to first solution} bar only includes instances with non-negligible time (at least \qty{0.5}{s}) to find the first solution.
           The \emph{number of feasible interactions} bars show the remaining feasible interactions after just simplification (S) or simplification and universe reduction (UR/S).}
  \label{fig:result-simplification-features}
\end{figure}

\subsection{Runtime and Solution Quality.}
\Cref{fig:result-overview-solution-quality} shows the solution quality 
for the \num{55} instances of the benchmark set, as well as the required runtime.
The solution quality is relative to the best lower bounds achieved by Sammy;
even its worst run on each instance produced a lower bound that was at least as good as the best bound ever reached by SampLNS,
the only other approach capable of producing lower bounds.
In total, we produced better lower bounds than SampLNS on \num{20} of the \num{55} instances;
\cref{tab:instance_summary} shows a table of the instances with the bounds and runtimes achieved by Sammy and SampLNS.

\begin{figure*}
  \centering
  \includegraphics[width=.99\textwidth]{./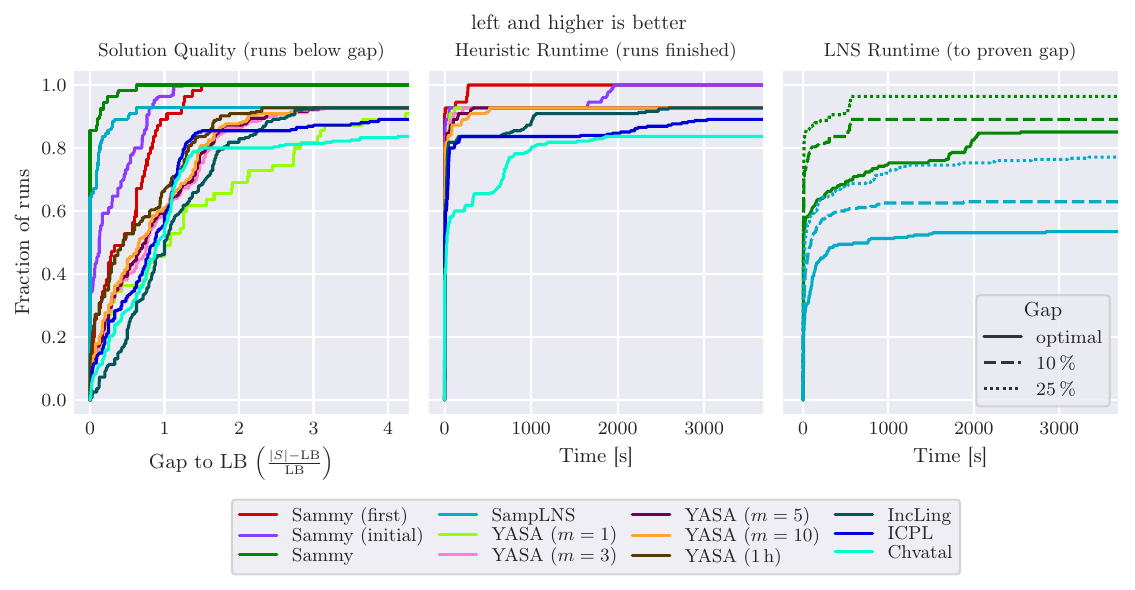}
  \vspace{-1mm}
  \caption{Performance plots showing the solution quality of all algorithms (relative to the best lower bound found by any algorithm),
           the runtime of the initial heuristics and the time taken by the LNS approaches to
           find solutions that they can prove to be within a certain gap to a minimum sample.
           Sammy (first) refers to the first sample found, and Sammy (initial) refers to the entire initial phase.}
  \label{fig:result-overview-solution-quality}
\end{figure*}

To answer RQ2, we observe that the four largest instances in our benchmark set, \texttt{Automotive02\_V0[1-4]}, could only be solved by our implementation;
even with an extended time limit of \qty{3}{h}, previous approaches did not produce a solution for these instances.
Additionally, we see that the average solution quality of our initial solution is better than other approaches that do not require an initial solution to be given,
despite the fact that the other approaches were given up to \qty{1}{h} of runtime,
while our initial heuristic required a maximum of \qty{54}{s} on any instance that was also solved by any other approach.

Regarding RQ3, we find that, aside from solving more instances to provable optimality (\qty{85}{\percent} vs.\ \qty{58}{\percent}) 
and finding better lower bounds for \qty{38}{\percent} of instances, Sammy is also significantly faster than SampLNS.
We also observe that the solutions and bounds we produce are fairly robust across multiple runs of our algorithm.
Only in \num{4} instances from the benchmark set did we observe any deviation in the lower bound value resulting from multiple runs;
the same holds true, but for different instances, for the sample size.
The largest relative gap between two lower bounds was recorded for instance \texttt{Violet} (\num{16} vs.\ \num{17}),
and the largest gap between two sample sizes was recorded for instance \texttt{BattleOfTanks} (\num{283} vs.\ \num{301});
see \cref{tab:instance_summary} for details.

\subsection{Gaps.}
Regarding RQ4, we find \num{16} instances on which our cut, price \& round approach proves that its mutually exclusive set $\symmaxexclusiveset{}$ is maximum,
at least on our subgraph $G_2$, but where we find a provably optimal sample $\symsample{}$ with $|\symsample{}| > |\symmaxexclusiveset{}|$.
In those cases, optimality was proved using one of the \symsat{}-based approaches on the full problem,
either by the exact worker or by eventually removing all configurations in the LNS destroy operation;
the largest gap, both in absolute and relative terms, is a gap between $|\symmaxexclusiveset{}| = 5$ and $|\symsample{}| = 8$ 
for the instances \texttt{FeatureIDE}, \texttt{APL-Model} and \texttt{TightVNC}.

\subsection{Threats to Validity.}
One threat to the validity of findings from empirical analysis of algorithms is the potential for implementation errors.
Fortunately, we have a large variety of existing, well-tested implementations that we can compare our results to.
We verified using mostly independent checking code that each configuration we produced satisfies the original formula $\symfeaturemodel{}$.
We also counted the number of interactions covered for each instance and verify that this number matches between all our simplified and non-simplified runs,
as well as all runs of SampLNS.
Moreover, we verified that each interaction in a mutually exclusive set is among the covered interactions and,
using a simple \symsat{} model, that each reported such set is actually mutually exclusive.

An additional threat arises from the nondeterminism of our algorithm: 
even if we fix all random seeds, the parallelism inherent in our portfolio-based approach introduces nondeterministic behavior;
subproblems considered in an LNS iteration can depend on a race between different solvers on different subproblems in previous iterations.
It is therefore theoretically possible that a lucky or unlucky run is significantly faster or slower or produces significantly better or worse solutions than what we observed in our experiments.
As usual with empirical analysis, despite running our algorithm on a broad set of instances, we also cannot be sure our conclusions generalize to all types of real-world instances.

\section{Conclusion.}
\label{sec:conclusion}
In addition to new theoretical insights into the problem complexity,
we significantly improved the state of the art for solving \pisp{}-instances of all sizes,
considering both lower and upper bounds. Several open questions remain. 
On the theoretical side, it would be interesting to close the remaining gap,
for instance by proving $\P^{\NP[\log]}$-hardness.  On the practical side,
scaling our approaches to $t \geq 3$ and dealing with the huge increase in the
resulting number of interactions is a major open question.
Furthermore, in many applications, the problem of maximizing coverage under
sample size constraints is also of interest, in particular if testing each
configuration incurs considerable costs.

\clearpage
\appendix
\section{Proofs of Complexity Results.}
\label{sec:proofs-complexity}
In this section, we provide the full proofs of the complexity results omitted from \cref{sec:complexity}.

\subsection{Logarithmic Oracle Queries.}
We first prove that logarithmically many queries to a \symsat{} oracle suffice.
\oraclelogqueries*

\begin{proof}
    We establish this theorem by giving an algorithm.
    Its basic idea is to first establish the precise number of
    feasible interactions of the given formula $\symfeaturemodel{}$ and concrete feature set $\symconcretefeatureset{}$ using logarithmically many queries.
    If we know the precise size of the universe,
    a single query to $\symsatoracle{}$ suffices to decide whether
    $\symtargetsamplesize{}$ configurations suffice to cover all feasible interactions.

    The total number of concrete interactions is $M = 2^t{|\symconcretefeatureset{}| \choose t} \in \mathcal{O}(|\symconcretefeatureset{}|^t)$.
    Therefore, the number of feasible concrete interactions is somewhere between $0$ and $M$.
    If we can model the existence of at least $q$ feasible concrete interactions
    as a query $A_1(q)$ to $\symsatoracle{}$, binary search allows us to find the number $M'$
    of feasible interactions using $\mathcal{O}(\log M) = \mathcal{O}(\log |\symconcretefeatureset{}|)$ queries.

    If $M' = 0$, we know the answer is yes.
    Otherwise, a single additional query $A_2(M', \symtargetsamplesize{})$ to $\symsatoracle{}$ then determines whether it is possible to cover at least $M'$ concrete interactions with at most $\symtargetsamplesize{}$ configurations, thus answering the decision problem.

    It remains to show that we can indeed encode $A_1$ and $A_2$ as
    polynomial-size \symsat{} queries.
    $A_1(q)$ can encoded in a formula $\psi_q$ as follows.

    For each of the $M$ possible concrete interactions $\syminteraction{} = \{\ell_{i_1}, \ldots, \ell_{i_t}\}$, we introduce a variable $x_{\syminteraction{}}$.
    Furthermore, for each $\syminteraction{}$, we add to $\psi_q$ a copy of $\symfeaturemodel{}$ on fresh variables, in which the literals from $\syminteraction{}$ are fixed to true, removing satisfied clauses and removing falsified literals from not-yet-satisfied clauses as appropriate.
    We relativize each of the remaining clauses by adding $\overline{x_\syminteraction{}}$ to it;
    observe that $x_\syminteraction{}$ can only be set to true if the copy of $\symfeaturemodel{}$ corresponding to $\syminteraction{}$ is satisfied, i.e., iff $\syminteraction{}$ is a feasible concrete interaction.
    Finally, we add auxiliary variables and clauses that ensure that,
    in any satisfying assignment of the resulting formula, at least $q$ of the variables $x_\syminteraction{}$ are set to true; there are several polynomial-size constructions to achieve this~\cite{bfabc752983f45ce8864a85a1623502b}.

    To encode $A_2(M', \symtargetsamplesize{})$ as a formula $\sigma_\symtargetsamplesize{}$, we begin with $\symtargetsamplesize{}$ copies of $\symfeaturemodel{}$ on fresh variables, allowing $\symsatoracle{}$ to construct $\symtargetsamplesize{}$ independent feasible assignments of $\symfeaturemodel{}$.
    For each interaction $\syminteraction{}$, we introduce variables $x_{\syminteraction{},1}, \ldots, x_{\syminteraction{}, \symtargetsamplesize{}}$, one for each of the $\symtargetsamplesize{}$ copies of $\symfeaturemodel{}$, as well as an additional variable $x_\syminteraction{}$.
    Variable $x_{\syminteraction{},j}$ indicates whether $\syminteraction{}$ is covered by the assignment to the $j$th copy of $\symfeaturemodel{}$, and $x_\syminteraction{}$ indicates whether $\syminteraction{}$ is covered by any assignment.
    To ensure that $x_{\syminteraction{},j}$ takes on the correct value for $\syminteraction{} = \{\ell_1, \ldots, \ell_t\}$, we add clauses $\overline{x_{\syminteraction{},j}} \vee \ell_i^j$ for every $\ell_i \in \syminteraction{}$
    as well as a clause $x_{\syminteraction{},j} \vee \bigvee_{\ell_i \in \syminteraction{}} \overline{\ell_i^j}$;
    here, $\ell_i^j$ denotes the literal corresponding to $\ell_i$ in the $j$th copy of $\symfeaturemodel{}$.
    To ensure that $x_\syminteraction{}$ takes on the correct value, we add clauses $\overline{x_{\syminteraction{},j}} \vee x_\syminteraction{}$ for all $j$ and $\overline{x_\syminteraction{}} \vee \bigvee_{j = 1}^\symtargetsamplesize{} x_{\syminteraction{},j}$.
    This again allows us to add auxiliary variables and clauses to ensure that at least $q$ of the variables $x_\syminteraction{}$ are set to true in any satisfying assignment.
\end{proof}

\subsection{BH-Hardness.}
\bhhardness*

\begin{proof}
Recall that $\BH = \bigcup_{i \in \mathbb{N}} \BH_i$, where $\BH_1 = \NP$, and each subsequent level $\BH_i$ is 
\[\BH_i = \begin{cases}
    \{ A \cap B \mid A \in \coNP, B \in \BH_{i-1}\}, &\text{ if $i$ is even,}\\
    \{ A \cup B \mid A \in \NP, B \in \BH_{i-1}\}, &\text{ if $i$ is odd.}
\end{cases}\]
For each level $k$ of the hierarchy, the alternating satisfiability problem $\text{ASU}_k$ is complete.
  $\text{ASU}_1$ is simply \symsat{}; $\text{ASU}_2$ is also known as \textsc{SatUnsat}.
  Here, the input consists of two formulas $\symredformula{}_1, \symredformula{}_2$ and the problem is to decide whether $\symredformula{}_2$ is UNSAT and $\symredformula{}_1$ is SAT, i.e., whether $\Psi_2 = \text{UNSAT}(\symredformula{}_2) \wedge \text{SAT}(\symredformula{}_1)$ is true.
$\text{ASU}_{k+1}$ extends $\text{ASU}_k$ by adding another formula $\symredformula{}_{k+1}$ to the input and asking whether $\Psi_{k+1} = \text{UNSAT}(\symredformula{}_{k+1}) \wedge \Psi_k$ (if $k+1$ is even) or $\Psi_{k+1} = \text{SAT}(\symredformula{}_{k+1}) \vee \Psi_k$ (if $k+1$ is odd) is true.
  We show that \pisp{} is $\BH$-hard by a reduction from $\text{ASU}_k$ for any $k$.
  Any problem $\mathcal{P} \in \BH$ can then be reduced to \pisp{}:
  \[\mathcal{P} \in \BH \Rightarrow \exists i\ \mathcal{P} \in \BH_i \Rightarrow \exists i\ \mathcal{P} \leq_P^m \text{ASU}_i \leq_P^m \pisp{}.\]

For each formula $\symredformula{}_i(x_1,\ldots,x_{n_i})$ of the $k$ in the input,
we introduce to our resulting formula $\psi$
a set of variables $x^i_{1},\ldots,x^i_{n_i}$,
a variable $x_{\symredformula{}_i}$ that can only be true if $\symredformula{}_i(x^i_1,\ldots,x^i_{n_i})$ is satisfied,
  and a set of concrete features $\symconccomponent{}_i$ of a size $|\symconccomponent{}_i| \geq 4$ that will be specified later.

We enforce that $x_{\symredformula{}_i}$ can only be set to true if $\symredformula{}_i(x^i_1,\ldots,x^i_{n_i})$ is satisfied as follows:
for each clause $\ell_{j_1} \vee \cdots \vee \ell_{j_q}$ in $\symredformula{}_i$, we add the clause $\ell^i_{j_1} \vee \cdots \vee \ell^i_{j_q} \vee \overline{x_{\symredformula{}_i}}$ to $\psi$, where $\ell^i_{j}$ is the literal we obtain by replacing $x_j$ by $x_j^i$ and $\overline{x_j}$ by $\overline{x_j^i}$ in $\ell_j$.
If the formula is satisfied, we allow $x_{\symredformula{}_i}$ to be set to either true or false.

  We add a concrete feature $\symredextraconcrete{}_0$ and, for each formula $\symredformula{}_i$, another (non-concrete) variable $\symredextraconcrete{}_{i}$ and
clauses $\overline{\symredextraconcrete{}_{i}} \vee \overline{\symredextraconcrete{}_{j}}$ for all $i \neq j$ ensuring at most one of the $\symredextraconcrete{}_i$ is set to true.
We also add, for each $c \in \symconccomponent{}_i$, clauses $\symredextraconcrete{}_i \vee \overline{c}$ ensuring that 
concrete features in $\symconccomponent{}_i$ can only be set to true if $\symredextraconcrete{}_i$ is true.
Intuitively speaking, this means that in any configuration, we have to select at most one $i$ for which we are allowed to cover non-false interactions.
Feature $\symredextraconcrete{}_0$ ensures there is one extra configuration covering all interactions in which $\symredextraconcrete{}_0$ is true, and all other concrete features are false.

This has several consequences.
Firstly, true interactions between $c \in \symconccomponent{}_i, d \in \symconccomponent{}_j \neq \symconccomponent{}_i$ become infeasible
and need not be covered.
Secondly, all false interactions within each $\symconccomponent{}_i$ and between different $\symconccomponent{}_i, \symconccomponent{}_j$ are trivially covered in the configuration in which $\symredextraconcrete{}_0$ is true.
Thirdly, if a $c \in \symconccomponent{}_i$ is feasible in any configuration, since $|\symconccomponent{}_i| \geq 4 > 1$, there must be a
configuration in which $c$ is true to cover its interactions with another $d \in \symconccomponent{}_i$.
This configuration automatically covers all mixed interactions of $c$ with $\symredextraconcrete{}_0$ and with any $d \in \symconccomponent{}_j \neq \symconccomponent{}_i$.

Let $\kappa_i$ be the number of configurations required to cover the non-false interactions between the features in $\symconccomponent{}_i$.
The total number of configurations required to cover all interactions in our formula is then $1 + \sum_{i=1}^k \kappa_i$.

Each of the formulas in our input has a (fixed) desired status:
it either occurs in $\Psi_k$ as $\text{SAT}(\symredformula{}_i)$ or $\text{UNSAT}(\symredformula{}_i)$. 
A key idea of our reduction is to make $\kappa_i$ smaller if $\symredformula{}_i$ matches its desired status and larger if it does not.
In particular, if $\symredformula{}_i$ matches its desired status, we have $\kappa_i = \kappa_i^+$, and if it does not, we have $\kappa_i = \kappa_i^-$ with $\kappa_i^+ < \kappa_i^-$.

If $\symredformula{}_i$ occurs as $\text{UNSAT}(\symredformula{}_i)$,
we add the clause $x_{\symredformula{}_i} \vee \overline{c}$ for all $c \in \symconccomponent{}_i$, forcing all $c \in \symconccomponent{}_i$ to false if $x_{\symredformula{}_i}$ is false.
In particular, if $\symredformula{}_i$ is actually UNSAT, this means that no non-false interactions on $\symconccomponent{}_i$ are feasible.
This makes it easy to cover all feasible interactions involving only features in $\symconccomponent{}_i$:
they are covered by any feasible configuration, in particular the one in which $\symredextraconcrete{}_0$ is true.
Thus, we have $\kappa_i^+ = 0$.
If $\symredformula{}_i$ is actually SAT, non-false interactions on $\symconccomponent{}_i$ become feasible.
We then need to introduce configurations covering them, increasing the number of configurations needed.
We also introduce all clauses of the form $\overline{c} \vee \overline{d} \vee \overline{e}$ for all different $c, d, e \in \symconccomponent{}_i$, ensuring that at most two features in $\symconccomponent{}_i$ can simultaneously be true.
This ensures that we need $\kappa_i^- = {|\symconccomponent{}_i|\choose 2}$ configurations to cover all true interactions on $\symconccomponent{}_i$, one for each true interaction.
Because $|\symconccomponent{}_i| \geq 4$, covering each individual true interaction also covers all mixed and false interactions on $\symconccomponent{}_i$.

If $\symredformula{}_i$ occurs as $\text{SAT}(\symredformula{}_i)$, we instead add all clauses of the form $x_{\symredformula{}_i} \vee \overline{c} \vee \overline{d} \vee \overline{e}$ for all different $c, d, e \in \symconccomponent{}_i$.
If $\symredformula{}_i$ is actually SAT, we can set $x_{\symredformula{}_i}$ to true, satisfying all these clauses.
Hence, if we set $\symredextraconcrete{}_i$ and $x_{\symredformula{}_i}$ to true in a configuration, we are free to assign any truth values to the features in $\symconccomponent{}_i$.
In that case, we require $\kappa_i^+ \leq \text{CA}(2, |\symconccomponent{}_i|, 2) = (1 + o(1))\log |\symconccomponent{}_i|$ configurations~\cite{DBLP:journals/siamdm/SarkarC17}, where $\text{CA}(t, c, v)$ is the \emph{covering array number} for $t$-wise interactions on $c$ features with alphabet size $v$.
If $\symredformula{}_i$ is actually UNSAT, the added clauses do not allow us to set three or more features to true simultaneously.
We can thus only cover at most one true interaction on $\symconccomponent{}_i$ in each configuration with $\symredextraconcrete{}_i$ set to true; 
we thus need $\kappa_i^- = {|\symconccomponent{}_i| \choose 2}$ configurations to cover the true (and implicitly, all mixed and false) interactions on $\symconccomponent{}_i$.

By choosing $|\symconccomponent{}_i|$ appropriately, we can create arbitrarily large gaps between $\kappa_i^+$ and $\kappa_i^-$.
Similarly, we can scale the number $\kappa_i$ of configurations required, independently of the formula $\symredformula{}_i$.

We use the above construction to create a \pisp{} instance from an $\text{ASU}_k$ instance recursively as follows.
As a base case, for $k = 2$, i.e., if we want to decide whether $\text{UNSAT}(\symredformula{}_2) \wedge \text{SAT}(\symredformula{}_1)$, we choose $|\symconccomponent{}_1| = |\symconccomponent{}_2| = 4$.

Note that the set of feasible assignments to the concrete features $\{\symredextraconcrete{}_0\} \cup \bigcup_{i=1}^k \symconccomponent{}_i$ does not depend on the formulas $\symredformula{}_i$, but only on their satisfiability.
We can thus enumerate the set of all feasible assignments to the concrete features for any fixed $k$, given a particular satisfiability status of the formulas, and compute the number of required configurations for that status in constant time using a \textsc{Set Cover} solver.

Doing this for $k = 2$ and $|\symconccomponent{}_1| = |\symconccomponent{}_2| = 4$ shows that the resulting instance needs $5$ configurations if $\symredformula{}_2$ is UNSAT and $\symredformula{}_1$ is SAT,
$11$ if both are SAT,
$13$ if $\symredformula{}_2$ is SAT and $\symredformula{}_1$ is UNSAT,
and $7$ if both are UNSAT, 
corresponding to the values $\kappa_2^+ = 0, \kappa_2^- = 6$ and $\kappa_1^+ = 4, \kappa_1^- = 6$.
Thus, we require our \pisp{} instance to have at most $g_{2} = 5$ configurations.

For $k > 2$, if the $\text{ASU}_k$ instance has the form $\text{SAT}(\symredformula{}_k) \vee \Psi$, i.e., $k \geq 3$ and odd, we first construct a \pisp{} instance $\sympispinstance{}_{\Psi}$ for $\Psi$ with some bound $g_{k-1}$ on the number of configurations.
We then extend it by adding $\symredformula{}_k$ as described above.
To do this, we must choose a suitable $|\symconccomponent{}_k|$ and a suitable bound $g_k$ on the number of configurations allowed in a yes-instance.

Note that $g_{k-1}$ only depends on $k$, the level of the boolean hierarchy, and not on any formula.
We also determine the constant $g_{k-1}^- = 1 + \sum_{i=1}^{k-1} \kappa_i^-$, the highest number of configurations that can be required if $\sympispinstance{}_{\Psi}$ is a no-instance.
We must choose $|\symconccomponent{}_k|$ such that the following is ensured. 
If $\symredformula{}_k$ is SAT, the instance should be a yes-instance, forcing us to set $g_k \geq \kappa_k^+ + g_{k-1}^-$.
If $\symredformula{}_k$ is UNSAT, the instance should be a yes-instance iff $\sympispinstance{}_{\Psi}$ is a yes-instance.
In that case, we have to set $g_k = \kappa_k^- + g_{k-1}$; in other words, we have to make $|\symconccomponent{}_k|$ large enough such that $\kappa_k^- + g_{k-1} \geq \kappa_k^+ + g_{k-1}^- \Leftrightarrow \kappa_k^- - \kappa_k^+ \geq g_k^- - g_{k-1}$.
Because $\kappa_k^+$ grows logarithmically and $\kappa_k^-$ grows quadratically with $|\symconccomponent{}_k|$, we can always find a sufficiently large $|\symconccomponent{}_k|$.

If the $\text{ASU}_k$ instance has the form $\text{UNSAT}(\symredformula{}_k) \wedge \Psi$, i.e., if $k \geq 4$ and even, we also first construct a \pisp{} instance $\sympispinstance{}_{\Psi}$ with some bound $g_{k-1}$ on the number of configurations and extend it by adding $\symredformula{}_k$.
We also compute the constant $g_{k-1}^+ = 1 + \sum_{i=1}^{k-1} \kappa_i^+$, the lowest number of configurations required if $\sympispinstance{}_{\Psi}$ is a yes-instance.
If $\symredformula{}_k$ is SAT, the instance needs to be a no-instance.
We thus have to ensure $g_k < \kappa_k^- + g_{k-1}^+$.
If $\symredformula{}_k$ is UNSAT, the instance needs to be a yes-instance iff $\sympispinstance{}_{\Psi}$ is a yes-instance.
We thus have to set $g_k = \kappa_k^+ + g_{k-1} = g_{k-1}$.
Thus, we have to ensure $\kappa_k^- > g_{k-1} - g_{k-1}^+$, which we can always do by choosing a sufficiently large $|\symconccomponent{}_k|$.

While our reduction is non-constructive in the sense that we cannot provide
closed forms for the constants $g_k$ and $\kappa_i^+$ for $\symredformula{}_i$ that occur as $\text{SAT}(\symredformula{}_i)$, for any fixed $k$, the construction can be performed in polynomial time.
\end{proof}

\subsection{Equivalence of Variants.}
\variantsreduction*
We begin by proving the polynomial-time equivalence between \tisp{} and \tispot{}.
\begin{lemma}
  \label{lem:tisp-tispot}
  $\tisp{}$ and $\tispot{}$ are polynomial-time equivalent.
\end{lemma}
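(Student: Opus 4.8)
The plan is to prove the equivalence by exhibiting a polynomial-time many-one reduction in each direction. For $\tisp \leq_m^P \tispot$ (covering all interactions reduces to covering only true ones), I would use a mirroring gadget: given an instance $(\symfeaturemodel{}, \symconcretefeatureset{})$ of \tisp{}, introduce for each concrete feature $x_i \in \symconcretefeatureset{}$ a fresh feature $x_i'$ together with clauses enforcing $x_i' \leftrightarrow \overline{x_i}$, and declare both $x_i$ and $x_i'$ concrete. Since $x_i'$ is functionally determined by $x_i$, the valid configurations of the new formula biject with those of $\symfeaturemodel{}$ while preserving sample size. The point is that over the mirrored concrete set every negated literal $\overline{x_i}$ (resp.\ $\overline{x_i'}$) is logically identical to the positive literal $x_i'$ (resp.\ $x_i$); hence an original size-$t$ interaction, which may mix polarities, corresponds to a size-$t$ \emph{true} interaction over $\{x_i, x_i'\}$ that is covered by exactly the same configurations, while the only genuinely new true interactions, those containing both $x_i$ and $x_i'$, are infeasible. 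Thus the feasible original interactions and the feasible true interactions of the mirrored instance are in coverage-preserving bijection, so the two optima coincide.

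For the reverse reduction $\tispot \leq_m^P \tisp$, the key difficulty is that we cannot simply make the unwanted (false and mixed) interactions infeasible: forcing every mixed pair to be infeasible would force all concrete features to agree, collapsing the true-interaction structure to a single pattern. Instead I would make the non-true interactions \emph{coverable only in a way that cannot interfere with true coverage}. Introduce a fresh ``mode'' feature $m$ and relativize every clause $\symclause{}$ of $\symfeaturemodel{}$ to $\symclause{} \vee m$, so that in \emph{dummy mode} ($m$ true) the concrete features are unconstrained, while in \emph{real mode} ($m$ false) the original formula holds. Additionally, gate a cardinality constraint by $m$ so that at most $t-1$ concrete features may be true whenever $m$ is true (a standard polynomial encoding). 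Then a true interaction is feasible exactly when it was feasible for $\symfeaturemodel{}$ and can only be covered by real configurations, so the true-coverage subproblem over $\symconcretefeatureset{}$ is precisely the given \tispot{} instance; dummy configurations, capped at $t-1$ true concrete features, can never cover a true interaction but can realize every false or mixed pattern.

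To pin down the count exactly, I would attach a forcing gadget on fresh concrete features: a set of $q$ pairwise mutually exclusive true interactions, each coverable only when $m$ is true, where $q$ is chosen at least as large as the covering-array number $\mathrm{CA}(t,|\symconcretefeatureset{}|,2)$ needed to realize all non-true patterns. Being mutually exclusive, these $q$ interactions force any solution to contain at least $q$ dummy configurations (the lower-bound argument of \cref{sec:lower-bounds}); and $q$ dummy configurations can be chosen so that their free restrictions to $\symconcretefeatureset{}$ simultaneously cover every false and mixed interaction. Because dummy configurations cover no true interaction and real configurations cover no gadget interaction, the two parts of the optimum add, giving $\mathrm{OPT}_{\tisp} = \mathrm{OPT}_{\tispot} + q$, and the reduction maps the bound $s$ to $s + q$.

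The main obstacle is exactly this accounting, in particular the converse direction of the final implication: one must ensure that any cover-all solution genuinely splits into at least $q$ forced dummy configurations plus at least $\mathrm{OPT}_{\tispot}$ real ones, which is what the mutual exclusivity of the gadget and the cardinality cap jointly guarantee. A purely additive bound \emph{without} the forcing gadget would fail, since a cover-all solution could otherwise spend its entire budget on real configurations whose incidental coverage of non-true interactions would not be permitted by the cover-true budget. The remaining work is routine: verifying that the gated cardinality constraint and the gadget admit polynomial encodings, and that $q$, a covering-array number for fixed $t$, is small and efficiently constructible.
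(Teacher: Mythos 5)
Your first reduction, $\tisp{} \leq_m^P \tispot{}$, is correct and is essentially the paper's own construction: the paper adds, for each concrete feature $x$, a concrete mirror feature $n_x$ with clauses $x \vee n_x$ and $\overline{x} \vee \overline{n_x}$, which is exactly your $x_i' \leftrightarrow \overline{x_i}$ gadget. The reverse reduction also follows the paper's high-level plan (a mode/cheat mechanism that disables $\symfeaturemodel{}$, a cap of $t-1$ true concrete features in cheat mode, a mutually exclusive forcing gadget, and additive accounting), but it contains two genuine gaps, both in the forward direction of the accounting.

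First, the count $q \approx \mathrm{CA}(t,|\symconcretefeatureset{}|,2)$ is incompatible with your own cardinality cap. A dummy configuration has at most $t-1$ true concrete features, so it covers a mixed interaction whose positive part has size exactly $t-1$ only if its set of true concrete features \emph{equals} that positive part; covering all such interactions therefore requires at least ${|\symconcretefeatureset{}| \choose t-1}$ distinct dummy configurations. The covering-array number, which is $\mathcal{O}(\log|\symconcretefeatureset{}|)$ for fixed $t$ and counts \emph{unconstrained} configurations, is the wrong quantity here. Concretely, let $\symfeaturemodel{}$ force exactly one concrete feature to be true: the \tispot{}-instance with $\symtargetsamplesize{}=0$ is a yes-instance (no true interaction is feasible), yet your constructed \tisp{}-instance with budget $q = \mathcal{O}(\log|\symconcretefeatureset{}|)$ is a no-instance, because each of the many feasible mixed interactions with $t-1$ positive literals needs its own capped dummy. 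This is why the paper takes one pattern per subset of at most $t-1$ concrete features, i.e., $|\symsample{}_c| = \sum_{i=0}^{t-1}{|\symconcretefeatureset{}| \choose i}$ patterns, which is still polynomial for fixed $t$.

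Second, your gadget features are themselves concrete, so the constructed \tisp{}-instance also demands coverage of interactions \emph{involving} them, which your accounting $\mathrm{OPT}_{\tisp{}} = \mathrm{OPT}_{\tispot{}} + q$ ignores. The problematic case is a mixed interaction such as $\{x_1,\ldots,x_{t-1},\overline{g}\}$ with the $x_i$ original concrete features and $g$ a gadget feature: it is feasible (dummy mode with $g$ false), but if the unique dummy realizing the pattern $\{x_1,\ldots,x_{t-1}\}$ has $g$ true --- which happens whenever $g$ lies in its assigned gadget interaction --- then nothing in your solution covers it, and real configurations cannot be relied upon since $\symfeaturemodel{}$ may forbid $x_1,\ldots,x_{t-1}$ from being simultaneously true. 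All-negative interactions among gadget features raise the same issue. This is precisely why the paper introduces $t+1$ blocks of cheat/assignment features and replicates every pattern of $\symsample{}_c$ once per block, with budget $\symtargetsamplesize{} + (t+1)|\symsample{}_c|$: any $t$-wise interaction misses at least one block by the pigeonhole principle, and that untouched block's copy of the required pattern covers it. Note that your converse (counting) direction --- mutual exclusivity forces $q$ dummies, dummies cover no true interaction, hence at most $\symtargetsamplesize{}$ real configurations remain --- is sound as it stands; it is only the forward direction that breaks, and repairing it along the lines above essentially reproduces the paper's proof.
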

We take care not to add non-concrete features in our reduction;
this allows us to use a completely analogous proof for the following lemma.
\begin{lemma}
  \label{lem:tispac-tispacot}
  \tispac{} and \tispacot{} are polynomial-time equivalent.
\end{lemma}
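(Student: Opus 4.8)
The plan is to obtain Lemma~\ref{lem:tispac-tispacot} essentially for free by re-running the two reductions of Lemma~\ref{lem:tisp-tispot} on instances that already satisfy $\symconcretefeatureset{} = \symfeatureset{}$, and checking that both reductions keep this identity invariant. The entire purpose of the remark preceding the lemma is that neither reduction introduces a \emph{non-concrete} feature: every auxiliary variable the construction adds is declared concrete, and every feature left non-concrete in the image was already non-concrete in the source. Consequently, if the source instance has no non-concrete features at all, neither does the image, so a \tispac{} instance is mapped to a \tispacot{} instance and vice versa. The reductions and their correctness arguments then transfer verbatim; polynomiality is immediate, since only $\mathcal{O}(|\symfeatureset{}|)$ features and clauses are added.

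Concretely, for the direction $\tispac{} \le_m^P \tispacot{}$ I would use the ``doubling'' construction: for each feature $x_i$ I add two concrete features $a_i,b_i$ together with clauses enforcing $a_i \equiv x_i$ and $b_i \equiv \overline{x_i}$. Every interaction $\{\ell_1,\ell_2\}$ of the source — true, false, or mixed — is then represented by a \emph{true} interaction on the $a$/$b$-features (namely $\{a_i,a_j\}$, $\{b_i,b_j\}$, or $\{a_i,b_j\}$), so that covering all true interactions on these features is equivalent to covering all interactions of the source; the additional true interactions that are ``in scope'' because the original features remain concrete turn out to be redundant, as discussed below. Since every added feature is concrete and the source already satisfies $\symconcretefeatureset{} = \symfeatureset{}$, so does the image. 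For the reverse direction $\tispacot{} \le_m^P \tispac{}$ I would invoke the gadget of Lemma~\ref{lem:tisp-tispot} that renders all false and mixed interactions either infeasible or coverable ``for free'' by a small fixed set of extra configurations, so that the all-interaction requirement of the image collapses onto the true-interaction requirement of the source; by construction this gadget, too, uses only concrete auxiliary features and hence preserves $\symconcretefeatureset{} = \symfeatureset{}$.

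The one step that genuinely requires checking — rather than pure citation — is that making the \emph{original} features concrete in the image (which \tispac{}/\tispacot{} force, unlike the general variants) does not create spurious coverage obligations. In the doubling direction I would verify, using the equivalences $a_i \equiv x_i$ and $b_i \equiv \overline{x_i}$, that every true interaction of the image involving an original feature $x_i$ (e.g.\ $\{x_i,x_j\}$, $\{x_i,a_j\}$, or $\{x_i,b_j\}$) is satisfied by exactly the same configurations as a true interaction already represented on the $a$/$b$-features, so it is covered automatically and contributes nothing to the optimum. Symmetrically, I must confirm the map is value-preserving in both directions: a source sample extends uniquely to an image sample of the same size (the equivalences fix $a_i,b_i$ from $x_i$), and an image sample projects back with the same size. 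This value-correspondence in the presence of the now-concrete original features is the main obstacle, and it is exactly the verification the word ``analogous'' hides; once it is dispatched, both reductions carry over directly from Lemma~\ref{lem:tisp-tispot}.
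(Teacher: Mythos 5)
Your overall route is exactly the paper's: its entire proof of \cref{lem:tispac-tispacot} is the observation, stated just before the lemma, that the two reductions proving \cref{lem:tisp-tispot} add \emph{only concrete} features, hence map instances with $\symconcretefeatureset{} = \symfeatureset{}$ to instances with $\symconcretefeatureset{} = \symfeatureset{}$, so the same constructions and correctness arguments establish $\tispac{} \leq_m^P \tispacot{}$ and $\tispacot{} \leq_m^P \tispac{}$ verbatim. Your treatment of the direction $\tispacot{} \leq_m^P \tispac{}$ (reusing the cheat/assignment gadget unchanged) matches the paper precisely.

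Where you deviate is the forward direction, and the deviation introduces a flaw in your stated verification. The paper does not double: it adds, per concrete feature $x$, a \emph{single} concrete feature $n_x$ forced to equal $\overline{x}$, representing a positive literal by $x$ itself and a negative one by $n_x$. Then feasible true $t$-wise interactions of the image correspond bijectively to feasible $t$-wise interactions of the source (same-feature sets like $\{x, n_x\}$ are infeasible), so no redundant coverage obligations arise at all. Your doubling gadget ($a_i \equiv x_i$, $b_i \equiv \overline{x_i}$) instead creates feasible true interactions such as $\{x_i, a_i\}$ --- and, for $t \geq 3$, interactions whose projection to $\symfeatureset{}$ has fewer than $t$ distinct features --- and for these your key claim fails: $\{x_i, a_i\}$ is equivalent to the single literal $x_i$ and is \emph{not} ``satisfied by exactly the same configurations'' as any $t$-wise true interaction on the $a$/$b$-features (e.g., $\{a_i, b_i\}$ is infeasible and $\{a_i, a_j\}$ demands more). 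The repair is short but genuinely needed: provided $|\symconcretefeatureset{}| \geq t$ (the case $t > |\symconcretefeatureset{}|$ must be split off to a fixed yes-instance, as the paper does in both directions of \cref{lem:tisp-tispot}), any feasible set of fewer than $t$ literals extends to a feasible $t$-wise interaction, whose coverage by the source sample supplies a configuration covering the collided interaction. With that patch and the corner case added, your argument goes through; the paper's single-negation gadget simply sidesteps both issues.
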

\begin{proof}[Proof of \cref{lem:tisp-tispot}]
  We first show $\tisp{} \leq_m^P \tispot{}$; consider a \tisp-instance $(\symfeatureset{}, \symfeaturemodel{}, \symconcretefeatureset{}, \symtargetsamplesize{})$.
  If $t > |\symconcretefeatureset{}|$, there are no feasible $t$-wise interactions; in that case, our reduction produces some fixed yes-instance of \tispot{}.
  Otherwise, our reduction extends the input instance as follows.
  For each concrete feature $x \in \symconcretefeatureset{}$,
  we add a concrete feature $n_x$ which is forced by clauses $x \vee n_x, \overline{x} \vee \overline{n_x}$
  to take on the value $\overline{x}$ in any satisfying assignment.
  Clearly, we can extend a sample of the original instance by setting $n_x = \overline{x}$,
  obtaining a sample covering all feasible true interactions of the new instance.
  Similarly, we can simply remove the features $n_x$ from a sample of the new instance;
  any feasible interaction of the original instance is covered since it can be translated to a true interaction in the new instance.
  
  To show $\tispot{} \leq_m^P \tisp{}$, we consider a \tispot{}-instance $\mathcal{R} = (\symfeatureset{}, \symfeaturemodel{}, \symconcretefeatureset{}, \symtargetsamplesize{})$.
  Again, we handle the case $t > |\symconcretefeatureset{}|$ by producing a simple yes-instance.

  Otherwise, let $\symsample{}_c$ be the set of all $\sum_{i=0}^{t-1} {|\symconcretefeatureset{}| \choose i}$ assignments on $\symfeatureset{}$
  that make some subset of at most $t-1$ features from $\symconcretefeatureset{}$ true and all other features false.
  For any fixed $t$, $|\symsample{}_c|$ is polynomial in $|\symconcretefeatureset{}|$.
  To reduce our \tispot{}-instance to an instance $\mathcal{T}$ of $\tisp{}$,
  we extend the instance by introducing $t+1$ concrete \emph{cheat features} $x_c^1, \ldots, x_c^{t+1}$
  as well as $(t+1) \cdot |\symsample{}_c|$ concrete \emph{assignment features} $a_1^j, \ldots, a_{|\symsample{}_c|}^j$ for $1 \leq j \leq t+1$.
  We extend each clause $\symclause{}$ of $\symfeaturemodel{}$ to $\symclause{} \vee x_c^1 \vee \cdots \vee x_c^{t+1}$;
  this allows us to ignore the constraints imposed by $\symfeaturemodel{}$ by setting any $x_c^i$ to true.
  We add clauses $\overline{x_c^i} \vee \overline{x_c^j}$ for all $i \neq j$ to ensure
  at most one of the cheat features can be simultaneously true.
  Similarly, for each $a_i^j \neq a_{i'}^{j'}$, we add the clause $\overline{a_i^j} \vee \overline{a_{i'}^{j'}}$.
  We also add clauses $\overline{x_c^j} \vee \bigvee_{i=1}^{|\symsample{}_c|} a_i^j$ for all $1 \leq j \leq t+1$ to ensure that
  setting any cheat feature $x_c^j$ forces us to set a corresponding assignment feature $a_i^j$.
  Similarly, clauses $\overline{a_i^j} \vee x_c^j$ for all $i$ and $j$ ensure that setting any assignment feature requires us to set a corresponding cheat feature.
  Each assignment feature $a_i^j$ has a corresponding assignment with at most $t-1$ true features in $\symconfiguration{}_i \in \symsample{}_c$.
  Clauses $\overline{a_i^j} \vee \ell$ for each $a_i^j$ and each $\ell \in \symconfiguration{}_i$ ensure that setting $a_i^j$ forces us to set all original features to the value they have in $\symconfiguration{}_i$.
  Finally, we obtain our instance $\mathcal{T}$ by requiring at most $\symtargetsamplesize{}(\mathcal{T}) = (t+1) \cdot |\symsample{}_c| + \symtargetsamplesize{}$ configurations in our sample.

  Let $\symsample{}$ be a sample of size at most $\symtargetsamplesize{}$ for $\mathcal{R}$ covering all feasible $t$-wise true interactions on $\symconcretefeatureset{}$.
  We obtain a sample $\symsample{}(\mathcal{T})$ of size $|\symsample{}| + (t+1) \cdot |\symsample{}_c|$ for $\mathcal{T}$ as follows.
  Each configuration $D \in \symsample{}$ is extended by setting all assignment and cheat features to false and then added to $\symsample{}(\mathcal{T})$.
  Then, for each pair $x_c^j$ and $a_i^j$, we add a \emph{cheating configuration} setting these features to true;
  this fixes all other assignment and cheat features to false and all original features according to $\symconfiguration{}_i \in \symsample{}_c$.

  We claim that this sample covers all feasible concrete $t$-wise interactions of $\mathcal{T}$.
  Let $A = \bigcup_{j=1}^{t+1}\{\overline{x_c^j}, \overline{a_1^j}, \ldots, \overline{a_{|\symsample{}_c|}^j}\}$ be the set of negative literals of assignment and cheat features.
  Firstly, all true interactions on $\symconcretefeatureset{}$ are covered by the extended configurations from $\symsample{}$.
  All other $t$-wise interactions on $\symconcretefeatureset{}$ contain at most $t-1$ positive literals;
  all such interactions are covered by some $\symconfiguration{}_i \in \symsample{}_c$, and are thus covered by the cheating configuration induced by $x_c^1$ and $a_i^1$.
  Secondly, there is exactly one satisfying assignment making $x_c^j$ and $a_i^j$ simultaneously true.
  Our sample thus contains \emph{all} satisfying assignments in which a $x_c^j$ or a $a_i^j$ is true;
  therefore, all interactions containing any $x_c^j$ or $a_i^j$ as positive literal are covered.
  Thirdly, each $t$-wise interaction $\syminteraction{} \subset A$ is covered:
  by the pigeon-hole principle, there always is at least one $j$ for which neither $\overline{x_c^j}$ nor any $\overline{a_i^j}$ are in $\syminteraction{}$;
  any cheating configuration in which $x_c^j$ is true covers $\syminteraction{}$.
  It remains to consider $t$-wise interactions $\syminteraction{}$ that contain some feature literals from $\symconcretefeatureset{}$ as well as some from $A$.
  Let $H \subset \syminteraction{}$ be the feature literals in $\syminteraction{}$ over variables from $\symconcretefeatureset{}$.
  Again, by the pigeon-hole principle, there is at least one $j$ such that neither $\overline{x_c^j}$ nor any $\overline{a_i^j}$ are contained in $\syminteraction{}$.
  Furthermore, as $|H| \leq t-1$, there is an $i$ such that the positive literals in $H$ are exactly the positive literals in $\symconfiguration{}_i$.
  Thus, $\syminteraction{}$ is covered by the cheating configuration induced by $x_c^j$ and $a_i^j$.

  Conversely, let $\symsample{}(\mathcal{T})$ be a solution of $\mathcal{T}$ with at most $(t+1)|\symsample{}_c| + s$ configurations.
  By extending the sets $\{x_c^j, a_i^j\}$ to size $t$ with negative literals from $\symconcretefeatureset{}$, we obtain a set of $(t+1)|\symsample{}_c|$ feasible concrete interactions.
  All these interactions are mutually exclusive: $a_i^j$ cannot coexist with any other $a_{i'}^{j'}$ in a configuration.
  Additionally, setting $a_i^j$ already fixes the values of all variables in the entire configuration.
  Therefore, by dropping duplicate configurations, w.l.o.g., assume that $\symsample{}(\mathcal{T})$ contains exactly $(t+1)|\symsample{}_c|$ configurations
  with some cheating feature set to true, one for each $a_i^j$.
  None of these configurations covers any true $t$-wise interaction on $\symconcretefeatureset{}$:
  in any of these configurations, at most $t-1$ variables in $\symconcretefeatureset{}$ are set to true.
  Therefore, after removing the cheating configurations, we end up with at most $\symtargetsamplesize{}$ configurations covering all feasible true interactions on $\symconcretefeatureset{}$.
  By dropping the cheat and assignment features from these configurations, we obtain a solution of $\mathcal{R}$.
\end{proof}

We can now proceed to prove \cref{thm:variants-reduction}.
\begin{proof}[Proof of~\cref{thm:variants-reduction}]
  We prove \[\tisp \equiv_m^P \tispot \leq_m^P \tispacot \equiv_m^P \tispac\text{.}\]
  The two equivalences are established by \cref{lem:tisp-tispot,lem:tispac-tispacot}.
  The result follows because $\tispac \leq_m^P \tisp$ is obvious --- we can just explicitly set $\symconcretefeatureset{} = \symfeatureset{}$.

  It thus remains to prove $\tispot \leq_m^P \tispacot$.
  For this reduction, let $\mathcal{R} = (\symfeatureset{}, \symfeaturemodel{}, \symconcretefeatureset{}, \symtargetsamplesize{})$ be a \tispot-instance.
  We exclude the case $|\symconcretefeatureset{}| < t$ by producing a trivial yes-instance.
  We obtain the $\tispacot$-instance by extending $\mathcal{R}$ as follows.
  For each of the $r = {|\symconcretefeatureset{}| \choose t-1}$ subsets $O_j \subset \symconcretefeatureset{}$ of $t-1$ features from $\symconcretefeatureset{}$,
  we add an \emph{auxiliary} feature $x_c^j$, extending each clause $\symclause{}$ in $\symfeaturemodel{}$ to $\symclause{} \vee \bigvee_{j=1}^t x_c^j$.
  Clauses $\overline{x_c^j} \vee \overline{x_c^{j'}}$ for all $j \neq j'$ ensure at most one auxiliary feature can be made true.
  For each $1 \leq j \leq r$, let $A_j = (\symfeatureset{} \setminus \symconcretefeatureset{}) \cup O_j$ be the set of non-concrete features,
  plus the $t-1$ concrete features from $O_j$.
  We add binary clauses enforcing that setting $x_c^j$ to true forces the features in $A_j$ to true and all others to false.
  Finally, we obtain our \tispacot{}-instance by requiring at most $\symtargetsamplesize{}(\mathcal{T}) = r + \symtargetsamplesize{}$ configurations.

  Let $\symsample{}$ be a solution of size at most $\symtargetsamplesize{}$ of $\mathcal{R}$.
  We obtain a solution of $\symsample{}(\mathcal{T})$ as follows:
  we extend each configuration in $\symsample{}$ by setting all auxiliary features to false.
  Then, we add a single configuration for each of the $r$ auxiliary features $x_c^j$,
  in which precisely that feature and all features from $A_j$ are set to true.
  We claim that this covers all feasible true $t$-wise interactions on $\mathcal{T}$.
  Because $\symsample{}$ covers all feasible true $t$-wise interactions on $\symconcretefeatureset{}$, these are also covered in $\symsample{}(\mathcal{T})$.
  For any $x_c^j$, there is exactly one satisfying assignment where it is true, and that assignment is part of $\symsample{}(\mathcal{T})$.
  Thus, all feasible true interactions involving any $x_c^j$ are also covered.
  Finally, let $\syminteraction{}$ be a true $t$-wise interaction involving at least one feature from $\symfeatureset{} \setminus \symconcretefeatureset{}$ and some (potentially empty) set of features from $\symconcretefeatureset{}$.
  Then, $\syminteraction{}$ is also covered: $\syminteraction{}$ contains at most $t-1$ features from $\symconcretefeatureset{}$, 
  so this set of features is contained in at least one $A_j$ and thus covered by the configuration in which $x_c^j$ is true.

  Conversely, let $\symsample{}$ be a solution of size at most $r + \symtargetsamplesize{}$ of $\mathcal{T}$.
  Observe that $\symsample{}$ must contain at least $r$ configurations in which some $x_c^j$ is set to true
  to cover true $t$-wise interactions containing $x_c^j$;
  by dropping duplicate configurations, w.l.o.g., we can assume that $\symsample{}$ contains exactly $r$ such configurations.
  We further observe that none of these configurations covers any true $t$-wise configuration on $\symconcretefeatureset{}$:
  in all of these configurations, at most $t-1$ features from $\symconcretefeatureset{}$ are simultaneously true.
  Thus, any feasible true $t$-wise interaction must be covered by the at most $\symtargetsamplesize{}$ remaining configurations in which all $x_c^j$ are false.
  Therefore, dropping $x_c^j$ from these remaining configurations yields a solution of size at most $\symtargetsamplesize{}$ for $\mathcal{R}$.
\end{proof}

\subsection{Hardness for Larger $t$.}
\hardnesslargert*
\begin{proof}
  For $t' = t$, there is nothing to prove.
  Otherwise, let $\Delta_t = t - t'$ and $\mathcal{R} = (\symfeatureset{}, \symfeaturemodel{}, \symconcretefeatureset{}, \symtargetsamplesize{})$ be a $t'\textsc{-Isp-Ot}$-instance.
  We assume $\Delta_t \geq |\symconcretefeatureset{}|$; else we construct a simple yes-instance.
  Our reduction extends this to a \tispot-instance $\mathcal{T}$ as follows.
  Firstly, we add an auxiliary non-concrete feature $x_c$ and $\Delta_t$ concrete features $a_1,\ldots,a_{\Delta_t}$,
  as well as auxiliary concrete features $y_1,\ldots,y_{\Delta_t}$.
  Using clauses $\overline{x_c} \vee \bigvee_{i=1}^{\Delta_t} a_i$, $\overline{a_i} \vee x_c$ and $\overline{a_i} \vee \overline{a_j}$ for all $i \neq j$,
  we enforce that we can only make $x_c$ true if we make exactly one $a_i$ true.
  Moreover, any $a_i$ being true forces $x_c$ to true and all other $a_i$ to false.
  Furthermore, we extend all clauses $\symclause{}$ of $\symfeaturemodel{}$ to $\symclause{} \vee x_c$.
  We also add clauses $\overline{a_i} \vee \overline{y_i}$ for all $i$, $\overline{a_i} \vee y_j$ for all $i \neq j$,
  and $\overline{a_i} \vee x$ for all $x \in \symfeatureset{}$.
  In other words, setting $a_i$ to true forces $y_i$ to false and all other $y_j$ as well as all original features to true.
  We obtain our \tispot-instance by allowing at most $\symtargetsamplesize{}(\mathcal{T}) = \symtargetsamplesize{} + \Delta_t$ configurations.
  
  Let $\symsample{}$ be a solution to $\mathcal{R}$ of size at most $\symtargetsamplesize{}$.
  We obtain a solution $\symsample{}(\mathcal{T})$ to $\mathcal{T}$ as follows.
  We extend each configuration in $\symsample{}$ by setting $x_c$ and all $a_i$ to false and all $y_i$ to true.
  In addition, for each $a_i$, we introduce one configuration in which $x_c$ and that $a_i$ are set to true.

  We claim that this covers all feasible concrete true $t$-wise interactions.
  Let $Y = \{y_1, \ldots, y_{\Delta_t}\}$ and $A = \{a_1, \ldots, a_{\Delta_t}\}$;
  the concrete features of $\mathcal{T}$ are $\symconcretefeatureset{}' = \symconcretefeatureset{} \cup Y \cup A$.
  For each $a_i$, there is exactly one satisfying assignment in which $a_i$ is true;
  that assignment is part of our solution $\symsample{}(\mathcal{T})$.
  Therefore, all feasible concrete true $t$-wise interactions $\syminteraction{}$ with $\syminteraction{} \cap A \neq \emptyset$ are covered.
  Let $Y' \subsetneq Y$ and consider a concrete interaction $\syminteraction{} = Y' \cup \Gamma$ for some $\Gamma \subset \symconcretefeatureset{}$.
  Let $y_j \in Y \setminus Y'$ arbitrary; then, $\syminteraction{}$ is covered by the configuration with $x_c$ and $a_j$ set to true.
  Now, consider a concrete interaction $\syminteraction{} = Y \cup \Gamma$ for some $\Gamma \subset \symconcretefeatureset{}$.
  Then, $\Gamma$ is a concrete true $t'$-wise interaction in $\mathcal{R}$; 
  it therefore is either covered in or infeasible for $\mathcal{R}$.
  If it is covered in some configuration $D \in \symsample{}$, the corresponding configuration in $\symsample{}(\mathcal{T})$ covers $\syminteraction{}$ in $\mathcal{T}$.
  If it is infeasible for $\mathcal{R}$, then $\syminteraction{}$ is infeasible for $\mathcal{T}$:
  because $Y \subset \syminteraction{}$, all $y_i$ must be true; that is only possible if $x_c$ is false,
  in which case we have to assign values to $\symfeatureset{}$ such that the clauses of $\symfeaturemodel{}$ are satisfied.

  Conversely, let $\symsample{}(\mathcal{T})$ be a solution to $\mathcal{T}$ of size at most $\symtargetsamplesize{} + \Delta_t$.
  We observe that, for each $a_i$, there are are feasible concrete $t$-wise true interactions containing $a_i$;
  therefore, $\symsample{}(\mathcal{T})$ contains at least one configuration in which that $a_i$, and thus $x_c$, is true.
  We drop all configurations in which $x_c$ is true and remove the additional features from the remaining configurations to obtain a solution $\symsample{}$ for $\mathcal{R}$ with $|\symsample{}| \leq \symtargetsamplesize{}$.
  We claim that solution covers all feasible concrete true $t'$-wise interactions of $\mathcal{R}$.
  Let $\syminteraction{}$ be such an interaction and $D_\syminteraction{}$ a configuration covering it.
  Then we consider $\syminteraction{}' = Y \cup \syminteraction{}$.
  $\syminteraction{}'$ is clearly feasible for $\mathcal{T}$: we can extend $D_\syminteraction{}$ by setting $x_c$ to true, all $a_i$ to false and all $y_i$ to true.
  Thus, there must be a configuration $D_{\syminteraction{}'} \in \symsample{}(\mathcal{T})$ covering $\syminteraction{}'$.
  This configuration contains all of $Y$ and thus $\overline{x_c}$.
  Its restriction to $\symfeatureset{}$ is thus part of $\symsample{}$; therefore, $\syminteraction{}$ is covered by $\symsample{}$.
  This concludes the proof.
\end{proof}

\section{Details for Initial Heuristic.}
\label{sec:details-heuristic}
In this section, we describe some implementation details left out of \cref{sec:initial-heuristic} due to space constraints.

\subsection{Trail Data Structure.}
Each partial configuration is stored in a trail data structure,
which consists of a stack of literals and their reasons, 
as well as a data structure that tracks the status of each variable and its trail position.

Some of the large instances require a large number of configurations.
In some large instances, we have to support up to \num{50000} trails simultaneously.
Compared to a regular CDCL SAT solver, which usually only has to keep one trail --- or a few, e.g., for parallel solvers --- this necessitates some changes.
For instance, to conserve memory, it is imperative that we do not store a full copy of the entire clause set for each trail, but share the clause database among all trails.
This disables some standard optimizations, e.g., reordering of the literals in each longer clause to track watched literals;
each trail thus needs to track which literals are watched in each clause separately.

Additionally, for each literal, we maintain a list of clauses in which it is watched;
most of the time, these lists remain empty or very short.
Using trivial dynamic arrays for these lists cause a large number of small memory allocations;
we instead use dynamic arrays with a small amount of static storage to avoid most of these. 
However, a significant amount of memory is still reserved in small allocations by our initial heuristic.

With some memory allocators, such as the default \texttt{malloc} provided by glibc on our Linux system, this causes significant issues:
while our initial heuristic requires significant memory in small allocations, our LNS approach almost exclusively allocates large chunks of memory.
These are handled separately; as is turns out, \texttt{malloc} never returns the space of the small allocations to the operating system when they are \texttt{free}d,
and it also fails to reuse them to satisfy the requests for large chunks of memory later in our algorithm.
Without a call to \texttt{malloc\_trim}, which causes \texttt{malloc} to actually return memory to the OS after our initial phase,
our algorithm thus runs out of memory on the largest instances of the benchmark set.

\subsection{Tracking Coverage.}
The high-level idea of our algorithm requires us to enumerate or sample from the set of uncovered, potentially feasible interactions.
This could theoretically be achieved by tracking, at any point in our algorithm, which interactions are covered.
For large instances\footnote{Some of our instances have more than \num{500000000} feasible interactions and yield samples of \num{40000} configurations after the first iteration.}, 
in particular those that require a large sample to achieve pairwise coverage, this task is not trivial.
During preliminary experiments, we attempted to use a simple matrix data structure counting, for each pair of literals, the number of partial configurations covering it.
Profiling indicated that more than \qty{99.9}{\percent} of our runtime was spent updating this matrix;
note that, by performing a single \emph{push\_and\_propagate} operation on some partial configuration $\sympartialconfig{}$,
we may have to consider and add coverage for $\omega(n)$ interactions with potentially poor memory locality,
because we may end up adding more than $O(1)$ concrete feature literals due to UP.

We thus designed a different approach: for each configuration $\sympartialconfig{}$, we maintain a bitset $B_{t}(\sympartialconfig{})$ of the $2n$ literals, indicating which of the literals are true in $\sympartialconfig{}$.
We also maintain a bitset $B_{x}(\ell)$ for each literal $\ell$, indicating for which literals $\ell'$ we have established that $\{\ell, \ell'\}$ is an infeasible interaction.
We also maintain, for each literal $\ell$, a list of class indices in which $\ell$ is true.
These sets are much cheaper to maintain on changes to $\sympartialconfig{}$.
However, they do make other operations more expensive; this primarily affects enumerating currently uncovered interactions.

\subsection{Enumerating Uncovered Interactions.}
To iterate uncovered interactions, we iterate through all (concrete) feature literals $\ell$.
For each $\ell$, we combine $B_{x}(\ell)$ and $B_t(\sympartialconfig{})$ for all $\sympartialconfig{}$ in which $\ell$ is true,
resulting in another bitset indicating which interactions involving $\ell$ are currently covered, allowing us to iterate the uncovered interactions involving $\ell$.
If we have not established which interactions are actually infeasible, unidentified infeasible interactions are enumerated as well.
Even though bitwise logic operations can be performed extremely efficiently, scanning linearly through memory and manipulating hundreds or thousands of bits each clock cycle as long as memory bandwidth permits,
this is a relatively expensive operation for large instances that require many configurations.
We thus need to ensure that iterating the set of uncovered interactions is done relatively rarely.
We achieve this by our priority queue $\symqueue{}$ and exponentially growing $\symtargetqueuesize{}$, since we only need to enumerate uncovered interactions if $\symqueue{}$ was emptied.

\subsection{Priority Queue.}
For any interaction $\syminteraction{} = \{\ell_1, \ell_2\}$ that is currently in the priority queue $\symqueue{}$, we maintain the following additional information:
a bitset of partial configuration indices that this interaction could potentially be added to, i.e., partial configurations $\sympartialconfig{}$ with $\overline{\ell_1}, \overline{\ell_2} \notin \sympartialconfig{}$, 
as well as a count $c_\syminteraction{}$ of such partial configurations; this information is what we use to prioritize interactions, starting with interactions with the lowest $c_\syminteraction{}$.

\subsection{Infeasibility Detection.}
If we are unable to add an interaction $\syminteraction{}$ taken from $\symqueue{}$ to any existing partial configuration, we create a new partial configuration $T$ for $\syminteraction{}$. 
In addition to checking for conflicts after pushing $\syminteraction{}$, we can optionally perform some amount of CDCL search on $T$.
This can serve multiple purposes: if we find $\syminteraction{}$ to be infeasible, we can ignore it in the future;
we can also learn clauses from conflicts that strengthen propagation in the future, or find that $\syminteraction{}$ is definitely feasible.
During preliminary experiments, we found that some instances profited from stronger checks of feasibility for such $\syminteraction{}$; 
our current implementation performs a full CDCL search for each such $\syminteraction{}$, either proving $\syminteraction{}$ to be feasible or infeasible.

\subsection{Finalizing.}
When enumerating uncovered interactions detects all interactions have been covered, we are left with the task of completing all our partial configurations $\sympartialconfig{} \in \symworkingsample{}$ to complete configurations.
We extend each configuration $\sympartialconfig{}$ to a complete configuration $\symconfiguration{}$ individually, again using a simplistic CDCL SAT solver based on our trail data structure.
During the operation of this solver, we attempt to maintain the partial configuration $\sympartialconfig{} \subseteq \symconfiguration{}$ as follows.
Whenever a decision from the original $\sympartialconfig{}$ is removed by conflict resolution, we record it as \emph{failed}; 
before making any other decisions in our SAT solver, we attempt to reintroduce failed decisions that are currently open.
Unless the formula is unsatisfiable, this completion routine always produces a valid configuration $\symconfiguration{}$;
however, it may happen that $\sympartialconfig{} \nsubseteq \symconfiguration{}$.
We replace $\sympartialconfig{}$ by $\symconfiguration{}$ in any case; if $\sympartialconfig{} \nsubseteq \symconfiguration{}$ for any $\sympartialconfig{} \in \symworkingsample{}$, 
we need to go back to enumerating uncovered interactions and potentially introduce new partial configurations to cover interactions that became uncovered while completing partial configurations.

\section{Preprocessing.}
\label{sec:sat-preprocessing}
In this section, we argue why our preprocessing is safe.
Essentially, we have to guarantee that we can turn any sample $S_\symsimplified{}$ of our simplified formula $\symsimplified{}$ into a sample $S_\symfeaturemodel{}$ of the original formula $\symfeaturemodel{}$ with $|S_\symsimplified{}| = |S_\symfeaturemodel{}|$ and vice versa, maintaining pairwise coverage.
This guarantees that a minimum sample of $\symfeaturemodel{}$ corresponds to a minimum sample of $\symsimplified{}$ and vice versa;
if a preprocessing rule satisfies this condition, we call it \emph{sampling-safe}.

There is a notable special case to handle that we exclude in the following:
if preprocessing reduces the number of concrete features below $2$,
even otherwise sampling-safe preprocessing rules can violate the above rule because an empty universe of interactions between concrete features allows an empty sample to have pairwise coverage.
However, we can fix this special case by requiring at least one configuration in any sample if the formula is satisfiable, and
requiring two configurations if there is exactly one concrete feature $x$ and both $x$ and $\overline{x}$ are feasible, one with $x$ and one with $\overline{x}$.

\subsection{Failed and Equivalent Literals.}
A literal $\ell$ is called \emph{failed literal} if $\UP(\ell) = \bot$.
This means that $\ell$ must be false in all satisfying assignments,
and we can simplify $\symfeaturemodel{}$ by removing $\ell$, removing all clauses satisfied by setting $\ell$ to false, and shortening all clauses containing $\ell$.
The same holds true if, for any literal $\ell'$, we have $\overline{\ell} \in \UP(\ell')$ and $\overline{\ell} \in \UP(\overline{\ell'})$.

A pair of literals $\ell_1, \ell_2$ are called \emph{equivalent} iff the value of $\ell_1$ and $\ell_2$ are the same in all satisfying assignments.
We can compute a directed graph on all literals with a directed edge $(\ell_1, \ell_2)$ between two literals if $\ell_2 \in \UP(\ell_1)$, i.e., if $\ell_1$ implies $\ell_2$ in all satisfying assignments.
Strongly connected components (SCCs) in this graph represent sets of equivalent literals; if any SCC contains both $\ell$ and $\overline{\ell}$, the formula is unsatisfiable.
We can replace all variables occurring in an SCC by a single variable,
rewriting all clauses containing the involved variables.
If any of the involved variables corresponds to a concrete feature, the resulting variable in the simplified formula is also marked as concrete feature.

\begin{theorem}
Failed and equivalent literal elimination are sampling-safe.
\end{theorem}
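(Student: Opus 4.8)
The plan is to verify the \emph{sampling-safe} condition separately for each of the two rules, namely that any sample $S_\symsimplified{}$ of the simplified formula $\symsimplified{}$ can be turned into a sample $S_\symfeaturemodel{}$ of $\symfeaturemodel{}$ of the same size with the same pairwise coverage, and vice versa. Throughout, I would keep track of how literals and interactions of $\symfeaturemodel{}$ correspond to those of $\symsimplified{}$, since the heart of the argument is that this correspondence is a size-preserving bijection on samples that also preserves feasibility and coverage of concrete interactions.

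First I would handle \textbf{failed literal elimination}. Suppose $\ell$ is a failed literal, i.e.\ $\UP(\ell) = \bot$, so $\ell$ is false in every satisfying assignment of $\symfeaturemodel{}$. The key observation is that the satisfying assignments of $\symfeaturemodel{}$ and of $\symsimplified{}$ are in natural bijection: every model of $\symfeaturemodel{}$ already sets $\ell$ to false, so restricting to the remaining variables gives a model of $\symsimplified{}$, and conversely any model of $\symsimplified{}$ extends uniquely by setting $\ell$ false. Hence valid configurations correspond one-to-one, and a set of configurations on $\symfeaturemodel{}$ maps to an equal-sized set on $\symsimplified{}$ and back. It then remains to argue coverage is preserved: if $\ell$ (or equivalently $\overline{\ell}$) was a concrete feature, then every feasible concrete interaction involving the \emph{false} polarity of $\ell$ is covered automatically, since $\overline{\ell}$ holds in every configuration; no feasible interaction involves the \emph{true} polarity $\ell$ because $\ell$ is never true. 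Thus removing $\ell$ loses no feasible interactions, and coverage of all remaining concrete interactions is preserved under the bijection. The case where infeasibility is detected via $\overline{\ell} \in \UP(\ell') \cap \UP(\overline{\ell'})$ is identical, as it again forces $\overline{\ell}$ in all models.

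Next I would treat \textbf{equivalent literal elimination}, which is the more delicate case. Let $\ell_1,\ell_2$ be equivalent, so they take the same value in every model; the rule merges their variables into one. Again models correspond bijectively: a model of $\symsimplified{}$ lifts to a model of $\symfeaturemodel{}$ by duplicating the merged value across all variables in the SCC, and a model of $\symfeaturemodel{}$ projects down, with both directions well-defined precisely because the equivalence forces all involved literals to agree. The crux for coverage is the claim already asserted in the preprocessing section: \emph{interactions of equivalent literals are themselves equivalent}. Concretely, for concrete features $x,y$ with $x$ equivalent to some merged representative, a concrete interaction $\{\ell,\ell'\}$ in $\symfeaturemodel{}$ is feasible iff its image under the merge is feasible in $\symsimplified{}$, and a configuration covers the original iff the corresponding configuration covers the image. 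I would verify this by noting that feasibility and coverage depend only on the truth values of literals, which the bijection preserves exactly. The one subtlety is to confirm that when the SCC contains a concrete feature, marking the merged variable concrete ensures no concrete interaction is silently dropped; this is where the side condition that at least $t$ concrete features survive (stated earlier) guarantees the universe of interactions stays nonempty and the special case from the start of \cref{sec:sat-preprocessing} does not arise.

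The main obstacle I anticipate is not any single calculation but the bookkeeping of the \emph{coverage} direction for equivalent literals: one must argue carefully that after merging, an interaction $\{\ell_1, \ell'\}$ and its ``duplicate'' $\{\ell_2, \ell'\}$ collapse to the same interaction in $\symsimplified{}$ and that covering the collapsed interaction in $\symsimplified{}$ yields coverage of \emph{both} originals in $\symfeaturemodel{}$ under the lift. This requires spelling out the map on interactions induced by the map on literals and checking it commutes with the ``is covered by configuration $\symconfiguration{}$'' relation in both directions. Once that commuting square is established, the size preservation is immediate (the bijection on configurations does not merge or split any configuration), and the theorem follows by combining the two rules, each applied while at least $t$ concrete features remain.
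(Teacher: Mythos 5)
Your overall framework --- a size-preserving correspondence between configurations of $\symfeaturemodel{}$ and $\symsimplified{}$, combined with a map on interactions that commutes with coverage --- is the same as the paper's, and your treatment of the direction from $\symfeaturemodel{}$ to $\symsimplified{}$ and of the collapse of equivalent interactions is sound. The genuine gap is in the failed-literal case, at the claim that every feasible concrete interaction involving the false polarity $\overline{\ell}$ is ``covered automatically, since $\overline{\ell}$ holds in every configuration.'' Coverage of $\{\overline{\ell},\ell'\}$ requires one configuration in which \emph{both} literals are true, so it is equivalent to $\ell'$ being true in some configuration of the lifted sample --- and that is not automatic. Pairwise coverage of $S_\symsimplified{}$ only constrains pairs of concrete literals that still exist in $\symsimplified{}$; it says nothing directly about coverage of a single literal. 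Indeed, if $\ell'$ (or its representative) is the \emph{only} concrete literal left in $\symsimplified{}$, the empty sample has pairwise coverage of $\symsimplified{}$, yet its lift covers nothing. This is exactly where the paper's proof does its real work: if at least two concrete features survive, one argues that $\ell'$ is feasible in $\symsimplified{}$, hence forms a feasible pair with some polarity of another concrete feature, and coverage of that pair forces $\ell'$ to be true in some configuration; if only one concrete feature survives, one must invoke the convention from the start of \cref{sec:sat-preprocessing} that such samples must contain one (or two) configurations witnessing the feasible polarities. The same issue recurs in your equivalent-literal case whenever an interaction between two equivalent concrete literals, or between a negated failed literal and an equivalent literal, collapses to a singleton in $\symsimplified{}$.

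Relatedly, your appeal to the special case is misdirected: you assert that the side condition that at least $t$ concrete features survive ``guarantees\ldots the special case\ldots does not arise,'' but the degenerate situation of fewer than two surviving concrete features can arise and is not excluded --- it is \emph{handled} by the convention of forcing nonempty samples, and the proof must explicitly invoke that convention, as the paper does in its sub-case where the representative $\ell_3$ is the only concrete literal of $\symsimplified{}$. Once the word ``automatically'' is replaced by this case analysis, your argument coincides with the paper's proof.
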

\begin{proof}
Let $\symfeaturemodel{}$ be the formula before failed and equivalent literal elimination and $\symsimplified{}$ be the formula afterwards.
Let $S_{\symfeaturemodel{}}$ be a sample with pairwise coverage on $\symfeaturemodel{}$.
If $S_{\symfeaturemodel{}}$ is empty, due to our conventions of handling the cases of zero and one concrete feature, the original formula is unsatisfiable, and so is $\symsimplified{}$.
Otherwise, because every configuration $\symconfiguration{}_{\symfeaturemodel{}} \in S_{\symfeaturemodel{}}$ is a satisfying assignment, all failed literals are false in $\symconfiguration{}_{\symfeaturemodel{}}$.
If two literals are equivalent, they have the same value in $\symconfiguration{}_{\symfeaturemodel{}}$.
Therefore, we can transform each $\symconfiguration{}_{\symfeaturemodel{}}$ into a satisfying assignment $\symconfiguration{}_{\symsimplified{}}$ by dropping variables corresponding to failed literals and replacing equivalent literals.
The sample obtained in this way has pairwise coverage on $\symsimplified{}$.

Let $S_{\symsimplified{}}$ be a sample with pairwise coverage on $\symsimplified{}$.
Again, $S_{\symsimplified{}}$ is only empty if $\symsimplified{}$ and $\symfeaturemodel{}$ are unsatisfiable.
Otherwise, we turn each configuration $\symconfiguration{}_{\symsimplified{}} \in S_{\symsimplified{}}$ into a configuration $\symconfiguration{}_{\symfeaturemodel{}}$ on $\symfeaturemodel{}$ by setting failed literals to false and setting equivalent literals to the values indicated by their representative literal in $\symconfiguration{}_{\symsimplified{}}$, thus obtaining a sample $S_\symfeaturemodel{}$.

  Let $\syminteraction{} = \{\ell_1,\ell_2\}$ be a feasible interaction between concrete literals of $\symfeaturemodel{}$;
we have to show that it is covered in $S_\symfeaturemodel{}$.
If neither $\ell_1$ nor $\ell_2$ were removed by failed or equivalent literal elimination,
we have $\syminteraction{} \subseteq \symconfiguration{}_\symsimplified{}$ for some $\symconfiguration{}_\symsimplified{} \in S_\symsimplified{}$, and the corresponding $\symconfiguration{}_\symfeaturemodel{} \in S_\symfeaturemodel{}$ covers $\syminteraction{}$.

If one of $\ell_1, \ell_2$, w.l.o.g. $\ell_1$, is a negated failed literal, $\ell_1$ is true in all $\symconfiguration{}_\symfeaturemodel{}$.
We thus have to prove that $\ell_2$ is true in some $\symconfiguration{}_\symfeaturemodel{}$.
If $\ell_2$ is also a negated failed literal, this holds for any $\symconfiguration{}_\symfeaturemodel{}$.
If $\ell_2$ is an equivalent literal, let $\ell_3$ be its representative in $\symsimplified{}$;
otherwise, let $\ell_3 = \ell_2$.
It suffices to show that $\ell_3$ is true in some $\symconfiguration{}_\symsimplified{}$.
If $\ell_3$ is the only concrete literal in $\symsimplified{}$, $S_\symsimplified{}$ either has two configurations if both $\ell_3$ and $\overline{\ell_3}$ are feasible, or one configuration if only $\ell_3$ is;
in either case, $\ell_3$ is contained in a $\symconfiguration{}_\symsimplified{}$.
Otherwise, there is another concrete literal $\ell_4$ such that $\{\ell_3, \ell_4\}$ is a feasible interaction of $\symsimplified{}$; because $S_\symsimplified{}$ has pairwise coverage, there must be a configuration $\symconfiguration{}_\symsimplified{}$ in which $\ell_3$ is true.

Finally, let one or both of $\ell_1, \ell_2$ be literals replaced by representatives $\ell_1', \ell_2'$ in $\symsimplified{}$ through equivalent literal elimination.
Because $\ell_1$ and $\ell_2$ are concrete, their representatives are also concrete in $\symsimplified{}$.
Thus, because $S_\symsimplified{}$ has pairwise coverage, there must be a $\symconfiguration{}_\symsimplified{}$ in which $\ell_1'$ and $\ell_2'$ are simultaneously true.
The corresponding $\symconfiguration{}_\symfeaturemodel{}$ covers $\syminteraction{}$.
\end{proof}

\subsection{Bounded Variable Elimination.}
Another important, well-known SAT simplification method is \emph{bounded variable elimination} (BVE).
BVE is based on the observation that one can eliminate any variable $x$ from a formula $\symfeaturemodel{}$ by resolving each clause containing $x$ with each clause containing $\overline{x}$ on $x$, adding all non-tautological resolvents and then removing all clauses with $x$ or $\overline{x}$ as well as variable $x$; the resulting formula $\symsimplified{}$ is satisfiable iff $\symfeaturemodel{}$ was.
In the worst case, this can drastically increase the formula size;
BVE is usually only applied to variables that do not cause a notable increase in formula size.
Since BVE does not result in a logically equivalent formula, we have to be careful when applying BVE; however, the following holds.
\begin{theorem}
BVE is sampling-safe when applied to non-concrete features.
\end{theorem}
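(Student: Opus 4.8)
The plan is to reduce \emph{sampling-safety} to two ingredients: the classical correctness of resolution-based variable elimination, and the observation that, since $x$ is non-concrete, it appears in no interaction, so feasibility and coverage of concrete interactions are governed purely by satisfiability. Size preservation will be immediate, as both directions of the construction map each configuration to exactly one configuration. Throughout, I would write $\symfeaturemodel{}$ for the formula before elimination and $\symsimplified{}$ for the formula afterwards, and recall that $\symsimplified{}$ consists of every clause of $\symfeaturemodel{}$ not mentioning $x$, together with every non-tautological resolvent $(\symclause{}\setminus\{x\})\cup(\symclause{}'\setminus\{\overline{x}\})$ of a clause $\symclause{}\ni x$ with a clause $\symclause{}'\ni\overline{x}$. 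Since $x\notin\symconcretefeatureset{}$, BVE leaves the concrete feature set untouched, so the degenerate case of fewer than two concrete features, handled separately earlier, cannot be newly triggered here.

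The first (easy) direction is to turn a sample $S_{\symfeaturemodel{}}$ of $\symfeaturemodel{}$ into one of $\symsimplified{}$ by dropping $x$ from every configuration. I would check that each resulting $\symconfiguration{}_\symsimplified{}$ still satisfies $\symsimplified{}$: untouched clauses are satisfied because they do not mention $x$, and for a resolvent I would argue by cases on the value of $x$ in the original $\symconfiguration{}_\symfeaturemodel{}$ --- if $x$ is true then $\symclause{}'$ was satisfied by a literal surviving in the resolvent, and if $x$ is false the same holds for $\symclause{}$ --- so the resolvent is satisfied by $\symconfiguration{}_\symsimplified{}$ in either case.

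The converse direction is where the real work lies. Given a sample $S_\symsimplified{}$ of $\symsimplified{}$, I would extend each $\symconfiguration{}_\symsimplified{}$ to $\symconfiguration{}_\symfeaturemodel{}$ by assigning $x$ a suitable value, and the key lemma is that at least one of the two choices yields a satisfying assignment of $\symfeaturemodel{}$. I expect this to be the main obstacle, and I would prove it by the standard resolution argument: if setting $x$ true left some clause $\symclause{}'\ni\overline{x}$ unsatisfied and setting $x$ false left some clause $\symclause{}\ni x$ unsatisfied, then the assignment would falsify every literal of $(\symclause{}\setminus\{x\})\cup(\symclause{}'\setminus\{\overline{x}\})$; this combination is non-tautological (a clashing pair on a variable $y\neq x$ would force $\symconfiguration{}_\symsimplified{}$ to set $y$ both ways) and hence lies in $\symsimplified{}$, contradicting that $\symconfiguration{}_\symsimplified{}$ satisfies $\symsimplified{}$. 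This gives a sample $S_{\symfeaturemodel{}}$ of $\symfeaturemodel{}$ of the same size.

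It then remains to transfer coverage, for which I would first establish that a concrete interaction $\syminteraction{}$ is feasible for $\symfeaturemodel{}$ iff it is feasible for $\symsimplified{}$. This follows by applying the two satisfiability-extension arguments above to $\symfeaturemodel{}\wedge\syminteraction{}$ and $\symsimplified{}\wedge\syminteraction{}$: because $x\notin\syminteraction{}$, eliminating $x$ from $\symfeaturemodel{}\wedge\syminteraction{}$ yields exactly $\symsimplified{}\wedge\syminteraction{}$, so the two are equisatisfiable. With feasibility matched, coverage is immediate in both directions: every feasible concrete interaction of one formula is a feasible concrete interaction of the other, it is covered by some configuration there, and since it consists only of concrete literals (never $x$) it remains covered after dropping or after adding the value of $x$. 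Combining both directions with size preservation shows that minimum samples correspond, i.e., BVE on non-concrete features is sampling-safe.
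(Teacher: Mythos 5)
Your proposal is correct and follows essentially the same route as the paper's proof: map samples between $\symfeaturemodel{}$ and $\symsimplified{}$ configuration-by-configuration, dropping the eliminated variable in one direction and reintroducing it with a suitable truth value in the other, where the key step is the standard resolution argument that at least one of the two values must yield a satisfying assignment of $\symfeaturemodel{}$. Your version is in fact slightly more careful than the paper's in two spots --- you explicitly verify that resolvents remain satisfied after dropping the variable, and you explicitly rule out the tautological-resolvent case (which BVE does not add to $\symsimplified{}$) before deriving the contradiction --- but these are refinements of the same argument, not a different approach.
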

\begin{proof}
Let $y$ be a non-concrete feature of $\symfeaturemodel{}$, and let $\symsimplified{}$ be the result of performing BVE on $y$.
  From a sample $S_{\symfeaturemodel{}}$ of $\symfeaturemodel{}$, we obtain a sample $S_{\symsimplified{}}$ simply by dropping $y$ from all configurations in $S_{\symfeaturemodel{}}$; because the set of concrete interactions did not change, $S_{\symsimplified{}}$ has pairwise coverage if $S_{\symfeaturemodel{}}$ has.
  On the other hand, to obtain a sample $S_{\symfeaturemodel{}}$ of $\symfeaturemodel{}$ from a sample $S_{\symsimplified{}}$ of $\symsimplified{}$,
we reintroduce $y$ assigned to some value into each configuration $\symconfiguration{}_{\symsimplified{}} \in S_{\symsimplified{}}$.
Let $\symconfiguration{}_{\symsimplified{}}^+$ be the configuration obtained by adding $y$, and $\symconfiguration{}_{\symsimplified{}}^-$ be the configuration obtained by adding $\overline{y}$ to $\symconfiguration{}_\symsimplified{}$.
If $\symconfiguration{}_\symsimplified{}^+$ is a satisfying assignment of $\symfeaturemodel{}$, we have translated $\symconfiguration{}_\symsimplified{}$ to a satisfying assignment of $\symfeaturemodel{}$ covering the same concrete interactions, and we are done.
If $\symconfiguration{}_\symsimplified{}^+$ is not a satisfying assignment of $\symconfiguration{}_\symfeaturemodel{}$,
this is due to a clause $\symclause{} \in \symfeaturemodel{}$ containing $\overline{y}$.
The resolvent of $\symclause{}$ with every clause $\symclause{}^+$ containing $y$ is part of $\symsimplified{}$ and thus satisfied.
Because $\symclause{}$ is not satisfied by $\symconfiguration{}_\symsimplified{}^+$, this means that every clause $\symclause{}^+$ is satisfied in $\symconfiguration{}_\symsimplified{}^-$ by a literal that is not $y$; therefore $\symconfiguration{}_\symsimplified{}^-$ must be a satisfying assignment of $\symfeaturemodel{}$.
\end{proof}

\subsection{Universe Reduction.}
\label{sec:universe-reduction}
We apply universe reduction after our initial phase, when the set of feasible interactions is known.
Recall that \emph{universe reduction} describes the process of eliminating feasible interactions $\syminteraction{}$ from $\symfeasibleinteractions{}$ that are implicitly covered whenever another, uneliminated interaction $\syminteraction{}'$ is covered.
Our implementation uses two rules, called (I) and (II) in the following, to find such interactions.

Rule (I) is based on the observation that, if $\ell_2 \in \UP(\{\ell_1\})$, then for any $\ell_3$ with $\{\ell_1,\ell_3\} \in \symfeasibleinteractions{}$, $\{\ell_2,\ell_3\}$ is implied by $\{\ell_1, \ell_3\}$.
This rule can be applied quickly by performing UP on each potential $\ell_1$, 
followed by simultaneously walking two sorted lists of potential partner literals $\ell_3$ of $\ell_1$ and $\ell_2$.

Rule (II) is based on UP of interactions: if $\{\ell_3, \ell_4\} \in \UP(\{\ell_1, \ell_2\})$, then $\{\ell_3, \ell_4\}$ is implied by $\{\ell_1, \ell_2\}$.
This requires us to perform UP on interactions.
While this dominates rule (I), it can be quite expensive to perform for large $\symfeasibleinteractions{}$;
therefore, we first perform rule (I) in all cases and only run rule (II) up to a time limit on the remaining interactions.

After this process, each implied interaction has a stored \emph{implier}.
We replace implied impliers by their impliers until each implied interaction has a non-implied implier.
These impliers can be used to remove implied elements from mutually exclusive sets used as lower bounds,
simply by replacing each implied interaction by its implier.

\section{LNS Portfolio Details.}
\label{sec:a-lns}
In this section, we give some more details regarding our main LNS algorithm.

\subsection{Destroy Size.}
The destroy size $\symdestroyparameters{}$, i.e., the number of configurations removed by the destroy operations, 
is governed by the success and performance of previous destroy and repair operations.
We initially choose a deliberately low $\symdestroyparameters{}$ depending on $|\symfeasibleinteractions{}|$.
As long as the previous destroy size appears to be successful, i.e., has led to an improvement in the previous iteration, we do not increase it.
Otherwise, we increment $\symdestroyparameters{}$ after a certain number of destroy and repair operations that did not improve the solution.
This number of unsuccessful operations depends on the average runtime of the repair operations at the current $\symdestroyparameters{}$:
as long as the average repair operation time is below a given bound, we increment $\symdestroyparameters{}$ faster.

\subsection{Removed Configuration Selection.}
Another important parameter concerns the procedure that determines which configurations to remove in the destroy operation.
We consider three different approaches and randomly decide between them.

\subsubsection{Uniformly Random.}
One option is to select the requested number of configurations uniformly at random; we pick this option with a probability of $0.2$.
This works well in some cases, but for some instances runs into \emph{doomed destructions} too frequently.
We call the removal of configurations $\symdestruction{}$ from a sample $\symsample{}$ \emph{doomed} if there is a mutually exclusive set $\symmaxexclusiveset{}$ of uncovered interactions that contains, 
for each $\symconfiguration{} \in \symdestruction{}$, one distinct interaction $\syminteraction{} \subseteq \symconfiguration{}$; such destructions cannot lead to an improvement.
We observe the following.
\begin{observation}
    To find a clique that certifies a destruction $\symdestruction{}$ of a sample $\symsample{}$ to be doomed, 
    it is sufficient to consider only interactions that are 
    covered by exactly one configuration $\symconfiguration{} \in \symsample{}$.
\end{observation}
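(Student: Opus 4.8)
The plan is to prove the following direct form of the statement: whenever a destruction $\symdestruction{}$ of a sample $\symsample{}$ is doomed, there exists a certifying mutually exclusive set that uses \emph{only} interactions covered by exactly one configuration of $\symsample{}$. This immediately shows that the search for such a certificate may be restricted to those interactions without losing any doomed destruction. First I would take an arbitrary witness guaranteed by the definition of doomed: a mutually exclusive set $\symmaxexclusiveset{}$ of uncovered interactions together with an injective assignment $\phi \colon \symdestruction{} \to \symmaxexclusiveset{}$ with $\phi(\symconfiguration{}) \subseteq \symconfiguration{}$ for every $\symconfiguration{} \in \symdestruction{}$. Since any subset of a clique in $G$ is again a clique, I would pass to the sub-collection $\symmaxexclusiveset{}' = \{\phi(\symconfiguration{}) \mid \symconfiguration{} \in \symdestruction{}\}$, which is still mutually exclusive, still consists of uncovered interactions, and --- because $\phi$ is injective --- still contains one distinct interaction contained in each destroyed configuration. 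Thus $\symmaxexclusiveset{}'$ is itself a valid certificate.

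The heart of the argument is to show that every $\phi(\symconfiguration{}) \in \symmaxexclusiveset{}'$ is covered by exactly one configuration of $\symsample{}$, namely $\symconfiguration{}$ itself. I would argue in three parts. Since $\phi(\symconfiguration{}) \subseteq \symconfiguration{}$, the configuration $\symconfiguration{}$ covers it, so there is at least one coverer. Since $\phi(\symconfiguration{})$ is by assumption uncovered after $\symdestruction{}$ is removed, no configuration in $\symsample{} \setminus \symdestruction{}$ covers it. The crucial step rules out a second coverer \emph{inside} $\symdestruction{}$: if some $\symconfiguration{}' \in \symdestruction{}$ with $\symconfiguration{}' \neq \symconfiguration{}$ also covered $\phi(\symconfiguration{})$, then $\symconfiguration{}'$ would contain both $\phi(\symconfiguration{})$ and its own image $\phi(\symconfiguration{}')$; but these are two distinct members of the mutually exclusive set $\symmaxexclusiveset{}'$, so by definition no valid configuration can contain both, contradicting that $\symconfiguration{}'$ is a valid configuration. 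Hence $\symconfiguration{}$ is the unique coverer of $\phi(\symconfiguration{})$ in $\symsample{}$.

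Combining these observations, $\symmaxexclusiveset{}'$ certifies that $\symdestruction{}$ is doomed while referencing only interactions covered by exactly one configuration of $\symsample{}$, which establishes the observation. I expect the only place that needs care to be the crucial step: it relies on reading ``mutually exclusive'' as ``no valid configuration covers both interactions,'' which holds for cliques of both $G$ and its UP-subgraph $G_2$ (an edge of $G_2$ means $\UP(\syminteraction{} \cup \syminteraction{}') = \bot$, hence the union is infeasible and also an edge of $G$), so the statement is insensitive to which of the two graphs is used in the implementation. The remaining steps --- closure of cliques under taking subsets, injectivity preserving the distinctness property, and the bookkeeping that an interaction uncovered after the destruction has all of its coverers inside $\symdestruction{}$ --- are routine.
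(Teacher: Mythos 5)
Your proof is correct and rests on exactly the same two facts as the paper's: an uncovered interaction has all of its coverers inside $\symdestruction{}$, and mutual exclusivity plus the distinctness of the assigned interactions forbids a second coverer within $\symdestruction{}$. The only cosmetic difference is packaging --- you constructively extract the image sub-clique $\symmaxexclusiveset{}'$ and verify unique coverage element by element, whereas the paper argues by contradiction that \emph{any} certifying clique already has this property --- so this is essentially the paper's argument.
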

\begin{proof}
    Let $\symmaxexclusiveset{}$ be a clique that certifies $\symdestruction{}$ to be doomed,
    and let $\syminteraction{} \in \symmaxexclusiveset{}$ be an interaction that is covered by at least 
    two configurations $\symconfiguration{}_1, \symconfiguration{}_2 \in \symsample{}$.
    If $\{\symconfiguration{}_1, \symconfiguration{}_2\} \nsubseteq \symdestruction{}$, then $\syminteraction{}$ is not uncovered by removing $\symdestruction{}$ from $\symsample{}$.
    Otherwise, we obtain a contradiction to $|\symmaxexclusiveset{}| \geq |\symdestruction{}|$, since each configuration in $\symdestruction{}$ can contain at most one interaction $\syminteraction{} \in \symmaxexclusiveset{}$.
\end{proof}

\subsubsection{Avoiding Doomed Destructions.}
Given a table of cliques, which we build from global lower bounds and previous destroy and repair operations, 
we attempt to avoid doomed destructions for a given destroy size $\symdestroyparameters{}$ as follows.
After selecting $d_1 < \symdestroyparameters{}$ configurations $\symdestruction{} \subset \symsample{}$ to remove at random, we consider each clique $\symmaxexclusiveset{}$ in our table.
If at least one of the removed configurations does not contain any interaction from $\symmaxexclusiveset{}$, we can safely ignore $\symmaxexclusiveset{}$.
Otherwise, we enumerate all configurations from $\symsample{}$ that do not contain an interaction from $\symmaxexclusiveset{}$ and randomly extend $\symdestruction{}$ by one of them, 
until all cliques are processed or destroy size $\symdestroyparameters{}$ is reached.
We use this approach with a probability of $0.3$.

\subsubsection{Randomized Greedy.}
Another idea is to mix random destruction with a greedy aspect; we pick this approach with a probability of $0.5$.
For some given destruction size $\symdestroyparameters{}$, we first select $d_1 < \symdestroyparameters{}$ configurations to be removed uniformly at random.
We then determine, for each remaining configuration $\symconfiguration{}_i$, the number of interactions only covered by $\symconfiguration{}_i$ and remove the $\symdestroyparameters{} - d_1$ configurations with the lowest number of uniquely covered interactions.

\subsection{Repair Strategy Selection.}
We currently select the repair strategy randomly between the available approaches,
with a slightly higher chance of using the non-incremental \symsat{} approach.
The non-incremental approach can use one of four \symsat{} solvers (kissat, CaDiCaL, Lingeling and Cryptominisat),
whereas the incremental approach can only use the latter three solvers which support incremental solving.

\section{Extra Experiments and Tables.}
\label{sec:extra-experments-and-table}
For reference purposes, this section contains extra tables of data collected in our main experiment.

\begin{table*}
\sisetup{
exponent-mode = fixed,
fixed-exponent = 0,
}
\centering
  \vspace{-0.3cm}
\begin{tiny}
\resizebox{.75\textwidth}{!}{\rotatebox{90}{%
\begin{tabular}{lrrrrrrrrrr}
Instance & $|\mathcal{F}|$ & $|\mathcal{C}|$ & Clauses & $|\mathcal{I}|$  & LB        & UB    & LB      & UB       & Time  & Time\\
         &                 &                 &         & (frac.\ rem.)    & Sammy     & Sammy & SampLNS & SampLNS  & Sammy & SampLNS\\
\hline
    
lcm & \num[text-series-to-math=true]{7} & \num[text-series-to-math=true]{5} & \num[text-series-to-math=true]{12} & \num[text-series-to-math=true]{37} (\num[text-series-to-math=true]{0.43}) & \textbf{\num[text-series-to-math=true]{6}} & \textbf{\num[text-series-to-math=true]{6}} & \textbf{\num[text-series-to-math=true]{6}} & \textbf{\num[text-series-to-math=true]{6}} & 5.0 & 0.4\\
calculate & \num[text-series-to-math=true]{8} & \num[text-series-to-math=true]{5} & \num[text-series-to-math=true]{13} & \num[text-series-to-math=true]{38} (\num[text-series-to-math=true]{0.37}) & \textbf{\num[text-series-to-math=true]{5}} & \textbf{\num[text-series-to-math=true]{5}} & \textbf{\num[text-series-to-math=true]{5}} & \textbf{\num[text-series-to-math=true]{5}} & 5.0 & 0.4\\
email & \num[text-series-to-math=true]{9} & \num[text-series-to-math=true]{7} & \num[text-series-to-math=true]{11} & \num[text-series-to-math=true]{70} (\num[text-series-to-math=true]{0.47}) & \textbf{\num[text-series-to-math=true]{6}} & \textbf{\num[text-series-to-math=true]{6}} & \textbf{\num[text-series-to-math=true]{6}} & \textbf{\num[text-series-to-math=true]{6}} & 5.1 & 0.4\\
car & \num[text-series-to-math=true]{14} & \num[text-series-to-math=true]{9} & \num[text-series-to-math=true]{26} & \num[text-series-to-math=true]{77} (\num[text-series-to-math=true]{0.06}) & \textbf{\num[text-series-to-math=true]{5}} & \textbf{\num[text-series-to-math=true]{5}} & \textbf{\num[text-series-to-math=true]{5}} & \textbf{\num[text-series-to-math=true]{5}} & 5.0 & 0.3\\
SafeBali & \num[text-series-to-math=true]{10} & \num[text-series-to-math=true]{10} & \num[text-series-to-math=true]{17} & \num[text-series-to-math=true]{141} (\num[text-series-to-math=true]{0.16}) & \textbf{\num[text-series-to-math=true]{11}} & \textbf{\num[text-series-to-math=true]{11}} & \textbf{\num[text-series-to-math=true]{11}} & \textbf{\num[text-series-to-math=true]{11}} & 5.0 & 0.3\\
ChatClient & \num[text-series-to-math=true]{13} & \num[text-series-to-math=true]{10} & \num[text-series-to-math=true]{18} & \num[text-series-to-math=true]{176} (\num[text-series-to-math=true]{0.64}) & \textbf{\num[text-series-to-math=true]{7}} & \textbf{\num[text-series-to-math=true]{7}} & \textbf{\num[text-series-to-math=true]{7}} & \textbf{\num[text-series-to-math=true]{7}} & 5.3 & 1.3\\
APL & \num[text-series-to-math=true]{15} & \num[text-series-to-math=true]{13} & \num[text-series-to-math=true]{19} & \num[text-series-to-math=true]{285} (\num[text-series-to-math=true]{0.58}) & \textbf{\num[text-series-to-math=true]{7}} & \textbf{\num[text-series-to-math=true]{7}} & \textbf{\num[text-series-to-math=true]{7}} & \textbf{\num[text-series-to-math=true]{7}} & 5.1 & 1.2\\
FameDB & \num[text-series-to-math=true]{15} & \num[text-series-to-math=true]{13} & \num[text-series-to-math=true]{26} & \num[text-series-to-math=true]{302} (\num[text-series-to-math=true]{0.39}) & \textbf{\num[text-series-to-math=true]{8}} & \textbf{\num[text-series-to-math=true]{8}} & \textbf{\num[text-series-to-math=true]{8}} & \textbf{\num[text-series-to-math=true]{8}} & 5.1 & 1.4\\
toybox\_2006-10\ldots & \num[text-series-to-math=true]{16} & \num[text-series-to-math=true]{16} & \num[text-series-to-math=true]{13} & \num[text-series-to-math=true]{457} (\num[text-series-to-math=true]{0.24}) & \textbf{\num[text-series-to-math=true]{8}} & \textbf{\num[text-series-to-math=true]{8}} & \textbf{\num[text-series-to-math=true]{8}} & \textbf{\num[text-series-to-math=true]{8}} & 5.1 & 1.6\\
FeatureIDE & \num[text-series-to-math=true]{21} & \num[text-series-to-math=true]{16} & \num[text-series-to-math=true]{29} & \num[text-series-to-math=true]{473} (\num[text-series-to-math=true]{0.45}) & \textbf{\num[text-series-to-math=true]{8}} & \textbf{\num[text-series-to-math=true]{8}} & \textbf{\num[text-series-to-math=true]{8}} & \textbf{\num[text-series-to-math=true]{8}} & 11.8 & 104.0\\
APL-Model & \num[text-series-to-math=true]{24} & \num[text-series-to-math=true]{21} & \num[text-series-to-math=true]{32} & \num[text-series-to-math=true]{782} (\num[text-series-to-math=true]{0.37}) & \textbf{\num[text-series-to-math=true]{8}} & \textbf{\num[text-series-to-math=true]{8}} & \textbf{\num[text-series-to-math=true]{8}} & \textbf{\num[text-series-to-math=true]{8}} & 6.6 & 13.9\\
TightVNC & \num[text-series-to-math=true]{23} & \num[text-series-to-math=true]{21} & \num[text-series-to-math=true]{29} & \num[text-series-to-math=true]{788} (\num[text-series-to-math=true]{0.48}) & \textbf{\num[text-series-to-math=true]{8}} & \textbf{\num[text-series-to-math=true]{8}} & \textbf{\num[text-series-to-math=true]{8}} & \textbf{\num[text-series-to-math=true]{8}} & 5.8 & 17.2\\
SortingLine & \num[text-series-to-math=true]{33} & \num[text-series-to-math=true]{23} & \num[text-series-to-math=true]{57} & \num[text-series-to-math=true]{948} (\num[text-series-to-math=true]{0.29}) & \textbf{\num[text-series-to-math=true]{9}} & \textbf{\num[text-series-to-math=true]{9}} & \textbf{\num[text-series-to-math=true]{9}} & \textbf{\num[text-series-to-math=true]{9}} & 5.1 & 7.9\\
gpl & \num[text-series-to-math=true]{39} & \num[text-series-to-math=true]{25} & \num[text-series-to-math=true]{88} & \num[text-series-to-math=true]{1067} (\num[text-series-to-math=true]{0.18}) & \textbf{\num[text-series-to-math=true]{16}} & \textbf{\num[text-series-to-math=true]{16}} & \textbf{\num[text-series-to-math=true]{16}} & \textbf{\num[text-series-to-math=true]{16}} & 5.1 & 4.4\\
PPU & \num[text-series-to-math=true]{29} & \num[text-series-to-math=true]{25} & \num[text-series-to-math=true]{58} & \num[text-series-to-math=true]{1112} (\num[text-series-to-math=true]{0.22}) & \textbf{\num[text-series-to-math=true]{12}} & \textbf{\num[text-series-to-math=true]{12}} & \textbf{\num[text-series-to-math=true]{12}} & \textbf{\num[text-series-to-math=true]{12}} & 5.0 & 4.3\\
dell & \num[text-series-to-math=true]{39} & \num[text-series-to-math=true]{37} & \num[text-series-to-math=true]{230} & \num[text-series-to-math=true]{2273} (\num[text-series-to-math=true]{0.07}) & \textbf{\num[text-series-to-math=true]{31}} & \textbf{\num[text-series-to-math=true]{31}} & \textbf{\num[text-series-to-math=true]{31}} & \textbf{\num[text-series-to-math=true]{31}} & 5.1 & 42.2\\
berkeleyDB1 & \num[text-series-to-math=true]{79} & \num[text-series-to-math=true]{53} & \num[text-series-to-math=true]{133} & \num[text-series-to-math=true]{5048} (\num[text-series-to-math=true]{0.21}) & \textbf{\num[text-series-to-math=true]{15}} & \textbf{\num[text-series-to-math=true]{15}} & \textbf{\num[text-series-to-math=true]{15}} & \textbf{\num[text-series-to-math=true]{15}} & 5.1 & 116.0\\
axTLS & \num[text-series-to-math=true]{106} & \num[text-series-to-math=true]{58} & \num[text-series-to-math=true]{177} & \num[text-series-to-math=true]{5596} (\num[text-series-to-math=true]{0.39}) & \num[text-series-to-math=true]{10} & \num[text-series-to-math=true]{11} & \num[text-series-to-math=true]{10} & \num[text-series-to-math=true]{11} & 3600.0 & 3604.7\\
Violet & \num[text-series-to-math=true]{115} & \num[text-series-to-math=true]{88} & \num[text-series-to-math=true]{181} & \num[text-series-to-math=true]{14517} (\num[text-series-to-math=true]{0.20}) & \num[text-series-to-math=true]{16}, \textbf{\num[text-series-to-math=true]{17}}, \textbf{\num[text-series-to-math=true]{17}} & \textbf{\num[text-series-to-math=true]{17}} & \num[text-series-to-math=true]{16} & \textbf{\num[text-series-to-math=true]{17}} & 958.0 & 3621.4\\
berkeleyDB2 & \num[text-series-to-math=true]{233} & \num[text-series-to-math=true]{103} & \num[text-series-to-math=true]{507} & \num[text-series-to-math=true]{20061} (\num[text-series-to-math=true]{0.06}) & \textbf{\num[text-series-to-math=true]{12}} & \textbf{\num[text-series-to-math=true]{12}} & \textbf{\num[text-series-to-math=true]{12}} & \textbf{\num[text-series-to-math=true]{12}} & 5.2 & 185.2\\
soletta\_2015-0\ldots & \num[text-series-to-math=true]{129} & \num[text-series-to-math=true]{129} & \num[text-series-to-math=true]{192} & \num[text-series-to-math=true]{24018} (\num[text-series-to-math=true]{0.25}) & \textbf{\num[text-series-to-math=true]{24}} & \textbf{\num[text-series-to-math=true]{24}} & \textbf{\num[text-series-to-math=true]{24}} & \textbf{\num[text-series-to-math=true]{24}} & 5.2 & 23.4\\
BankingSoftware & \num[text-series-to-math=true]{128} & \num[text-series-to-math=true]{122} & \num[text-series-to-math=true]{184} & \num[text-series-to-math=true]{29188} (\num[text-series-to-math=true]{0.55}) & \textbf{\num[text-series-to-math=true]{29}} & \textbf{\num[text-series-to-math=true]{29}} & \textbf{\num[text-series-to-math=true]{29}} & \textbf{\num[text-series-to-math=true]{29}} & 5.1 & 281.5\\
BattleofTanks & \num[text-series-to-math=true]{137} & \num[text-series-to-math=true]{131} & \num[text-series-to-math=true]{755} & \num[text-series-to-math=true]{33452} (\num[text-series-to-math=true]{0.38}) & \num[text-series-to-math=true]{256} & \num[text-series-to-math=true]{283}, \num[text-series-to-math=true]{290}, \num[text-series-to-math=true]{301} & \num[text-series-to-math=true]{256} & \num[text-series-to-math=true]{294}, \num[text-series-to-math=true]{296}, \num[text-series-to-math=true]{298} & 3600.0 & 3613.6\\
E-Shop & \num[text-series-to-math=true]{213} & \num[text-series-to-math=true]{175} & \num[text-series-to-math=true]{273} & \num[text-series-to-math=true]{60465} (\num[text-series-to-math=true]{0.59}) & \num[text-series-to-math=true]{11} & \num[text-series-to-math=true]{12} & \num[text-series-to-math=true]{9}, \num[text-series-to-math=true]{10}, \num[text-series-to-math=true]{10} & \num[text-series-to-math=true]{12} & 3600.0 & 3631.6\\
fiasco\_2017-09\ldots & \num[text-series-to-math=true]{230} & \num[text-series-to-math=true]{230} & \num[text-series-to-math=true]{1181} & \num[text-series-to-math=true]{83612} (\num[text-series-to-math=true]{0.06}) & \textbf{\num[text-series-to-math=true]{225}} & \textbf{\num[text-series-to-math=true]{225}} & \textbf{\num[text-series-to-math=true]{225}} & \textbf{\num[text-series-to-math=true]{225}} & 6.2 & 777.1\\
fiasco\_2020-12\ldots & \num[text-series-to-math=true]{258} & \num[text-series-to-math=true]{258} & \num[text-series-to-math=true]{1542} & \num[text-series-to-math=true]{102755} (\num[text-series-to-math=true]{0.05}) & \textbf{\num[text-series-to-math=true]{196}} & \textbf{\num[text-series-to-math=true]{196}} & \textbf{\num[text-series-to-math=true]{196}} & \textbf{\num[text-series-to-math=true]{196}} & 6.2 & 323.7\\
uclibc\_2008-06\ldots & \num[text-series-to-math=true]{263} & \num[text-series-to-math=true]{263} & \num[text-series-to-math=true]{1699} & \num[text-series-to-math=true]{127211} (\num[text-series-to-math=true]{0.21}) & \textbf{\num[text-series-to-math=true]{505}} & \textbf{\num[text-series-to-math=true]{505}} & \textbf{\num[text-series-to-math=true]{505}} & \textbf{\num[text-series-to-math=true]{505}} & 5.2 & 120.8\\
uclibc\_2020-12\ldots & \num[text-series-to-math=true]{272} & \num[text-series-to-math=true]{272} & \num[text-series-to-math=true]{1670} & \num[text-series-to-math=true]{135694} (\num[text-series-to-math=true]{0.19}) & \textbf{\num[text-series-to-math=true]{365}} & \textbf{\num[text-series-to-math=true]{365}} & \textbf{\num[text-series-to-math=true]{365}} & \textbf{\num[text-series-to-math=true]{365}} & 5.2 & 44.8\\
toybox\_2020-12\ldots & \num[text-series-to-math=true]{334} & \num[text-series-to-math=true]{334} & \num[text-series-to-math=true]{92} & \num[text-series-to-math=true]{206665} (\num[text-series-to-math=true]{0.75}) & \num[text-series-to-math=true]{8} & \num[text-series-to-math=true]{13} & \num[text-series-to-math=true]{7}, \num[text-series-to-math=true]{8}, \num[text-series-to-math=true]{8} & \num[text-series-to-math=true]{13} & 3600.0 & 3614.5\\
DMIE & \num[text-series-to-math=true]{346} & \num[text-series-to-math=true]{344} & \num[text-series-to-math=true]{587} & \num[text-series-to-math=true]{235264} (\num[text-series-to-math=true]{0.91}) & \textbf{\num[text-series-to-math=true]{16}} & \textbf{\num[text-series-to-math=true]{16}} & \textbf{\num[text-series-to-math=true]{16}} & \textbf{\num[text-series-to-math=true]{16}} & 6.3 & 101.8\\
soletta\_2017-0\ldots & \num[text-series-to-math=true]{458} & \num[text-series-to-math=true]{458} & \num[text-series-to-math=true]{1862} & \num[text-series-to-math=true]{269353} (\num[text-series-to-math=true]{0.08}) & \textbf{\num[text-series-to-math=true]{37}} & \textbf{\num[text-series-to-math=true]{37}} & \num[text-series-to-math=true]{31}, \num[text-series-to-math=true]{34}, \textbf{\num[text-series-to-math=true]{37}} & \textbf{\num[text-series-to-math=true]{37}} & 7.0 & 3609.1\\
fs\_2017-05-22 & \num[text-series-to-math=true]{1342} & \num[text-series-to-math=true]{430} & \num[text-series-to-math=true]{5714} & \num[text-series-to-math=true]{277429} (\num[text-series-to-math=true]{0.00}) & \textbf{\num[text-series-to-math=true]{396}} & \textbf{\num[text-series-to-math=true]{396}} & \textbf{\num[text-series-to-math=true]{396}} & \textbf{\num[text-series-to-math=true]{396}} & 5.3 & 371.2\\
WaterlooGenerated & \num[text-series-to-math=true]{580} & \num[text-series-to-math=true]{423} & \num[text-series-to-math=true]{877} & \num[text-series-to-math=true]{347472} (\num[text-series-to-math=true]{0.54}) & \textbf{\num[text-series-to-math=true]{82}} & \textbf{\num[text-series-to-math=true]{82}} & \textbf{\num[text-series-to-math=true]{82}} & \textbf{\num[text-series-to-math=true]{82}} & 5.5 & 147.4\\
financial\_serv\ldots & \num[text-series-to-math=true]{2129} & \num[text-series-to-math=true]{585} & \num[text-series-to-math=true]{7898} & \num[text-series-to-math=true]{537635} (\num[text-series-to-math=true]{0.03}) & \textbf{\num[text-series-to-math=true]{4340}} & \textbf{\num[text-series-to-math=true]{4340}} & \num[text-series-to-math=true]{4324}, \num[text-series-to-math=true]{4336}, \num[text-series-to-math=true]{4336} & \textbf{\num[text-series-to-math=true]{4340}}, \textbf{\num[text-series-to-math=true]{4340}}, \num[text-series-to-math=true]{4341} & 135.1 & 3652.6\\
financial-servi\ldots & \num[text-series-to-math=true]{2126} & \num[text-series-to-math=true]{589} & \num[text-series-to-math=true]{7839} & \num[text-series-to-math=true]{542596} (\num[text-series-to-math=true]{0.03}) & \textbf{\num[text-series-to-math=true]{4352}} & \textbf{\num[text-series-to-math=true]{4352}} & \num[text-series-to-math=true]{4334}, \num[text-series-to-math=true]{4336}, \num[text-series-to-math=true]{4348} & \textbf{\num[text-series-to-math=true]{4352}}, \textbf{\num[text-series-to-math=true]{4352}}, \num[text-series-to-math=true]{4355} & 119.7 & 3630.4\\
busybox\_2007-0\ldots & \num[text-series-to-math=true]{540} & \num[text-series-to-math=true]{540} & \num[text-series-to-math=true]{429} & \num[text-series-to-math=true]{575209} (\num[text-series-to-math=true]{0.38}) & \textbf{\num[text-series-to-math=true]{21}} & \textbf{\num[text-series-to-math=true]{21}} & \textbf{\num[text-series-to-math=true]{21}} & \textbf{\num[text-series-to-math=true]{21}} & 6.3 & 134.1\\
busybox-1\_18\_0 & \num[text-series-to-math=true]{1186} & \num[text-series-to-math=true]{667} & \num[text-series-to-math=true]{1596} & \num[text-series-to-math=true]{865712} (\num[text-series-to-math=true]{0.67}) & \num[text-series-to-math=true]{13} & \num[text-series-to-math=true]{16} & \num[text-series-to-math=true]{12}, \num[text-series-to-math=true]{12}, \num[text-series-to-math=true]{13} & \num[text-series-to-math=true]{17} & 3600.5 & 3621.6\\
am31\_sim & \num[text-series-to-math=true]{1017} & \num[text-series-to-math=true]{725} & \num[text-series-to-math=true]{1865} & \num[text-series-to-math=true]{975956} (\num[text-series-to-math=true]{0.10}) & \num[text-series-to-math=true]{29} & \num[text-series-to-math=true]{33} & \num[text-series-to-math=true]{24}, \num[text-series-to-math=true]{28}, \num[text-series-to-math=true]{29} & \num[text-series-to-math=true]{35}, \num[text-series-to-math=true]{35}, \num[text-series-to-math=true]{36} & 3600.2 & 3627.8\\
atlas\_mips32\_\ldots & \num[text-series-to-math=true]{1066} & \num[text-series-to-math=true]{753} & \num[text-series-to-math=true]{1955} & \num[text-series-to-math=true]{1061278} (\num[text-series-to-math=true]{0.10}) & \textbf{\num[text-series-to-math=true]{34}} & \textbf{\num[text-series-to-math=true]{34}} & \num[text-series-to-math=true]{31}, \num[text-series-to-math=true]{33}, \num[text-series-to-math=true]{33} & \num[text-series-to-math=true]{37}, \num[text-series-to-math=true]{37}, \num[text-series-to-math=true]{38} & 811.1 & 3633.8\\
EMBToolkit & \num[text-series-to-math=true]{7990} & \num[text-series-to-math=true]{850} & \num[text-series-to-math=true]{24004} & \num[text-series-to-math=true]{1074596} (\num[text-series-to-math=true]{0.20}) & \textbf{\num[text-series-to-math=true]{1872}} & \textbf{\num[text-series-to-math=true]{1872}} & \textbf{\num[text-series-to-math=true]{1872}} & \textbf{\num[text-series-to-math=true]{1872}} & 5.8 & 1365.8\\
integrator\_arm7 & \num[text-series-to-math=true]{1162} & \num[text-series-to-math=true]{794} & \num[text-series-to-math=true]{2138} & \num[text-series-to-math=true]{1174231} (\num[text-series-to-math=true]{0.10}) & \textbf{\num[text-series-to-math=true]{34}} & \textbf{\num[text-series-to-math=true]{34}} & \num[text-series-to-math=true]{27}, \num[text-series-to-math=true]{32}, \num[text-series-to-math=true]{33} & \num[text-series-to-math=true]{37}, \num[text-series-to-math=true]{37}, \num[text-series-to-math=true]{38} & 784.2 & 3619.2\\
XSEngine & \num[text-series-to-math=true]{1158} & \num[text-series-to-math=true]{804} & \num[text-series-to-math=true]{2096} & \num[text-series-to-math=true]{1210521} (\num[text-series-to-math=true]{0.10}) & \textbf{\num[text-series-to-math=true]{34}} & \textbf{\num[text-series-to-math=true]{34}} & \num[text-series-to-math=true]{28}, \num[text-series-to-math=true]{30}, \num[text-series-to-math=true]{32} & \num[text-series-to-math=true]{37}, \num[text-series-to-math=true]{38}, \num[text-series-to-math=true]{38} & 407.9 & 3612.2\\
aaed2000 & \num[text-series-to-math=true]{1183} & \num[text-series-to-math=true]{819} & \num[text-series-to-math=true]{2190} & \num[text-series-to-math=true]{1250781} (\num[text-series-to-math=true]{0.10}) & \textbf{\num[text-series-to-math=true]{52}} & \textbf{\num[text-series-to-math=true]{52}} & \num[text-series-to-math=true]{46}, \num[text-series-to-math=true]{51}, \num[text-series-to-math=true]{51} & \textbf{\num[text-series-to-math=true]{52}}, \num[text-series-to-math=true]{53}, \num[text-series-to-math=true]{53} & 260.7 & 3611.8\\
ea2468 & \num[text-series-to-math=true]{1226} & \num[text-series-to-math=true]{849} & \num[text-series-to-math=true]{2235} & \num[text-series-to-math=true]{1286744} (\num[text-series-to-math=true]{0.09}) & \textbf{\num[text-series-to-math=true]{34}} & \textbf{\num[text-series-to-math=true]{34}} & \num[text-series-to-math=true]{29}, \num[text-series-to-math=true]{31}, \num[text-series-to-math=true]{32} & \num[text-series-to-math=true]{37}, \num[text-series-to-math=true]{38}, \num[text-series-to-math=true]{39} & 1739.3 & 3611.2\\
busybox-1\_29\_2 & \num[text-series-to-math=true]{1018} & \num[text-series-to-math=true]{1018} & \num[text-series-to-math=true]{997} & \num[text-series-to-math=true]{2045110} (\num[text-series-to-math=true]{0.34}) & \textbf{\num[text-series-to-math=true]{22}} & \textbf{\num[text-series-to-math=true]{22}} & \num[text-series-to-math=true]{18}, \num[text-series-to-math=true]{18}, \num[text-series-to-math=true]{19} & \textbf{\num[text-series-to-math=true]{22}} & 125.1 & 3617.0\\
busybox\_2020-1\ldots & \num[text-series-to-math=true]{1050} & \num[text-series-to-math=true]{1050} & \num[text-series-to-math=true]{996} & \num[text-series-to-math=true]{2176803} (\num[text-series-to-math=true]{0.37}) & \textbf{\num[text-series-to-math=true]{20}} & \textbf{\num[text-series-to-math=true]{20}} & \num[text-series-to-math=true]{18}, \num[text-series-to-math=true]{18}, \num[text-series-to-math=true]{19} & \num[text-series-to-math=true]{21} & 30.4 & 3637.4\\
eCos-3-0\_i386pc & \num[text-series-to-math=true]{1245} & \num[text-series-to-math=true]{1244} & \num[text-series-to-math=true]{3723} & \num[text-series-to-math=true]{2910229} (\num[text-series-to-math=true]{0.04}) & \textbf{\num[text-series-to-math=true]{37}} & \textbf{\num[text-series-to-math=true]{37}} & \num[text-series-to-math=true]{23}, \num[text-series-to-math=true]{31}, \num[text-series-to-math=true]{34} & \num[text-series-to-math=true]{41}, \num[text-series-to-math=true]{42}, \num[text-series-to-math=true]{42} & 715.6 & 3641.6\\
FreeBSD-8\_0\_0 & \num[text-series-to-math=true]{1397} & \num[text-series-to-math=true]{1396} & \num[text-series-to-math=true]{15692} & \num[text-series-to-math=true]{3765597} (\num[text-series-to-math=true]{0.35}) & \num[text-series-to-math=true]{33}, \num[text-series-to-math=true]{34}, \num[text-series-to-math=true]{34} & \num[text-series-to-math=true]{40} & \num[text-series-to-math=true]{28}, \num[text-series-to-math=true]{29}, \num[text-series-to-math=true]{30} & \num[text-series-to-math=true]{43}, \num[text-series-to-math=true]{44}, \num[text-series-to-math=true]{45} & 3602.9 & 3636.0\\
Automotive01 & \num[text-series-to-math=true]{2047} & \num[text-series-to-math=true]{1790} & \num[text-series-to-math=true]{9117} & \num[text-series-to-math=true]{5772017} (\num[text-series-to-math=true]{0.12}) & \num[text-series-to-math=true]{529}, \num[text-series-to-math=true]{530}, \num[text-series-to-math=true]{530} & \num[text-series-to-math=true]{727}, \num[text-series-to-math=true]{728}, \num[text-series-to-math=true]{734} & \num[text-series-to-math=true]{266}, \num[text-series-to-math=true]{417}, \num[text-series-to-math=true]{525} & \num[text-series-to-math=true]{795}, \num[text-series-to-math=true]{806}, \num[text-series-to-math=true]{814} & 3600.1 & 3637.8\\
linux\_2\_6\_33\ldots & \num[text-series-to-math=true]{15404} & \num[text-series-to-math=true]{5432} & \num[text-series-to-math=true]{33729} & \num[text-series-to-math=true]{55249944} (\num[text-series-to-math=true]{0.37}) & \textbf{\num[text-series-to-math=true]{483}} & \textbf{\num[text-series-to-math=true]{483}} & \num[text-series-to-math=true]{83}, \num[text-series-to-math=true]{261}, \num[text-series-to-math=true]{317} & \textbf{\num[text-series-to-math=true]{483}} & 856.5 & 3816.8\\
linux\_2\_6\_28\ldots & \num[text-series-to-math=true]{6889} & \num[text-series-to-math=true]{6888} & \num[text-series-to-math=true]{87604} & \num[text-series-to-math=true]{92540449} (\num[text-series-to-math=true]{0.11}) & \textbf{\num[text-series-to-math=true]{433}} & \textbf{\num[text-series-to-math=true]{433}} & \num[text-series-to-math=true]{144}, \num[text-series-to-math=true]{289}, \num[text-series-to-math=true]{429} & \num[text-series-to-math=true]{486}, \num[text-series-to-math=true]{490}, \num[text-series-to-math=true]{508} & 328.1 & 3955.7\\
Automotive02\_V1 & \num[text-series-to-math=true]{12478} & \num[text-series-to-math=true]{12289} & \num[text-series-to-math=true]{234472} & \num[text-series-to-math=true]{301619972} (\num[text-series-to-math=true]{0.32}) & \textbf{\num[text-series-to-math=true]{37327}} & \textbf{\num[text-series-to-math=true]{37327}} & --- & --- & 1823.0 & ---\\
Automotive02\_V2 & \num[text-series-to-math=true]{15883} & \num[text-series-to-math=true]{15683} & \num[text-series-to-math=true]{339036} & \num[text-series-to-math=true]{491275076} (\num[text-series-to-math=true]{0.30}) & \textbf{\num[text-series-to-math=true]{38115}} & \textbf{\num[text-series-to-math=true]{38115}} & --- & --- & 2101.9 & ---\\
Automotive02\_V3 & \num[text-series-to-math=true]{16607} & \num[text-series-to-math=true]{16224} & \num[text-series-to-math=true]{343538} & \num[text-series-to-math=true]{525717328} (\num[text-series-to-math=true]{0.30}) & \textbf{\num[text-series-to-math=true]{38115}} & \textbf{\num[text-series-to-math=true]{38115}} & --- & --- & 2213.3 & ---\\
Automotive02\_V4 & \num[text-series-to-math=true]{16809} & \num[text-series-to-math=true]{16376} & \num[text-series-to-math=true]{346172} & \num[text-series-to-math=true]{535654597} (\num[text-series-to-math=true]{0.30}) & \textbf{\num[text-series-to-math=true]{38577}} & \textbf{\num[text-series-to-math=true]{38577}} & --- & --- & 2272.8 & ---\\
\hline\ \\
\end{tabular}}}
\end{tiny}
\vspace{-0.2cm}
\caption{\begin{small}Table of the instances of the benchmark set and the outcomes of Sammy and SampLNS.
Bold numbers indicate optimal solutions.
The bound columns give minimum, median and maximum values achieved by the \num{5} repeat runs we performed per instance and algorithm, unless all three numbers are the same.
The runtime shown is the median runtime across repeat runs.
For Sammy, we ran the initial heuristic for a minimum of \qty{5}{s}, therefore the runtime, even for very small instances,
is never below \qty{5}{s} since the initial lower bound was never sufficient to prove optimality.
The number of feasible interactions given is before simplification and reduction;
the fraction remaining after simplification and reduction is given in parentheses.\end{small}}
\label{tab:instance_summary}
\end{table*}

We also address (at least in part) additional research questions regarding the LNS repair subproblems,
and assess the performance of our implementation on a large superset of our benchmark instance set comprised of \num{1148} different instances.
\begin{description}
  \item[RQ5] How do the different SAT solvers and repair approaches perform relative to each other on difficult instances?
  \item[RQ6] How is the size of the gap between our symmetry breaker and the number of allowed configurations distributed?
  \item[RQ7] How does the quality of the symmetry breaker affect the performance of our repair approaches?
\end{description}
To assess these questions, we first run Sammy with a time limit of \qty{1}{h} on each of the \num{1148} instances.
We then eliminate the instances that are trivial or at least relatively easy to solve to optimality,
and only retain the instances that were not solved to provable optimality within \qty{10}{min}.
Not including the large \texttt{AutomotiveV02} instances, which are solved relatively quickly for their size but take considerable time in the initial phase,
\num{122} instances (\qty{10.7}{\percent} of the \num{1144} considered instances) remain;
in other words, \qty{89.3}{\percent} of the instances in the large instance set could be solved to provable optimality within \qty{10}{min}.

On the remaining instances, we reran Sammy, exporting each LNS repair subproblem produced by our destroy operations,
resulting in a total of \num{103455} exported subproblems.
We then ran our mutually exclusive set heuristic on each subproblem for a default \num{10} cutting plane or pricing iterations.
In other words, on each subproblem, we completed the inner loop eliminating all violated non-edges from the relaxation \num{10} times, 
adding cutting planes or additional interactions via pricing after each completion unless the mutually exclusive set was found to be optimal.
This number of iterations is also the default that was used during the other experiments and was identified 
as sensible trade-off between runtime and symmetry breaker quality by preliminary experiments.
Though usually much quicker at a median runtime of just \qty{0.07}{s} per subproblem, this took at most \qty{1}{s} per subproblem 
(only measuring time actually spent computing mutually exclusive sets).

For a total of \num{16634} (\qty{16.1}{\percent}) of the subproblems, that proved the existing coverage to be optimal by finding a matching mutually exclusive set;
the full distribution of the gaps between the number of allowed configurations and the symmetry-breaking mutually exclusive set is shown in \cref{fig:gap-to-allowed-nconf}.
To answer RQ6, we see that the majority of subproblems have a gap of at most $1$ between the number of allowed configurations and the symmetry breaker,
and over \qty{90}{\percent} have a gap of at most $3$.
\begin{figure}%
  \centering%
  \includegraphics{./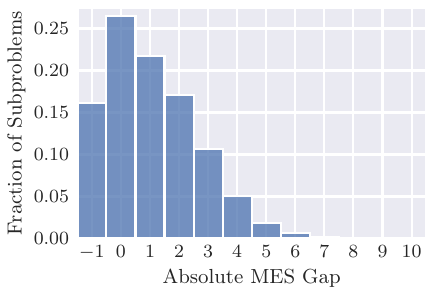}%
  \caption{Histogram showing the absolute gap values between the
           number of allowed configurations in the repair subproblem and the 
           size of the mutually exclusive set used as symmetry breaker.
           A value of $-1$ means that the existing solution is optimal.}
  \label{fig:gap-to-allowed-nconf}
\end{figure}

Excluding the subproblems shown to be optimally solved by the symmetry breaker, \num{86821} subproblems remained.
From these remaining subproblems, we chose \num{500} subproblems uniformly at random and ran each of the \num{13} possible approach/solver combinations on each with a time limit of \qty{30}{min}.
Each approach was given the same stored mutually exclusive set; only the alternating LB-UB approach attempts to improve upon that set during the subproblem solve.
The performance of the individual approaches is shown in \cref{fig:subproblem-solver-comparison}.
In total, \num{478} of the \num{500} subproblems were solved within the time limit by at least one approach/solver combination.
We see that, in general, the individual approaches are relatively close in terms of performance;
looking more precisely, the alternating LB-UB approach appears to perform slightly worse than the others.
Similarly, using Lingeling and Cryptominisat as backend seems to perform slightly worse than CaDiCaL or kissat (which does not support incremental solving).
The best approaches seem to be non-incremental kissat as well as either of the non-alternating incremental approaches based on CaDiCaL.
However, in particular when considering lower runtimes, the virtual best solver, 
i.e., assuming that an oracle told us in advance which solver to use, has a recognizable lead on any individual solver.
The outcome (i.e., whether an improved assignment is possible or not) seems to be important when it comes to the relative performance.
For subproblems where an improvement is eventually found, the non-incremental approach based on kissat is almost as good as the virtual best solver;
for infeasible subproblems, the greedy incremental approach based on CaDiCaL appears to be better suited, in particular when considering lower runtimes,
with the simple incremental approach and CaDiCaL in between in either case.
To partially answer RQ5, results suggest that among the evaluated SAT solvers, CaDiCaL and kissat appear to be superior by a small margin,
and that the alternating LB-UB-approach, which is still useful as full problem approach that can make use of improved lower bounds as it (or the LB worker) find them,
might not be worthwhile as a repair subproblem solver.
They also suggest that the non-incremental approach may be slightly superior when it comes to finding improved assignments,
whereas the incremental approaches may be better at proving infeasibility of repair subproblems.
\begin{figure*}
  \centering
  \includegraphics{./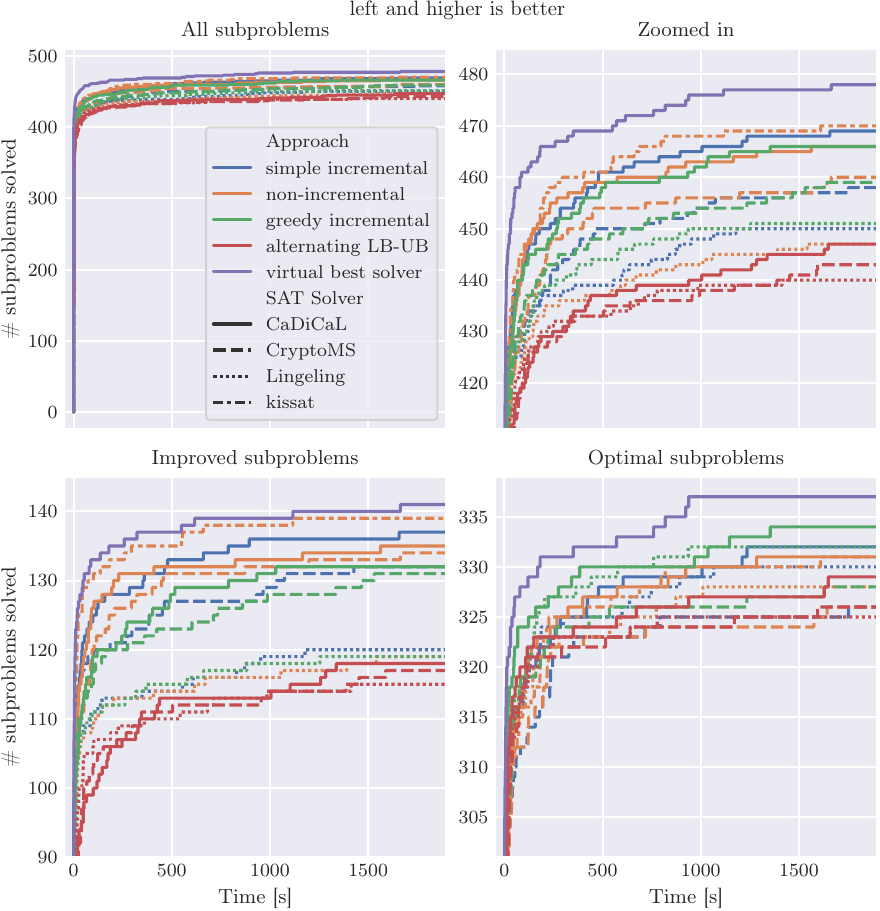}%
  \caption{The performance (repair subproblems solved over time) of the individual approach/solver combinations.
           The top row shows all subproblems, the bottom row shows only the subproblems that yielded improved solutions (left)
           and those for which the existing solution was already optimal (left).}
  \label{fig:subproblem-solver-comparison}
\end{figure*}

Finally, to address RQ7, consider \cref{fig:subproblem-solvers-by-gap}, which shows the performance of the individual solvers
considering subproblems for which the gap between symmetry breaker and allowed configurations falls into a certain range.
We see that subproblems with a low gap are almost all solved very quickly and with little performance difference between the individual approaches.
The performance degrades and becomes less homogeneous with growing gap.
This may suggest that, as one might expect, the impact of our symmetry breaker is substantial;
however, a good part of the performance difference may also be reflective of the fact that we find better symmetry breaker for easier subproblems.
To resolve this, we might consider spending more time on finding better mutually exclusive sets for the subproblems with nontrivial gaps before re-running them.
It may however also be advisable to tune the time spent searching for mutually exclusive sets based on the absolute gap; we leave this as future work.
\begin{figure*}%
  \centering%
  \includegraphics{./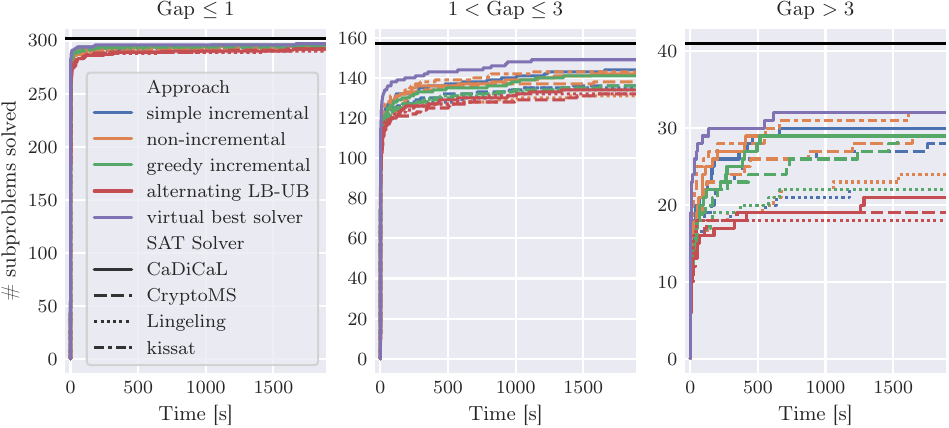}%
  \caption{The performance of the subproblem solvers for instances with very low gap between symmetry breaker and allowed configurations (left),
           moderate gap (center) and relatively large gap (right).
           The black line indicates the total number of subproblem in that gap range.}
  \label{fig:subproblem-solvers-by-gap}
\end{figure*}



\clearpage
\bibliographystyle{siamplain}
\bibliography{references}

\begin{thebibliography}{10}

\bibitem{AMS+:TOSEM18}
{\sc I.~Abal, J.~Melo, S.~St\u{a}nciulescu, C.~Brabrand, M.~Ribeiro, and
  A.~W\k{a}sowski}, {\em Variability bugs in highly configurable systems: A
  qualitative analysis}, Trans.\ on Software Engineering and Methodology
  (TOSEM), 26 (2018), pp.~10:1--10:34, \url{https://doi.org/10.1145/3149119}.

\bibitem{AKT+:GPCE16}
{\sc M.~Al-Hajjaji, S.~Krieter, T.~Th\"{u}m, M.~Lochau, and G.~Saake}, {\em
  {IncLing: Efficient Product-line Testing Using Incremental Pairwise
  Sampling}}, in Proc.\ Int'l Conf.\ on Generative Programming: Concepts \&
  Experiences (GPCE), New York, NY, USA, Oct. 2016, ACM, pp.~144--155,
  \url{https://doi.org/10.1145/2993236.2993253}.

\bibitem{ansotegui2022incomplete}
{\sc C.~Ans{\'o}tegui, F.~Many{\`a}, J.~Ojeda, J.~M. Salvia, and E.~Torres},
  {\em Incomplete maxsat approaches for combinatorial testing}, Journal of
  Heuristics, 28 (2022), pp.~377--431.

\bibitem{ansotegui_et_al:LIPIcs.CP.2021.12}
{\sc C.~Ans\'{o}tegui, J.~Ojeda, and E.~Torres}, {\em {Building High Strength
  Mixed Covering Arrays with Constraints}}, in 27th International Conference on
  Principles and Practice of Constraint Programming (CP 2021), L.~D. Michel,
  ed., vol.~210 of Leibniz International Proceedings in Informatics (LIPIcs),
  Dagstuhl, Germany, 2021, Schloss Dagstuhl -- Leibniz-Zentrum f{\"u}r
  Informatik, pp.~12:1--12:17, \url{https://doi.org/10.4230/LIPIcs.CP.2021.12},
  \url{https://drops.dagstuhl.de/opus/volltexte/2021/15303}.

\bibitem{ABKS13}
{\sc S.~Apel, D.~Batory, C.~K\"astner, and G.~Saake}, {\em {Feature-Oriented
  Software Product Lines}}, Springer, Berlin, Heidelberg, Germany, 2013,
  \url{https://doi.org/10.1007/978-3-642-37521-7},
  \url{https://doi.org/10.1007/978-3-642-37521-7}.

\bibitem{aspvall1979linear}
{\sc B.~Aspvall, M.~F. Plass, and R.~E. Tarjan}, {\em A linear-time algorithm
  for testing the truth of certain quantified boolean formulas}, Information
  processing letters, 8 (1979), pp.~121--123.

\bibitem{BiereFallerFazekasFleuryFroleyks-CAV24}
{\sc A.~Biere, T.~Faller, K.~Fazekas, M.~Fleury, N.~Froleyks, and F.~Pollitt},
  {\em {CaDiCaL 2.0}}, in Computer Aided Verification - 36th International
  Conference, {CAV} 2024, Montreal, QC, Canada, July 24-27, 2024, Proceedings,
  Part {I}, A.~Gurfinkel and V.~Ganesh, eds., vol.~14681 of Lecture Notes in
  Computer Science, Springer, 2024, pp.~133--152,
  \url{https://doi.org/10.1007/978-3-031-65627-9\_7}.

\bibitem{biere2021preprocessing}
{\sc A.~Biere, M.~J{\"a}rvisalo, and B.~Kiesl}, {\em Preprocessing in sat
  solving}, in Handbook of Satisfiability, IOS press, 2021, pp.~391--435.

\bibitem{BDC+:SETSS89}
{\sc T.~F. Bowen, F.~S. Dworack, C.-H. Chow, N.~Griffeth, G.~E. Herman, and
  Y.-J. Lin}, {\em {The Feature Interaction Problem in Telecommunications
  Systems}}, in Proc.\ Int'l Conf.\ on Software Engineering for
  Telecommunication Switching Systems (SETSS), Washington, DC, USA, July 1989,
  IEEE, pp.~59--62.

\bibitem{CKMR03}
{\sc M.~Calder, M.~Kolberg, E.~H. Magill, and S.~Reiff-Marganiec}, {\em
  {Feature Interaction: A Critical Review and Considered Forecast}}, Computer
  Networks, 41 (2003), pp.~115--141.

\bibitem{MMCA:IST14}
{\sc I.~D. Carmo~Machado, J.~D. McGregor, Y.~a.~C. Cavalcanti, and E.~S.
  De~Almeida}, {\em {On Strategies for Testing Software Product Lines: A
  Systematic Literature Review}}, J.\ Information and Software Technology
  (IST), 56 (2014), pp.~1183--1199,
  \url{https://doi.org/http://dx.doi.org/10.1016/j.infsof.2014.04.002}.

\bibitem{C:MOR79}
{\sc V.~Chvátal}, {\em {A Greedy Heuristic for the Set-Covering Problem}},
  Mathematics of Operations Research (MOR), 4 (1979), pp.~233--235.

\bibitem{duan2017optimizing}
{\sc F.~Duan, Y.~Lei, L.~Yu, R.~N. Kacker, and D.~R. Kuhn}, {\em Optimizing
  ipog's vertical growth with constraints based on hypergraph coloring}, in
  2017 IEEE International Conference on software testing, verification and
  validation workshops (ICSTW), IEEE, 2017, pp.~181--188.

\bibitem{een2005effective}
{\sc N.~E{\'e}n and A.~Biere}, {\em Effective preprocessing in sat through
  variable and clause elimination}, in International conference on theory and
  applications of satisfiability testing, Springer, 2005, pp.~61--75.

\bibitem{FVDF:JSS21}
{\sc F.~Ferreira, G.~Vale, J.~P. Diniz, and E.~Figueiredo}, {\em {Evaluating
  T-Wise Testing Strategies in a Community-Wide Dataset of Configurable
  Software Systems}}, J.\ Systems and Software (JSS), 179 (2021), p.~110990,
  \url{https://doi.org/10.1016/j.jss.2021.110990},
  \url{https://doi.org/10.1016/j.jss.2021.110990}.

\bibitem{bfabc752983f45ce8864a85a1623502b}
{\sc A.~Frisch and P.~Giannaros}, {\em {SAT} encodings of the at-most-k
  constraint: Some old, some new, some fast, some slow}, in Proceedings of the
  Ninth International Workshop of Constraint Modelling and Reformulation, 2010.

\bibitem{DBLP:journals/jcss/Hemachandra89}
{\sc L.~A. Hemachandra}, {\em The strong exponential hierarchy collapses}, J.
  Comput. Syst. Sci., 39 (1989), pp.~299--322,
  \url{https://doi.org/10.1016/0022-0000(89)90025-1},
  \url{https://doi.org/10.1016/0022-0000(89)90025-1}.

\bibitem{hervieu2016practical}
{\sc A.~Hervieu, D.~Marijan, A.~Gotlieb, and B.~Baudry}, {\em Practical
  minimization of pairwise-covering test configurations using constraint
  programming}, Information and Software Technology, 71 (2016), pp.~129--146.

\bibitem{HZS+:EMSE16}
{\sc C.~Hunsen, B.~Zhang, J.~Siegmund, C.~K\"{a}stner, O.~Le{\ss}enich,
  M.~Becker, and S.~Apel}, {\em {Preprocessor-Based Variability in Open-Source
  and Industrial Software Systems: An Empirical Study}}, Empirical Software
  Engineering (EMSE), 21 (2016), pp.~449--482,
  \url{https://doi.org/10.1007/s10664-015-9360-1}.

\bibitem{JHF:MODELS11}
{\sc M.~F. Johansen, {\O}.~Haugen, and F.~Fleurey}, {\em {Properties of
  Realistic Feature Models Make Combinatorial Testing of Product Lines
  Feasible}}, in Proc.\ Int'l Conf.\ on Model Driven Engineering Languages and
  Systems (MODELS), Springer, Berlin, Heidelberg, Germany, 2011, pp.~638--652,
  \url{https://doi.org/10.1007/978-3-642-24485-8_47}.

\bibitem{MHF:SPLC12}
{\sc M.~F. Johansen, {\O}.~Haugen, and F.~Fleurey}, {\em {An Algorithm for
  Generating T-Wise Covering Arrays from Large Feature Models}}, in Proc.\
  Int'l Systems and Software Product Line Conf.\ (SPLC), New York, NY, USA,
  2012, ACM, pp.~46--55, \url{https://doi.org/10.1145/2362536.2362547}.

\bibitem{DBLP:journals/siamcomp/Kadin88}
{\sc J.~Kadin}, {\em The polynomial time hierarchy collapses if the boolean
  hierarchy collapses}, {SIAM} J. Comput., 17 (1988), pp.~1263--1282,
  \url{https://doi.org/10.1137/0217080}, \url{https://doi.org/10.1137/0217080}.

\bibitem{DBLP:journals/siamcomp/Kadin91}
{\sc J.~Kadin}, {\em Erratum: The polynomial time hierarchy collapses if the
  boolean hierarchy collapses}, {SIAM} J. Comput., 20 (1991), p.~404,
  \url{https://doi.org/10.1137/0220025}, \url{https://doi.org/10.1137/0220025}.

\bibitem{kadioglu2017column}
{\sc S.~Kadioglu}, {\em Column generation for interaction coverage in
  combinatorial software testing}, arXiv preprint arXiv:1712.07081,  (2017).

\bibitem{KTS+:VaMoS20}
{\sc S.~Krieter, T.~Th\"{u}m, S.~Schulze, G.~Saake, and T.~Leich}, {\em {YASA:
  Yet Another Sampling Algorithm}}, in Proc.\ Int'l Working Conf.\ on
  Variability Modelling of Software-Intensive Systems (VaMoS), New York, NY,
  USA, Feb. 2020, ACM, \url{https://doi.org/10.1145/3377024.3377042}.

\bibitem{10.1145/3712193}
{\sc D.~Krupke, A.~Moradi, M.~Perk, P.~Keldenich, G.~Gehrke, S.~Krieter,
  T.~Th\"{u}m, and S.~P. Fekete}, {\em How low can we go? {M}inimizing
  interaction samples for configurable systems}, ACM Trans. Softw. Eng.
  Methodol.,  (2025), \url{https://doi.org/10.1145/3712193},
  \url{https://doi.org/10.1145/3712193}.
\newblock Just Accepted.

\bibitem{LK+:ECBS07}
{\sc Y.~Lei, R.~N. Kacker, D.~R. Kuhn, V.~Okun, and J.~Lawrence}, {\em {IPOG: A
  General Strategy for T-Way Software Testing}}, in Proc.\ Int'l Conf.\ on
  Engineering of Computer-Based Systems (ECBS), IEEE, 2007, pp.~549--556.

\bibitem{LI2020103197}
{\sc C.-M. Li, F.~Xiao, M.~Luo, F.~Manyà, Z.~Lü, and Y.~Li}, {\em Clause
  vivification by unit propagation in {CDCL} {SAT} solvers}, Artificial
  Intelligence, 279 (2020), p.~103197,
  \url{https://doi.org/https://doi.org/10.1016/j.artint.2019.103197},
  \url{https://www.sciencedirect.com/science/article/pii/S0004370219301961}.

\bibitem{maltais2010finding}
{\sc E.~Maltais and L.~Moura}, {\em Finding the best cafe is np-hard}, in Latin
  American Symposium on Theoretical Informatics, Springer, 2010, pp.~356--371.

\bibitem{nanba2012using}
{\sc T.~Nanba, T.~Tsuchiya, and T.~Kikuno}, {\em Using satisfiability solving
  for pairwise testing in the presence of constraints}, IEICE Transactions on
  Fundamentals of Electronics, Communications and Computer Sciences, 95 (2012),
  pp.~1501--1505.

\bibitem{OGB:SPLC19}
{\sc J.~Oh, P.~Gazzillo, and D.~Batory}, {\em {t-wise Coverage by Uniform
  Sampling}}, in Proc.\ Int'l Systems and Software Product Line Conf.\ (SPLC),
  ACM, Sept. 2019, pp.~84--87.

\bibitem{PTR+:SPLC19}
{\sc T.~Pett, T.~Th\"{u}m, T.~Runge, S.~Krieter, M.~Lochau, and I.~Schaefer},
  {\em {Product Sampling for Product Lines: The Scalability Challenge}}, in
  Proc.\ Int'l Systems and Software Product Line Conf.\ (SPLC), New York, NY,
  USA, Sept. 2019, ACM, pp.~78--83,
  \url{https://doi.org/10.1145/3336294.3336322}.

\bibitem{QPV+:SoSyM17}
{\sc R.~Queiroz, L.~Passos, M.~T. Valente, C.~Hunsen, S.~Apel, and
  K.~Czarnecki}, {\em The shape of feature code: An analysis of twenty
  {C}-preprocessor-based systems}, Software and System Modeling (SoSyM), 16
  (2017), pp.~77--96, \url{https://doi.org/10.1007/s10270-015-0483-z}.

\bibitem{DBLP:journals/siamdm/SarkarC17}
{\sc K.~Sarkar and C.~J. Colbourn}, {\em Upper bounds on the size of covering
  arrays}, {SIAM} J. Discret. Math., 31 (2017), pp.~1277--1293,
  \url{https://doi.org/10.1137/16M1067767},
  \url{https://doi.org/10.1137/16M1067767}.

\bibitem{sundermann2023evaluating}
{\sc C.~Sundermann, T.~He{\ss}, M.~Nieke, P.~M. Bittner, J.~M. Young,
  T.~Th{\"u}m, and I.~Schaefer}, {\em Evaluating state-of-the-art \#{SAT}
  solvers on industrial configuration spaces}, Empirical Software Engineering,
  28 (2023), p.~29.

\bibitem{VAT+:SPLC18}
{\sc M.~Varshosaz, M.~Al-Hajjaji, T.~Th\"um, T.~Runge, M.~R. Mousavi, and
  I.~Schaefer}, {\em {A Classification of Product Sampling for Software Product
  Lines}}, in Proc.\ Int'l Systems and Software Product Line Conf.\ (SPLC), New
  York, NY, USA, Sept. 2018, ACM, pp.~1--13,
  \url{https://doi.org/10.1145/3233027.3233035}.

\bibitem{RAK+:VaMoS13}
{\sc A.~von Rhein, S.~Apel, C.~K{\"a}stner, T.~Th{\"u}m, and I.~Schaefer}, {\em
  {The PLA Model: On the Combination of Product-Line Analyses}}, in Proc.\
  Int'l Workshop on Variability Modelling of Software-Intensive Systems
  (VaMoS), New York, NY, USA, Jan. 2013, ACM, pp.~14:1--14:8,
  \url{https://doi.org/10.1145/2430502.2430515}.

\bibitem{DBLP:journals/siamcomp/Wagner90}
{\sc K.~W. Wagner}, {\em Bounded query classes}, {SIAM} J. Comput., 19 (1990),
  pp.~833--846, \url{https://doi.org/10.1137/0219058},
  \url{https://doi.org/10.1137/0219058}.

\bibitem{yamada2016greedy}
{\sc A.~Yamada, A.~Biere, C.~Artho, T.~Kitamura, and E.-H. Choi}, {\em Greedy
  combinatorial test case generation using unsatisfiable cores}, in Proceedings
  of the 31st IEEE/ACM International Conference on Automated Software
  Engineering, 2016, pp.~614--624.

\bibitem{yamada2015optimization}
{\sc A.~Yamada, T.~Kitamura, C.~Artho, E.-H. Choi, Y.~Oiwa, and A.~Biere}, {\em
  Optimization of combinatorial testing by incremental {SAT} solving}, in 2015
  IEEE 8th International Conference on Software Testing, Verification and
  Validation (ICST), IEEE, 2015, pp.~1--10.

\bibitem{yang2021method}
{\sc J.~Yang, S.~Yin, J.~Wang, and S.~Li}, {\em A method for estimating minimum
  sizes of covering arrays avoiding forbidden edges by decomposing graphs}, in
  2021 IEEE International Conference on Information Communication and Software
  Engineering (ICICSE), IEEE, 2021, pp.~185--190.

\bibitem{yu2013acts}
{\sc L.~Yu, Y.~Lei, R.~N. Kacker, and D.~R. Kuhn}, {\em Acts: A combinatorial
  test generation tool}, in 2013 IEEE Sixth International Conference on
  Software Testing, Verification and Validation, IEEE, 2013, pp.~370--375.

\bibitem{zhao2023campactor}
{\sc Q.~Zhao, C.~Luo, S.~Cai, W.~Wu, J.~Lin, H.~Zhang, and C.~Hu}, {\em
  Campactor: A novel and effective local search algorithm for optimizing
  pairwise covering arrays}, in Proceedings of the 31st ACM Joint European
  Software Engineering Conference and Symposium on the Foundations of Software
  Engineering, 2023, pp.~81--93.

\end{thebibliography}
\end{document}